\documentclass[11pt,a4paper]{article}

\usepackage[utf8]{inputenc}
\usepackage[T1]{fontenc}
\usepackage[english]{babel}
\usepackage{amsmath,amssymb,amsthm,mathtools}
\usepackage{aliascnt}
\usepackage{bm}
\usepackage{enumitem}
\usepackage{booktabs}
\usepackage{tabularx}
\usepackage{geometry}
\usepackage{graphicx}
\usepackage{xcolor}
\usepackage{url}
\usepackage[numbers,sort&compress]{natbib}
\usepackage[colorlinks=true,allcolors=blue]{hyperref}
\usepackage[capitalize]{cleveref}
\DeclareGraphicsExtensions{.pdf,.png,.jpg}
\graphicspath{{figures/}}
\newcommand{\tr}{\operatorname{tr}}
\newcommand{\E}{\mathbb E}
\newcommand{\Pp}{\mathbb P}
\newcommand{\Var}{\operatorname{Var}}
\newcommand{\Cov}{\operatorname{Cov}}
\newcommand{\Hh}{\mathbb H}
\newcommand{\Z}{\mathbb Z}
\newcommand{\dist}{\operatorname{dist}}
\newcommand{\osc}{\operatorname{osc}}
\newcommand{\TV}{\operatorname{TV}}
\newcommand{\SAW}{\mathrm{SAW}}
\newcommand{\free}{\mathrm{free}}
\newcommand{\fixb}{\mathrm{fix}}
\renewcommand{\d}{\operatorname{d}\!}

\theoremstyle{plain}
\newtheorem{theorem}{Theorem}[section]
\newaliascnt{proposition}{theorem}
\newtheorem{proposition}[proposition]{Proposition}
\aliascntresetthe{proposition}
\newaliascnt{lemma}{theorem}
\newtheorem{lemma}[lemma]{Lemma}
\aliascntresetthe{lemma}
\newaliascnt{corollary}{theorem}
\newtheorem{corollary}[corollary]{Corollary}
\aliascntresetthe{corollary}
\theoremstyle{definition}
\newaliascnt{definition}{theorem}
\newtheorem{definition}[definition]{Definition}
\aliascntresetthe{definition}
\newaliascnt{problem}{theorem}
\newtheorem{problem}[problem]{Open Problem}
\aliascntresetthe{problem}
\theoremstyle{remark}
\newaliascnt{remark}{theorem}

\aliascntresetthe{remark}

\title{\Large\bf Boundary Free Energies, Quenched Mixing, and Gibbs-State Selection in Disordered Ising Models}
\author{Mauris Chueng\thanks{These authors contributed equally to this work.}\ \textsuperscript{,}\thanks{Corresponding author: maurischueng@gmail.com} \and Hexiang Wang\footnotemark[1] \and Keheng Zhu\footnotemark[1]}
\date{}

\newcommand{\Addresses}{{%
		\bigskip
		\footnotesize
		
		\textsc{Mauris Chueng}, \textsc{School of Statistics and Data Science, Jilin University of Finance and Economics, Changchun, 130117, China}\par\nopagebreak
		\texttt{maurischueng@gmail.com}
		\medskip
		
		\textsc{Hexiang Wang}, \textsc{School of Mathematical Sciences, Nankai University, Tianjin, 300071, China}\par\nopagebreak
		\texttt{Kui6539@outlook.com}
		\medskip
		
		\textsc{Keheng Zhu}, \textsc{Academy for Multidisciplinary Studies and School of Mathematical Sciences, Capital Normal University, Beijing, 100048, China}\par\nopagebreak
		\texttt{hexistartop@gmail.com}
		\medskip
}}

\begin{document}
\maketitle

\begin{abstract}
We study boundary free energies and half-space responses in nearest-neighbor disordered Ising models. First, we prove the existence of fixed-depth tangential pressure and identify its derivative almost everywhere with the limiting boundary response. Second, under quenched exponential boundary-response mixing, we prove Gibbs-state selection independence and exponential convergence of finite-depth pressures. We further show that uniform one-spin mixing yields DLR uniqueness and Gaussian normal localization, and verify the required mixing conditions whenever $(2d-1)\mathbb E\tanh(\beta\vert{}J\vert{})<1$. Finally, we establish an all-face surface limit in the bounded Dobrushin regime and provide counterexamples that delimit general surface and stiffness claims.
\end{abstract}

%

\section{Introduction}

Boundary terms in disordered spin systems depend on the observable and on the geometry used to normalize it. A free-to-fixed correction usually has a nonzero quenched mean and contains local boundary noise. A periodic-to-antiperiodic difference instead compares two Hamiltonians in the same disorder by reversing a seam; its mean may vanish by symmetry, and it is a domain-wall observable. These distinctions are essential in disordered Ising models \cite{EdwardsAnderson1975,Chayes1986,FisherHuse1988,NewmanStein2003} and in the stiffness literature \cite{McMillan1984,BrayMoore1987,HartmannYoung2001,Boettcher2005,Amoruso2006}.

The present paper integrates three complementary levels of analysis. The first is finite-volume: comparison, interpolation, and concentration require no thermodynamic state. The second is tangential: at each fixed normal depth, almost-additivity produces an expected pressure for all integrable product coupling laws. The third is normal: comparison with a half-space state requires a spatial estimate that controls the influence of the remote top boundary. Keeping these levels separate prevents finite-volume identities or concentration estimates from being mistaken for low-temperature localization.

Our main synthesis is the derivative representation from concavity at finite inverse temperature. The fixed-depth argument applies without change when the bottom couplings are multiplied by any $r\in[0,1]$. The resulting limiting pressure is concave and uniformly Lipschitz in $r$. Standard secant-slope bounds for pointwise convergent concave functions then identify its derivative, at almost every $r$, with the limit of the finite-volume boundary responses. Thus no separate strip-interpolation hypothesis is needed.

We next formulate boundary-response mixing (BRM), a quenched estimate for one boundary spin. Exact DLR disintegration is performed before taking limits. It implies that the weighted half-space response is independent of a measurable Gibbs-state selection, and it compares that response with the fixed-depth pressure at an explicit exponential rate. The proof does not assume that an arbitrary selector is translation-covariant: translated selectors are constructed and the relevant full-probability events are intersected over the countable group of tangential shifts. A stronger estimate on a countable class of local observables yields uniqueness of the entire DLR state. We prove that it is enough to have a uniform spin-mean estimate at every site and in every reduced finite volume: sequential maximal coupling lifts this one-site condition to local-observable mixing.

For i.i.d. internal couplings, disagreement exploration verifies both levels when
\begin{equation}\label{eq:intro-q}
q_{\beta,d}:=(2d-1)\E\tanh(\beta|J|)<1.
\end{equation}
The underlying comparison is pathwise, before disorder averaging, and is uniform over deterministic or disorder-dependent boundary kernels. This regime includes sufficiently dilute signed couplings even at low temperature, but it excludes the fully occupied Gaussian law as $\beta$ becomes large. Separately, bounded couplings satisfying the strict Dobrushin condition yield an all-face theorem for regular rectangles, including sample convergence under an explicit summability condition. This stronger conclusion is not inferred from one-face BRM.

Finally, one-dimensional examples show that arbitrary van Hove exhaustions can have different expected surface limits and that even regular intervals need not have a sample limit. Seam-flip differences obey a surface-order variance upper bound under explicit conditional sign-flip hypotheses, but their stiffness scale is not universal under symmetry and moment assumptions alone. An independent-chain model separates ordinary boundary central-limit fluctuations from a domain-wall statistic. These limitations isolate, rather than solve, the ordinary fully occupied low-temperature Gaussian problem.

No theorem below is obtained by combining incompatible normalizations or by interpreting one-face response mixing as an all-face comparison principle.

\section{Geometry and finite-volume calculus}

Fix $d\geq2$ and put $k=d-1$. For $L\geq1$ and a tangential rectangle
\begin{equation}\label{eq:slab}
T_{\bm n}=\prod_{i=2}^d\{1,\ldots,n_i\},
\qquad
\Lambda_{L,\bm n}=\{1,\ldots,L\}\times T_{\bm n},
\qquad
|\bm n|=\prod_{i=2}^dn_i,
\end{equation}
let $E_{L,\bm n}$ be the internal nearest-neighbor edges. The bottom crossing edge at $z\in T_{\bm n}$ joins $(1,z)$ to the fixed exterior spin $+1$ at $(0,z)$. Internal couplings $(J_e)$ and bottom couplings $(K_z)$ are independent i.i.d. fields, possibly with different laws. The other faces are free. For $r\in[0,1]$ set
\begin{align}
H_{L,\bm n}^{\free}(\sigma)
 &=-\sum_{e=\{x,y\}\in E_{L,\bm n}}J_e\sigma_x\sigma_y,
\label{eq:H-free}\\
H_{L,\bm n,r}(\sigma)
 &=H_{L,\bm n}^{\free}(\sigma)
   -r\sum_{z\in T_{\bm n}}K_z\sigma_{(1,z)}.
\label{eq:H-r}
\end{align}
At finite $\beta$, write $F_{L,\bm n}(r)=-\beta^{-1}\log Z_{L,\bm n}(r)$ and
\begin{equation}\label{eq:Delta-slab}
\Delta F_{L,\bm n}(r)=F_{L,\bm n}(r)-F_{L,\bm n}(0).
\end{equation}
At $\beta=\infty$, $F$ is replaced by the ground-state energy $E=\min_\sigma H$ and the correction is denoted by $\Delta E$.

For a general finite zero-field Ising Hamiltonian, let $B$ be a set of crossing edges, let $q_e(\sigma)=\sigma_x\psi_y$ for a prescribed exterior spin $\psi_y$, and let $K=(K_e)_{e\in B}$. The next result collects the finite-volume facts used below.

\begin{proposition}[Finite-volume comparison and interpolation]\label{prop:finite-calculus}
For every realization and every $r\in[0,1]$,
\begin{equation}\label{eq:pointwise-comparison}
-r\sum_{e\in B}|K_e|\leq F(r)-F(0)\leq0.
\end{equation}
The same statement holds for ground-state energies. At finite $\beta$, if $\E|K_e|<\infty$, then
\begin{equation}\label{eq:linear-response}
\frac{\d}{\d r}\E F(r)
=-\sum_{e\in B}\E\bigl[K_e\langle q_e\rangle_r\bigr],
\qquad
\left|\frac{\d}{\d r}\E F(r)\right|
\leq\sum_{e\in B}\E|K_e|.
\end{equation}
If $K$ is centered Gaussian with covariance matrix $\Gamma$ and
$H_t=H_0-\sqrt t\sum_{e\in B}K_eq_e$, then
\begin{equation}\label{eq:gaussian-correlated}
\frac{\d}{\d t}\E\log Z_t
=\frac{\beta^2}{2}\sum_{e,b\in B}\Gamma_{eb}
\E\Cov_t(q_e,q_b).
\end{equation}
For independent variance-$v$ Gaussian coordinates this gives
\begin{equation}\label{eq:gaussian-delta}
\E[F(1)-F(0)]
=-\frac{\beta v}{2}\sum_{e\in B}\int_0^1
\E\bigl[1-\langle q_e\rangle_t^2\bigr]\d t.
\end{equation}
For independent centered coordinates of common finite variance $v$, the same sign and the bound
$-\beta v|B|/2\leq\E[F(1)-F(0)]\leq0$ follow from the independent-copy identity.
For all averaged interpolation and independent-copy assertions, $K$ is independent of any randomness entering $H_0$; equivalently, the calculations are first made conditional on $H_0$ and then averaged.
\end{proposition}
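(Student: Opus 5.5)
We prove each assertion conditionally on the realization of $H_0$, as the last sentence of the statement allows, and average only at the end.

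\emph{Pointwise comparison.} Write $Z(r)=\sum_\sigma e^{-\beta H_0(\sigma)}\exp\bigl(\beta r\sum_{e\in B}K_eq_e(\sigma)\bigr)$. Since $|q_e|\le1$, the extra factor lies between $e^{-\beta r\sum|K_e|}$ and $e^{\beta r\sum|K_e|}$. This only gives the two-sided bound $|F(r)-F(0)|\le r\sum|K_e|$, so the upper bound $F(r)\le F(0)$ needs a separate argument.

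The route I would take is gauge symmetry. $H_0$ is a zero-field Ising Hamiltonian, so the global flip $\sigma\mapsto-\sigma$ preserves it. Under this flip the perturbation changes sign, hence
\[
Z(r)=\sum_\sigma e^{-\beta H_0(\sigma)}\cosh\Bigl(\beta r\sum_{e\in B}K_eq_e(\sigma)\Bigr)\ge Z(0).
\]
This yields $F(r)\le F(0)$.

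Ground states are handled the same way. Pairing $\sigma$ with $-\sigma$ gives $\min_\sigma H_r\le\min_\sigma\bigl(H_0-r|\sum_e K_eq_e|\bigr)\le E(0)$. The lower bound $E(r)\ge E(0)-r\sum_e|K_e|$ is immediate. If "zero-field" is meant per connected component, I would apply the flip componentwise. This is the one place where the argument is fragile, and I would check that the paper's Hamiltonians contain no other field terms.

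\emph{Linear response.} For fixed disorder, $r\mapsto F(r)$ is smooth with
\[
F'(r)=-\sum_{e\in B}K_e\langle q_e\rangle_r,
\qquad |F'(r)|\le\sum_{e\in B}|K_e|.
\]
Because this derivative bound is integrable, dominated convergence applied to the difference quotients (via the mean value theorem) lets us differentiate under $\E$. This gives \eqref{eq:linear-response} together with its bound.

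\emph{Gaussian interpolation.} We have
\[
\frac{\d}{\d t}\log Z_t=\frac{\beta}{2\sqrt t}\sum_{e\in B}K_e\langle q_e\rangle_t.
\]
Apply Gaussian integration by parts with covariance $\Gamma$, using $\partial_{K_b}\langle q_e\rangle_t=\beta\sqrt t\,\Cov_t(q_e,q_b)$:
\[
\E\bigl[K_e\langle q_e\rangle_t\bigr]=\sum_{b\in B}\Gamma_{eb}\,\beta\sqrt t\;\E\Cov_t(q_e,q_b).
\]
The factors of $\sqrt t$ cancel and give \eqref{eq:gaussian-correlated}. The singularity at $t=0$ is harmless: the right side is bounded, so we integrate from $\varepsilon$ and let $\varepsilon\to0$.

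For diagonal $\Gamma=vI$ we have $q_e^2=1$, hence $\Cov_t(q_e,q_e)=1-\langle q_e\rangle_t^2$. Integrating over $t\in[0,1]$ and multiplying by $-\beta^{-1}$ gives \eqref{eq:gaussian-delta}.

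\emph{Non-Gaussian independent coordinates.} The sign $\E[F(1)-F(0)]\le0$ already follows pointwise from the first part. For the lower bound I would use the independent-copy identity, which is the main obstacle since Gaussian integration by parts is unavailable. Conditional on $H_0$, Jensen gives
\[
\E_K\log\bigl\langle e^{\beta\sum_e K_eq_e}\bigr\rangle_0\le\log\bigl\langle\E_K e^{\beta\sum_e K_eq_e}\bigr\rangle_0 .
\]
That points the wrong way, so instead I would interpolate $F(s)$ for $s\in[0,1]$ and use $F''(s)=-\beta\Var_s\bigl(\sum_e K_eq_e\bigr)\le0$. Writing
\[
\E[F(1)-F(0)]=\E F'(0)+\int_0^1(1-s)\,\E F''(s)\,\d s,
\]
we have $\E F'(0)=-\sum_e\E[K_e]\langle q_e\rangle_0=0$ by centering and independence from $H_0$. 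The variance is bounded by replacing the Gibbs variance with the independent-copy second moment: $\Var_s(X)=\tfrac12\langle(X-X')^2\rangle_s$ for a replica $X'$. Bounding this by $\sum_e K_e^2$ alone is too crude, since cross terms appear. I would handle these by expanding $\E_K$ in the conditional Gibbs measure and using a symmetrization over the sign of each centered $K_e$.

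This step gives a bound of order $\beta v|B|$. The precise constant $1/2$ needs a sharper identity, namely $\E[F(1)-F(0)]=-\tfrac{\beta}{2}\sum_e\E\int_0^1\E[K_e^2(1-\langle q_e\rangle^2)]$ in replica form. I expect this to require some care, possibly using the second-order expansion coordinate by coordinate via a Lindeberg replacement against the Gaussian case.
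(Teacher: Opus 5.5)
Your pointwise comparison, ground-state bound, linear-response derivative and both Gaussian formulas are correct and follow the paper's argument. That argument uses flip symmetry with the $\cosh$ representation, pairs $\sigma$ with $-\sigma$ at zero temperature, differentiates under dominated convergence, and applies Gaussian integration by parts along the square-root path. In the non-Gaussian case the sign $\E[F(1)-F(0)]\leq0$ also follows pointwise, as you note.

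\textbf{The gap.} What is missing is the lower bound $-\beta v|B|/2\leq\E[F(1)-F(0)]$ for independent centered coordinates with only a common finite variance. Your second-order route needs control of $\E\Var_s\bigl(\sum_eK_eq_e\bigr)$. For $s>0$ the Gibbs state depends on $K$, so the cross terms $\E[K_eK_b\Cov_s(q_e,q_b)]$ do not vanish, and the crude bound is of order $v|B|^2$, not $v|B|$. Neither repair you suggest closes this. Symmetrizing over signs requires symmetric coupling laws, which are not assumed. A Lindeberg replacement needs third moments (or truncation) and produces error terms, so it cannot give an exact finite-volume inequality with constant $1/2$ from second moments alone. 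As written, not even the claimed order-$\beta v|B|$ bound is established.

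\textbf{The missing idea.} Stay at first order and treat one coordinate at a time, using the formula $\frac{\d}{\d r}\E F(r)=-\sum_e\E[K_e\langle q_e\rangle_r]$ you already proved. Fix $e$ and $r$, and freeze $H_0$ and all $K_b$ with $b\neq e$; this is legitimate by independence. Let $m_e(x)$ be $\langle q_e\rangle_r$ with $K_e=x$. Since $q_e^2=1$, we have $m_e'(x)=\beta r\bigl(1-m_e(x)^2\bigr)\in[0,\beta r]$. Take an independent copy $K_e'$. Centering kills $\E[K_e'm_e(K_e)]$ and $\E[K_em_e(K_e')]$, and exchangeability gives
\begin{align*}
\E\bigl[K_em_e(K_e)\bigr]
&=\tfrac12\E\bigl[(K_e-K_e')\bigl(m_e(K_e)-m_e(K_e')\bigr)\bigr]\\
&=\tfrac{\beta r}{2}\E\Bigl[(K_e-K_e')^2\int_0^1\bigl(1-m_e(K_e'+u(K_e-K_e'))^2\bigr)\d u\Bigr].
\end{align*}
Hence $0\leq\E[K_e\langle q_e\rangle_r]\leq\frac{\beta r}{2}\E(K_e-K_e')^2=\beta rv$. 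Integrating over $r\in[0,1]$ gives $-\beta v|B|/2\leq\E[F(1)-F(0)]\leq0$.

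Each coordinate enters only through its own term of the first derivative, so no cross terms arise and only the second moment is used. This exchangeable-pair formula is the independent-copy identity the statement refers to. It replaces the replica identity you were guessing at, which is not exact for non-Gaussian laws anyway.
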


\begin{proof}
The Gibbs measure for $H_0$ is invariant under the global spin flip. Hence
\begin{equation}\label{eq:cosh-ratio}
\frac{Z(r)}{Z(0)}
=\left\langle\exp\left(\beta r\sum_{e\in B}K_eq_e\right)\right\rangle_0
=\left\langle\cosh\left(\beta r\sum_{e\in B}K_eq_e\right)\right\rangle_0.
\end{equation}
The pointwise bounds follow from
\begin{align}
1
&\leq \frac{Z(r)}{Z(0)}
\leq \exp\left(\beta r\sum_{e\in B}|K_e|\right),
\notag\\
-r\sum_{e\in B}|K_e|
&\leq -\frac1\beta\log\frac{Z(r)}{Z(0)}
=F(r)-F(0)
\leq0.
\label{eq:comparison-derivation}
\end{align}
If $\sigma^0$ minimizes $H_0$, one of $\sigma^0$ and $-\sigma^0$ has nonnegative boundary contribution. Therefore the zero-temperature bounds are
\begin{align*}
E(r)
&\leq \min\{H_r(\sigma^0),H_r(-\sigma^0)\}
\leq E(0),
\\
E(r)
&=\min_\sigma\left(H_0(\sigma)-r\sum_{e\in B}K_eq_e(\sigma)\right)
\geq E(0)-r\sum_{e\in B}|K_e|.
\end{align*}

At finite temperature, differentiation and $|\langle q_e\rangle_r|\leq1$ give
\begin{align*}
\frac{\d}{\d r}F(r)
=\left\langle\frac{\partial H_r}{\partial r}\right\rangle_r
&=-\sum_{e\in B}K_e\langle q_e\rangle_r,
\\
\left|\frac{\d}{\d r}\E F(r)\right|
&\leq\sum_{e\in B}\E|K_e|.
\end{align*}
For the square-root Gaussian path, Gaussian integration by parts yields the full chain
\begin{align}
\frac{\d}{\d t}\E\log Z_t
&=\frac{\beta}{2\sqrt t}\sum_{e\in B}
\E[K_e\langle q_e\rangle_t]
\\
&=\frac{\beta}{2\sqrt t}\sum_{e,b\in B}\Gamma_{eb}
\E\left[\frac{\partial}{\partial K_b}\langle q_e\rangle_t\right]
\\
&=\frac{\beta^2}{2}\sum_{e,b\in B}\Gamma_{eb}
\E\Cov_t(q_e,q_b).
\label{eq:Gaussian-IBP-chain}
\end{align}
Writing $C_t=(\Cov_t(q_e,q_b))_{e,b\in B}$, both $\Gamma$ and $C_t$ are positive semidefinite, and hence
\begin{align*}
\sum_{e,b\in B}\Gamma_{eb}\Cov_t(q_e,q_b)
=\tr(\Gamma C_t)
=\tr(\Gamma^{1/2}C_t\Gamma^{1/2})
\geq0.
\end{align*}
Moreover,
\begin{align*}
|\log Z_t-\log Z_0|
&\leq\beta\sqrt t\sum_{e\in B}|K_e|,
\\
\left|\frac{\beta}{2\sqrt t}\sum_{e\in B}
\E[K_e\langle q_e\rangle_t]\right|
&\leq\frac{\beta}{2\sqrt t}\sum_{e\in B}\E|K_e|,
\end{align*}
so the derivative is integrable at $t=0$. For independent variance-$v$ coordinates, $\Gamma=vI$ and $q_e^2=1$, and therefore
\begin{align*}
\E[F(1)-F(0)]
&=-\frac1\beta\int_0^1\frac{\d}{\d t}\E\log Z_t\,\d t
\\
&=-\frac{\beta v}{2}\sum_{e\in B}\int_0^1
\E\left[1-\langle q_e\rangle_t^2\right]\d t.
\end{align*}

For the non-Gaussian assertion, replace $K_e$ by an independent copy $K_e'$ one coordinate at a time. Conditional on all other variables, set $m_e(x)=\langle q_e\rangle$ when $K_e=x$. Centering, exchangeability, and
$m_e'(x)=\beta r(1-m_e(x)^2)$ give
\begin{align}
\E[K_em_e(K_e)]
&=\frac12\E\left[(K_e-K_e')
\bigl(m_e(K_e)-m_e(K_e')\bigr)\right]
\notag\\
&=\frac12\E\left[(K_e-K_e')^2
\int_0^1m_e'\bigl(K_e'+u(K_e-K_e')\bigr)\d u\right]
\notag\\
&=\frac{\beta r}{2}\E\left[(K_e-K_e')^2
\int_0^1\bigl(1-m_{e,u}^2\bigr)\d u\right].
\label{eq:copy-identity}
\end{align}
Since $\E(K_e-K_e')^2=2v$,
\begin{align*}
0
&\leq\E[K_em_e(K_e)]
\leq\beta rv,
\\
-\frac{\beta v}{2}|B|
&=-\sum_{e\in B}\int_0^1\beta rv\,\d r
\leq\E[F(1)-F(0)]
\leq0.
\end{align*}
These are finite-volume identities.
\end{proof}

\subsection{Endpoint-correct interpolation and the Gaussian specialization}

The endpoint of an interpolation is part of the statement, not a cosmetic choice. For example, with $\mu=\E|K|$ the path
\begin{equation}\label{eq:incorrect-path}
h_e(t)=\sqrt t\,K_e+(1-\sqrt t)\mu
=\mu+\sqrt t(K_e-\mu)
\end{equation}
starts from a deterministic boundary field of strength $\mu$ and ends at the random boundary field. It does not compare the free and fixed Hamiltonians. If the $K_e$ are independent centered Gaussians of variance $v$, differentiation gives
\begin{equation}\label{eq:incorrect-derivative}
\frac{\d}{\d t}\E\log Z_t
=\frac{\beta^2v}{2}\sum_{e\in B}\E(1-m_e(t)^2)
-\frac{\beta\mu}{2\sqrt t}\sum_{e\in B}\E m_e(t),
\end{equation}
where $m_e(t)=\langle q_e\rangle_t$. The last term is integrable but may have neither a finite right limit nor a fixed sign because $m_e(0)$ need not vanish. The square-root path used in \cref{prop:finite-calculus} has the required endpoints and retains all covariance cross terms when the Gaussian field is correlated.

For independent Gaussian boundary coordinates, it is sometimes more convenient to use the linear amplitude $r$ from \cref{eq:H-r}. Finite-volume Gaussian integration by parts then yields
\begin{equation}\label{eq:gaussian-linear-response}
\frac1{|\bm n|}\E\sum_{z\in T_{\bm n}}K_z
\langle\sigma_{(1,z)}\rangle_{L,\bm n,r}
=\beta vr\,\frac1{|\bm n|}\sum_{z\in T_{\bm n}}
\E\left[1-\langle\sigma_{(1,z)}\rangle_{L,\bm n,r}^2\right].
\end{equation}
This is a finite-volume identity. Passing either side to an infinite-depth state is a separate localization problem.

\begin{proposition}[Gaussian concentration]\label{prop:gaussian-concentration}
Suppose that all internal and boundary coordinates in a finite region $\Lambda$ are independent centered Gaussian variables of variance $v$. If $E(\Lambda)$ and $B(\Lambda)$ are its internal and boundary edge sets, then
\begin{align}
\Var(\Delta F_\Lambda)&\leq v\bigl(4|E(\Lambda)|+|B(\Lambda)|\bigr),
\label{eq:generic-var}\\
\Pp\bigl(|\Delta F_\Lambda-\E\Delta F_\Lambda|\geq u\bigr)
&\leq2\exp\left\{-\frac{u^2}{2v(4|E(\Lambda)|+|B(\Lambda)|)}\right\}.
\label{eq:generic-tail}
\end{align}
The same tail bound holds for the ground-state correction.
\end{proposition}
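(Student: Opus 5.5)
The plan is to apply Gaussian concentration for Lipschitz functions (Tsirelson--Ibragimov--Sudakov) to $\Delta F_\Lambda$, viewed as a function of the standard Gaussian vector $g$ with $J_e=\sqrt v\,g_e$ and $K_e=\sqrt v\,g_e$. If the map $g\mapsto\Delta F_\Lambda$ has gradient satisfying $\|\nabla_g\Delta F_\Lambda\|_2^2\le v(4|E(\Lambda)|+|B(\Lambda)|)=:v\,S$ everywhere, then the Gaussian Poincaré inequality gives $\Var(\Delta F_\Lambda)\le vS$, which is \cref{eq:generic-var}. The Gaussian concentration inequality for Lipschitz functions, with Lipschitz constant $\sqrt{vS}$, gives $\Pp(|\Delta F_\Lambda-\E\Delta F_\Lambda|\ge u)\le 2e^{-u^2/(2vS)}$, which is \cref{eq:generic-tail}.

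The key step is the derivative computation at finite $\beta$. Write $\Delta F_\Lambda=F(1)-F(0)$ with $F=-\beta^{-1}\log Z$. For an internal edge $e=\{x,y\}$ we have $\partial F(r)/\partial J_e=-\langle\sigma_x\sigma_y\rangle_r$, so
\begin{equation*}
\partial_{J_e}\Delta F_\Lambda=\langle\sigma_x\sigma_y\rangle_0-\langle\sigma_x\sigma_y\rangle_1,
\end{equation*}
whose absolute value is at most $2$. Its square is therefore at most $4$, and multiplying by $v$ (chain rule in $g$) accounts for the term $4v|E(\Lambda)|$. For a boundary coupling, $F(0)$ does not depend on $K_e$, so
\begin{equation*}
\partial_{K_e}\Delta F_\Lambda=-\langle q_e\rangle_1,
\end{equation*}
whose square is at most $1$. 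This gives the term $v|B(\Lambda)|$. Summing the two contributions yields the uniform gradient bound $vS$.

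For the ground-state correction I would not differentiate. Instead I would check Lipschitz continuity directly, since minima of affine functions are Lipschitz. Perturbing $J$ to $J'$ changes $E(1)$ and $E(0)$ each by at most $\sum_e|J_e-J'_e|$ in a one-sided way, so more care is needed. Write $E(r)=\min_\sigma H_r(\sigma)$. Then $|E_J(1)-E_{J'}(1)-(E_J(0)-E_{J'}(0))|$ is bounded by $\sum_e|J_e-J_e'|\,|\sigma^1_x\sigma^1_y-\sigma^0_x\sigma^0_y|$ for suitable minimizers, which is at most $2\sum_e|J_e-J_e'|$. The boundary part contributes $\sum_e|K_e-K_e'|$. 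Cauchy--Schwarz then gives a Euclidean Lipschitz constant $\sqrt{4|E|+|B|}$ in the $J,K$ variables, that is $\sqrt{vS}$ in $g$. The same concentration inequality applies; the variance bound is not claimed there. Alternatively, one can obtain the ground-state case as the limit $\beta\to\infty$, since $\beta$-uniform Lipschitz bounds pass to the pointwise limit.

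The main obstacle is the minimizer-selection step in the zero-temperature Lipschitz bound. A difference of two minima is not directly a minimum, so I would bound the increment separately in each direction. For the upper direction I would use the minimizer of the $J$ problem at $r=1$ as a test configuration for $J'$, and the minimizer of the $J'$ problem at $r=0$ as a test configuration for $J$. I would then verify that the per-edge error is at most $2|J_e-J'_e|$. Everything else is a routine application of standard Gaussian inequalities.
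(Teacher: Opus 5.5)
Your proposal is correct and follows the paper's proof: the same coordinate derivatives $\langle\sigma_x\sigma_y\rangle_0-\langle\sigma_x\sigma_y\rangle_1$ and $-\langle q_e\rangle_1$, bounded by $2$ and $1$, go into the Gaussian Poincar\'e inequality and Lipschitz concentration, and the ground-state case uses the Lipschitz property of minima of affine functions. The minimizer-selection step you flag as the main obstacle is not needed, because the triangle inequality $|\Delta E_J-\Delta E_{J'}|\leq|E_J(1)-E_{J'}(1)|+|E_J(0)-E_{J'}(0)|\leq2\sum_e|J_e-J_e'|$ already gives the per-edge constant $2$.
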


\begin{proof}
Put $G=\Delta F_\Lambda$. For an internal edge $a=\{x,y\}$ and a boundary edge $e$,
\begin{align*}
\frac{\partial G}{\partial J_a}
=-\langle\sigma_x\sigma_y\rangle_{\fixb}
+\langle\sigma_x\sigma_y\rangle_{\free},\quad
\left|\frac{\partial G}{\partial J_a}\right|\leq2&,
\\
\frac{\partial G}{\partial K_e}
=-\langle q_e\rangle_{\fixb},\quad
\left|\frac{\partial G}{\partial K_e}\right|\leq1&.
\end{align*}
Consequently,
\begin{align*}
\|\nabla G\|_2^2
&\leq4|E(\Lambda)|+|B(\Lambda)|,
\\
\Var(G)
&\leq v\E\|\nabla G\|_2^2
\leq v\bigl(4|E(\Lambda)|+|B(\Lambda)|\bigr),
\\
\Pp(|G-\E G|\geq u)
&\leq2\exp\left\{-\frac{u^2}
{2v(4|E(\Lambda)|+|B(\Lambda)|)}\right\},
\end{align*}
by Gaussian Poincar\'e and concentration \cite{Ledoux2001,BoucheronLugosiMassart2013}. Minima of affine functions have the same Lipschitz constants. For cubes,
\begin{align*}
\Var\left(\frac{\Delta F_L}{L^{d-1}}\right)
\leq\frac{v(4|E(\Lambda_L)|+|B(\Lambda_L)|)}{L^{2(d-1)}}
=O(L^{2-d}).
\end{align*}
Thus the estimate transfers convergence of normalized means to sample convergence in $d\geq3$, but it neither proves convergence of those means nor gives surface-order self-averaging in $d=2$.
\end{proof}

\section{Normalization and fluctuation scales}

For the cube $\Lambda_n=\{1,\ldots,n\}^d$, let $b_n=2dn^{d-1}$ be the number of oriented crossing edges, let $f_n^b=F_n^b/n^d$, and let $\Delta F_n=F_n^{\fixb}-F_n^{\free}$. Then, for sample or expected free energies,
\begin{equation}\label{eq:normalization}
n(f_n^{\fixb}-f_n^{\free})
=\frac{\Delta F_n}{n^{d-1}}
=2d\,\frac{\Delta F_n}{b_n}.
\end{equation}
Thus the normalization by $n^{d-1}$ differs from normalization by crossing edges by the factor $2d$. The following objects should not be conflated.

\begin{center}
\small
\begin{tabularx}{.96\textwidth}{@{}>{\raggedright\arraybackslash}p{.27\textwidth}>{\raggedright\arraybackslash}p{.20\textwidth}X@{}}
\toprule
Object & Status & Interpretation \\
\midrule
$\E\Delta F_n$ & deterministic & Mean free-to-fixed correction; it may be of order $n^{d-1}$. \\
$\Delta F_n-\E\Delta F_n$ & random & Centered fluctuation containing local boundary noise and bulk-mediated response. \\
$D_n=F_n^{\rm tw}-F_n^{\rm untw}$ & random & Difference across a flipped seam in the same sample; symmetry may cancel the conditional mean. \\
$\Var D_n$, $\E|D_n|$, or a quantile & deterministic & Inequivalent notions of typical stiffness without further moment control. \\
\bottomrule
\end{tabularx}
\end{center}

If $(\E D_n^2)^{1/2}=n^{\theta_{\rm rms}+o(1)}$, then
\begin{equation}\label{eq:theta-density}
\left\|\frac{D_n}{b_n}\right\|_2
=n^{\theta_{\rm rms}-(d-1)+o(1)}.
\end{equation}
A mean boundary term of order $b_n$ and a smaller centered fluctuation may therefore coexist. Moreover, \cref{eq:pointwise-comparison} gives, for $p\geq1$,
\begin{equation}\label{eq:UI}
\E\left|\frac{\Delta F_n}{n^{d-1}}\right|^p
\leq(2d)^p\E|K|^p.
\end{equation}
For merely integrable $K$, the boundary averages are uniformly integrable by the de la Vall\'ee Poussin criterion. Hence convergence in probability may be passed to the mean when the stated moment hypotheses supply uniform integrability; an $O(n^{d-1})$ comparison alone does not imply a limit.

Under the Gaussian assumptions of \cref{prop:gaussian-concentration}, the pointwise comparison and the Poincar\'e bound give
\begin{equation}\label{eq:generic-fluctuation-scale}
\|\Delta F_n-\E\Delta F_n\|_2
=O\left(\min\{n^{d/2},n^{d-1}\}\right).
\end{equation}
The $n^{d/2}$ term is the generic bulk-coordinate estimate; the $n^{d-1}$ term follows from $|\Delta F_n|\leq\sum_{e\in B_n}|K_e|$. This is weaker than the seam estimate in \cref{thm:seam}, because no conditional antisymmetry removes the contribution of internal disorder.

\begin{figure}[t]
\centering
\includegraphics[width=.91\linewidth]{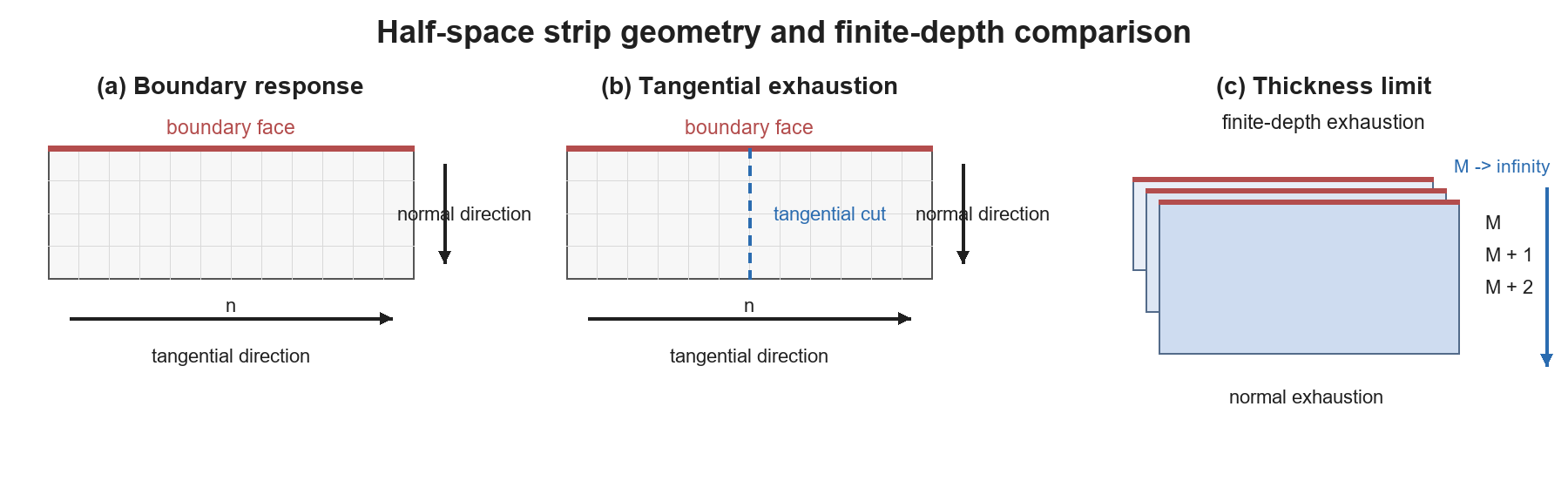}
\caption{Slab geometry. Panel (a) marks the boundary bonds of amplitude $r$; panel (b) shows a tangential cut used in almost-additivity; panel (c) shows the normal exhaustion, with the graphic depth $M$ playing the role of the textual depth $L$. All other finite-volume faces are free.}
\label{fig:strip-geometry}
\end{figure}

\section{Fixed-depth tangential pressure}

Assume in this section that
\begin{equation}\label{eq:first-moments}
\mu_{\rm int}:=\E|J_e|<\infty,
\qquad
\mu_{\rm b}:=\E|K_z|<\infty.
\end{equation}
No symmetry of either coupling law is needed. Independence and tangential stationarity are used in the factorization and tiling argument.

\begin{lemma}[Tangential almost-additivity]\label{lem:almost-add}
Suppose that two tangential rectangles with equal transverse side lengths are concatenated in direction $i\in\{2,\ldots,d\}$. For every fixed $L$, $r\in[0,1]$, and $\beta\in(0,\infty]$, the difference between the expected correction of the concatenated slab and the sum of the two expected corrections has absolute value at most $2L\mu_{\rm int}\prod_{\substack{2\leq j\leq d\\j\ne i}}n_j.$.
\end{lemma}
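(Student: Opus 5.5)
The plan is a pathwise cut-and-compare estimate, done before taking expectations. Let the two rectangles be $T_{\bm n'}$ and $T_{\bm n''}$, concatenated in direction $i$ into $T_{\bm n}$. Let $C$ be the set of internal edges of $\Lambda_{L,\bm n}$ that join the two pieces. Each such edge lies in direction $i$ and joins a site with $i$-th coordinate $n_i'$ to one with coordinate $n_i'+1$. So $|C| = L\prod_{j\ne i}n_j$, with one crossing edge for each of the $L$ normal layers and each transverse position.

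Removing the edges of $C$ makes the Hamiltonian $H_{L,\bm n,r}$ split exactly into the sum of the two Hamiltonians on $\Lambda_{L,\bm n'}$ and $\Lambda_{L,\bm n''}$. This holds for every $r$, because the bottom field term $-r\sum_z K_z\sigma_{(1,z)}$ is a sum over $z$ and splits accordingly. The free faces of the pieces are exactly what the cut produces, so there is no mismatch in boundary conditions. Hence the cut partition function factorizes, and the free energies (or ground-state energies at $\beta=\infty$) add.

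Next I compare the full Hamiltonian with the cut one, for each fixed $r$. Since $|J_e\sigma_x\sigma_y|\le|J_e|$, the two Hamiltonians differ pointwise in $\sigma$ by at most $S:=\sum_{e\in C}|J_e|$. Therefore
\[
\bigl|F_{L,\bm n}(r)-F_{L,\bm n'}(r)-F_{L,\bm n''}(r)\bigr|\le S,
\]
and the same bound holds for minima at $\beta=\infty$, since minima of functions differing uniformly by $S$ differ by at most $S$. This inequality holds at $r$ and also at $r=0$.

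Now I subtract the two instances to pass to the corrections $\Delta F(r)=F(r)-F(0)$. The triangle inequality gives the pathwise bound
\[
\bigl|\Delta F_{L,\bm n}(r)-\Delta F_{L,\bm n'}(r)-\Delta F_{L,\bm n''}(r)\bigr|\le 2S.
\]
Taking expectations, using the first-moment hypothesis \cref{eq:first-moments} and $\E S=L\mu_{\rm int}\prod_{j\ne i}n_j$, gives the stated bound $2L\mu_{\rm int}\prod_{j\ne i}n_j$. Integrability of every term follows from \cref{eq:pointwise-comparison} together with $\E|K_z|<\infty$.

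The expectations of the two piece corrections coincide with the expected corrections of the standalone rectangles. This uses the i.i.d. assumption on the couplings: the restricted coupling fields on each piece have the same law as those of a fresh slab, which is the tangential stationarity mentioned above.

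The main care point is bookkeeping rather than analysis. I must check that $C$ contains no bottom crossing edges, so that $K$ contributes nothing to the cut. I must also check that the factor $2$ arises from comparing at both $r$ and $0$, and not from a cruder bound. Alternatively, one could cancel the cut energy within $\Delta F$ to get a constant $1$, but the stated constant $2$ needs only the direct argument.
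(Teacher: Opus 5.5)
Your proposal is correct and is essentially the paper's proof: delete the $L\prod_{j\ne i}n_j$ interface bonds, use the exact factorization of the cut slab together with the pointwise bound $|F(r)-F^{\rm cut}(r)|\leq\sum_{e\in I}|J_e|$ at both $r$ and $0$ (or its ground-state analogue), and take expectations. Drop your closing aside about reaching the constant $1$ by cancelling the cut energy inside $\Delta F$, because it is false in general and the factor $2$ is sharp. For example, take $d=2$, $L=1$, and two single-site pieces joined by one bond with $J=-1$, with deterministic bottom fields $K\equiv M>1$, $r=1$, and $\beta=\infty$; the difference is then exactly $2=2L\mu_{\rm int}$.
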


\begin{proof}
Let $I$ be the set of $L\prod_{j\ne i}n_j$ interface bonds and let $F^{\rm cut}(r)$ be the free energy after deleting them. The cut Hamiltonian factorizes, and the one-bond comparison gives
\begin{align*}
\left|\Delta F(r)-\Delta F^{\rm cut}(r)\right|
&\leq |F(r)-F^{\rm cut}(r)|
+|F(0)-F^{\rm cut}(0)|
\\
&\leq2\sum_{e\in I}|J_e|,
\\
\left|\E\Delta F(r)-\E\Delta F_1(r)-\E\Delta F_2(r)\right|
&\leq2|I|\mu_{\rm int}
=2L\mu_{\rm int}\prod_{j\ne i}n_j.
\end{align*}
The same inequalities hold for minima at zero temperature.
\end{proof}

\begin{theorem}[Rectangular fixed-depth pressure]\label{thm:fixed-pressure}
Let $d\geq2$, fix $L\geq1$, and assume \cref{eq:first-moments}. For every $r\in[0,1]$ and every $\beta\in(0,\infty]$, the rectangular tangential limit
\begin{equation}\label{eq:p-L-r}
p_L(r)=\lim_{\min_{2\leq i\leq d}n_i\to\infty}
\frac{1}{|\bm n|}\E\Delta F_{L,\bm n}(r)
\end{equation}
exists and is deterministic, with $\Delta F$ interpreted as $\Delta E$ at $\beta=\infty$. In particular,
\begin{equation}\label{eq:s-L}
s_L(\beta):=p_L(1),
\qquad
-\mu_{\rm b}\leq s_L(\beta)\leq0.
\end{equation}
Only rectangular tangential boxes with all side lengths tending to infinity are asserted.
\end{theorem}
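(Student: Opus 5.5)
The plan is a multi-parameter subadditivity argument built on \cref{lem:almost-add} and \cref{prop:finite-calculus}. Fix $L$, $r$ and $\beta$, and write $a(\bm n)=\E\Delta F_{L,\bm n}(r)$.

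\textbf{Step 1: basic bounds.} By \cref{eq:pointwise-comparison} and \cref{eq:first-moments},
\begin{equation*}
-r\mu_{\rm b}|\bm n|\leq a(\bm n)\leq0 .
\end{equation*}
Hence $a(\bm n)/|\bm n|$ lies in $[-\mu_{\rm b},0]$ for every $\bm n$. Once the limit exists, this gives the bound in \cref{eq:s-L} at $r=1$. Integrability of $\Delta F$ also follows from this comparison, so every expectation below is finite.

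\textbf{Step 2: tiling.} Independence and the i.i.d.\ structure imply that $a(\bm n)$ depends only on $\bm n$, not on the position of the rectangle. Fix a reference rectangle $\bm m$ and take $\bm n$ with $n_i=q_im_i+s_i$, where $0\leq s_i<m_i$. Tile $T_{\bm n}$ by $\prod_i q_i$ translates of $T_{\bm m}$ together with leftover slabs. Apply \cref{lem:almost-add} one direction at a time. Splitting in direction $i$ cuts $L\prod_{j\ne i}n_j$ bonds, so the total cutting error is at most
\begin{equation*}
2L\mu_{\rm int}\sum_{i}\frac{|\bm n|}{n_i}.
\end{equation*}
The leftover pieces contribute at most $\mu_{\rm b}\sum_i s_i|\bm n|/n_i$ by Step 1.

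Dividing by $|\bm n|$ gives
\begin{equation*}
\Bigl|\frac{a(\bm n)}{|\bm n|}-\frac{\prod_i q_i\,|\bm m|}{|\bm n|}\cdot\frac{a(\bm m)}{|\bm m|}\Bigr|
\leq \sum_i\frac{2L\mu_{\rm int}+\mu_{\rm b}m_i}{n_i}.
\end{equation*}
We also have $\prod_i q_i|\bm m|/|\bm n|\to1$ as $\min_i n_i\to\infty$.

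\textbf{Step 3: Cauchy argument.} Let $\ell^+=\limsup a(\bm n)/|\bm n|$ and $\ell^-=\liminf a(\bm n)/|\bm n|$, both taken as $\min_i n_i\to\infty$. Step 2 gives $\ell^+\leq a(\bm m)/|\bm m|$ for every $\bm m$, and symmetrically $\ell^-\geq a(\bm m)/|\bm m|$. The tiling error is two-sided, which is what makes the lower inequality available. Choosing $\bm m$ with $a(\bm m)/|\bm m|$ close to $\ell^-$ then yields $\ell^+\leq\ell^-$, so the limit exists. It is deterministic because it is a limit of expectations.

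\textbf{Step 4: zero temperature.} At $\beta=\infty$ the same argument applies verbatim. \cref{lem:almost-add} and the ground-state version of \cref{eq:pointwise-comparison} are both stated for minima.

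\textbf{Main obstacle.} The only delicate point is bookkeeping the boundary terms in Step 2 uniformly in the rectangle shape. The error must be controlled by $\sum_i 1/n_i$, which requires all side lengths to tend to infinity; this is why only rectangular limits are asserted. Two facts need checking. First, the lemma has to be applied to pieces with equal transverse sides, which I ensure by cutting in direction $2$ first, then $3$, and so on. Second, since the lemma bounds each cut by the cross-section actually cut, the accumulated count of cut bonds is bounded by $L\sum_i|\bm n|/n_i$.
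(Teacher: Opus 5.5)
Your plan (tile by translates of a fixed block, cut with the one-bond comparison, bound the leftover by Step 1, compare $\limsup$ and $\liminf$) is the paper's plan. However, the bond count in Step 2 is wrong, and the mistake hides the one error term that does not vanish as $\bm n$ grows.

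To cut $T_{\bm n}$ into $\prod_i q_i$ translates of $T_{\bm m}$, you need about $q_i$ parallel cuts in direction $i$, not one. Each cut position severs $L\prod_{j\ne i}n_j$ bonds in total, summed over the pieces being split. The number of cut bonds is therefore about $L\sum_i q_i|\bm n|/n_i$, which is at most $L|\bm n|\sum_i 1/m_i$. For fixed $\bm m$ this is of order $|\bm n|$, not $o(|\bm n|)$, so the inequality in Step 2 is false as stated.

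A quick check shows this. With $\bm m=(1,\ldots,1)$ you have $q_i=n_i$ and $s_i=0$, so your bound would give $a(\bm n)/|\bm n|\to a(\bm 1)$, the correction of one laterally isolated column. More generally, your Step 3 yields $a(\bm m)/|\bm m|\le\ell^-\le\ell^+\le a(\bm m)/|\bm m|$ for every $\bm m$, so $a(\bm m)/|\bm m|$ would not depend on $\bm m$. This is false in general: for $d=2$, $L=1$ it would say that a ferromagnetic chain in a field has the same field free energy per site as isolated spins.

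The repair is short. The correct Step 2 estimate is
\[
\Bigl|\frac{a(\bm n)}{|\bm n|}-\theta_{\bm n}\frac{a(\bm m)}{|\bm m|}\Bigr|
\leq2L\mu_{\rm int}\sum_i\frac1{m_i}+\mu_{\rm b}\sum_i\frac{m_i}{n_i},
\qquad
\theta_{\bm n}=\frac{|\bm m|\prod_iq_i}{|\bm n|}\to1 .
\]
Letting $\min_in_i\to\infty$ gives $\ell^+-\ell^-\leq4L\mu_{\rm int}\sum_i1/m_i$ for every $\bm m$. Then let $\min_im_i\to\infty$; the device \emph{choose $\bm m$ near $\ell^-$} is unnecessary.

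This is exactly the paper's proof, with cubic blocks of side $a$, seam error $2kL\mu_{\rm int}/a$, and $a\to\infty$ at the end. The two limits do different jobs: $n_i\to\infty$ kills the leftover term, while only the block size going to infinity kills the seam term. That, not just $\min_i n_i\to\infty$, is the real content of your ``main obstacle.'' Steps 1 and 4, the stationarity remark, and the leftover bound are fine as written.
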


\begin{proof}
Fix a tangential block size $a\geq1$. Inside $T_{\bm n}$ take the tiled core
\begin{equation*}
Q_{\bm n}=\prod_{i=2}^d\{1,\ldots,a\lfloor n_i/a\rfloor\},
\end{equation*}
which is a union of $M_{\bm n}=\prod_{i=2}^d\lfloor n_i/a\rfloor$ translates of the $a^k$ block. Delete all internal bonds between these blocks and all internal bonds from the core to its remainder $R_{\bm n}$. The number of internal seams inside the core is at most $kL|\bm n|/a$, while the core--remainder interface contains
\begin{equation*}
O_{d,L}\left(\sum_{i=2}^d\prod_{\substack{2\leq j\leq d\\j\ne i}}n_j\right)=o(|\bm n|)
\end{equation*}
bonds as $\min_i n_i\to\infty$. The correction of the cut system is the sum of the corrections on the $M_{\bm n}$ blocks and on the remainder. The one-bond comparison and \cref{eq:pointwise-comparison} therefore give
\begin{equation}\label{eq:tiling-bound}
\left|\E\Delta F_{L,\bm n}(r)-M_{\bm n}\E\Delta F_{L,\bm a}(r)\right|
\leq2\mu_{\rm int}\left(\frac{kL}{a}|\bm n|+o(|\bm n|)\right)
+r\mu_{\rm b}|R_{\bm n}|.
\end{equation}
For fixed $a$, $M_{\bm n}a^k/|\bm n|\to1$ and $|R_{\bm n}|/|\bm n|\to0$. Dividing \cref{eq:tiling-bound} by $|\bm n|$ gives
\begin{align*}
\limsup_{\min_i n_i\to\infty}
\left|\frac{\E\Delta F_{L,\bm n}(r)}{|\bm n|}
-\frac{\E\Delta F_{L,\bm a}(r)}{a^k}\right|
&\leq\frac{2kL\mu_{\rm int}}{a},
\\
\limsup_{\min_i n_i\to\infty}
\frac{\E\Delta F_{L,\bm n}(r)}{|\bm n|}
-\liminf_{\min_i n_i\to\infty}
\frac{\E\Delta F_{L,\bm n}(r)}{|\bm n|}
&\leq\frac{4kL\mu_{\rm int}}{a}.
\end{align*}
Letting $a\to\infty$ proves existence, uniformly in $r$ and also for ground-state energies. Finally, \cref{eq:pointwise-comparison} implies
\begin{align*}
-r\mu_{\rm b}
&\leq\frac1{|\bm n|}\E\Delta F_{L,\bm n}(r)
\leq0,
\\
-\mu_{\rm b}
&\leq s_L(\beta)
\leq0.
\end{align*}
\end{proof}

\begin{proposition}[Tangential sample convergence]\label{prop:tangential-sample}
In addition to the assumptions of \cref{thm:fixed-pressure}, suppose the product law has the concentration property
\begin{equation}\label{eq:product-concentration}
\Pp(|f-\E f|\geq t)
\leq2\exp\left(-\frac{t^2}{\kappa\sum_a c_a^2}\right)
\end{equation}
for every coordinatewise Lipschitz function $f$ with coordinate constants $(c_a)$, for some $\kappa<\infty$ \cite{Talagrand1996,Ledoux2001}. On a common product environment and along tangential cubes $n_2=\cdots=n_d=n$,
\begin{equation}\label{eq:tangential-sample}
\frac{\Delta F_{L,\bm n}(r)}{n^{d-1}}\longrightarrow p_L(r)
\end{equation}
almost surely and in $L^2$, for each fixed $r$ and finite $\beta$. The same conclusion holds for $\Delta E$ at $\beta=\infty$.
\end{proposition}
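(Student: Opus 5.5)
The plan is to combine the concentration hypothesis \cref{eq:product-concentration} with the convergence of means from \cref{thm:fixed-pressure}, and then use Borel--Cantelli. Fix $L$, $r$ and finite $\beta$, and put $\Lambda_n=\Lambda_{L,(n,\ldots,n)}$ and $G_n=\Delta F_{L,\bm n}(r)$, viewed as a function of the finitely many coordinates $(J_e)_{e\in E_{L,\bm n}}$ and $(K_z)_{z\in T_{\bm n}}$.

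\emph{Step 1: coordinate Lipschitz constants.} By the one-bond comparison used in \cref{lem:almost-add}, changing a single $J_a$ to $J_a'$ changes $F(r)$ and $F(0)$ each by at most $|J_a-J_a'|$, so $G_n$ changes by at most $2|J_a-J_a'|$. Changing $K_z$ affects only $F(r)$, by at most $r|K_z-K_z'|\le|K_z-K_z'|$. The same bounds hold for minima at $\beta=\infty$, since a minimum of affine functions inherits their Lipschitz constants. Hence $G_n$ is coordinatewise Lipschitz with $c_a=2$ for internal edges and $c_z=1$ for boundary sites, and
\[
\sum_a c_a^2\le 4|E_{L,\bm n}|+n^{d-1}\le C_{d,L}\,n^{d-1},
\]
because $|E_{L,\bm n}|\le dL\,n^{d-1}$.

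\emph{Step 2: concentration.} Applying \cref{eq:product-concentration} to $G_n/n^{d-1}$ gives
\[
\Pp\bigl(|G_n-\E G_n|\ge t\,n^{d-1}\bigr)\le 2\exp\bigl(-t^2n^{d-1}/(\kappa C_{d,L})\bigr).
\]
Since $d\ge2$, this is summable in $n$ for every $t>0$. Borel--Cantelli along a sequence $t_m\downarrow0$ then gives $(G_n-\E G_n)/n^{d-1}\to0$ almost surely on the common product environment. Because \cref{thm:fixed-pressure} states $\E G_n/n^{d-1}\to p_L(r)$ along the cubes, almost sure convergence follows.

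\emph{Step 3: $L^2$.} Integrating the tail bound gives $\Var(G_n/n^{d-1})\le C\kappa/n^{d-1}\to0$. Combining this with the convergence of the means yields $L^2$ convergence directly, without needing uniform integrability.

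The main obstacle is interpretive rather than computational. The hypothesis \cref{eq:product-concentration} must apply to functions with unbounded coordinate increments, in the Lipschitz sense $|f(x)-f(x')|\le\sum_a c_a|x_a-x_a'|$. This is the Gaussian- or log-Sobolev-type setting, not the bounded-differences setting. I will read the hypothesis in that Lipschitz sense, which is exactly what Step 1 verifies. Measurability and integrability of $G_n$ come from \cref{eq:pointwise-comparison} and the first-moment assumption \cref{eq:first-moments}.
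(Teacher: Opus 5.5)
Your proposal is correct and follows the paper's proof essentially line by line: the same coordinate constants ($2$ for internal edges, $r\le1$ for bottom fields) giving $\sum_a c_a^2\le C_{d,L}n^{d-1}$, the same summable tail $2e^{-c\,n^{d-1}}$ combined with Borel--Cantelli and \cref{thm:fixed-pressure} for the almost-sure limit, and the same tail integration giving an $O(n^{-(d-1)})$ variance bound and hence $L^2$ convergence. Your explicit Lipschitz reading of \cref{eq:product-concentration}, rather than a bounded-differences reading, is the one the paper uses implicitly.
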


\begin{proof}
For fixed $L$, there is $C_{d,L}<\infty$ such that the squared coordinate constants satisfy
\begin{align*}
\sum_a c_a^2
\leq4|E_{L,\bm n}|+r^2|\bm n|
\leq C_{d,L}n^{d-1}.
\end{align*}
With $X_n=\Delta F_{L,\bm n}(r)/n^{d-1}$, \cref{eq:product-concentration} gives
\begin{align*}
\Pp(|X_n-\E X_n|\geq\varepsilon)
&\leq2\exp\left(-\frac{\varepsilon^2n^{2(d-1)}}
{\kappa C_{d,L}n^{d-1}}\right)
\\
&=2e^{-c_{d,L,\varepsilon}n^{d-1}},
&\sum_{n\geq1}\Pp(|X_n-\E X_n|\geq\varepsilon)&<\infty.
\end{align*}
Borel--Cantelli and \cref{thm:fixed-pressure} give the almost-sure limit. Integrating the same tail yields
\begin{align*}
\E|X_n-\E X_n|^2
=\int_0^\infty2t\,
\Pp(|X_n-\E X_n|\geq t)\d t
\leq\frac{2\kappa C_{d,L}}{n^{d-1}},
\end{align*}
which proves $L^2$ convergence. Ground-state corrections have the same coordinate constants.
\end{proof}

\section{Derivative representation from concavity}

The next theorem is finite-temperature only. It is the step that removes a separate strip-interpolation hypothesis.

\begin{theorem}[Derivative representation]\label{thm:automatic-derivative}
Under the assumptions of \cref{thm:fixed-pressure}, fix $\beta<\infty$. The function $p_L:[0,1]\to\mathbb R$ exists at every $r$, is concave, and is $\mu_{\rm b}$-Lipschitz. At every differentiability point of $p_L$,
\begin{equation}\label{eq:derivative-limit}
g_L(r):=-p_L'(r)
=\lim_{\min_i n_i\to\infty}\frac{1}{|\bm n|}
\E\sum_{z\in T_{\bm n}}K_z
\left\langle\sigma_{(1,z)}\right\rangle_{L,\bm n,r}.
\end{equation}
Consequently $g_L$ is defined almost everywhere, $|g_L(r)|\leq\mu_{\rm b}$, and
\begin{equation}\label{eq:integrated-response}
s_L(\beta)=-\int_0^1g_L(r)\d r.
\end{equation}
This theorem does not assert a zero-temperature response formula.
\end{theorem}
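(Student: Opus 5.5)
The plan is to obtain everything from concavity of the finite-volume expected corrections in $r$, together with the pointwise limit supplied by \cref{thm:fixed-pressure}.

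\textbf{Step 1: finite-volume concavity and Lipschitz bound.} For fixed $L,\bm n$ and finite $\beta$, set $\phi_{\bm n}(r)=|\bm n|^{-1}\E\Delta F_{L,\bm n}(r)$. Since $F(0)$ does not depend on $r$, it suffices to study $r\mapsto -\beta^{-1}\log Z_{L,\bm n}(r)$. The map $r\mapsto\log Z(r)$ is convex, because its second derivative is $\beta^2\Var_r(\sum_z K_z\sigma_{(1,z)})\geq0$. Hence $F(r)$ is concave for each realization. Concavity is preserved under expectation, which is finite by \cref{eq:pointwise-comparison} and the first-moment assumption. By \cref{eq:linear-response}, differentiation under the expectation is justified by dominated convergence with the integrable bound $\sum_z|K_z|$. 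This gives
\[
\phi_{\bm n}'(r)=-\frac1{|\bm n|}\E\sum_{z}K_z\langle\sigma_{(1,z)}\rangle_{L,\bm n,r}, \qquad |\phi_{\bm n}'|\leq\mu_{\rm b}.
\]

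\textbf{Step 2: passage to the limit.} \Cref{thm:fixed-pressure} gives $\phi_{\bm n}(r)\to p_L(r)$ for every $r\in[0,1]$. Pointwise limits of concave, $\mu_{\rm b}$-Lipschitz functions are again concave and $\mu_{\rm b}$-Lipschitz. Therefore $p_L$ is differentiable except on a countable set, and in particular almost everywhere.

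\textbf{Step 3: convergence of derivatives, the main step.} Fix a differentiability point $r\in(0,1)$ and a small $h>0$. Concavity of $\phi_{\bm n}$ gives the secant sandwich
\[
\frac{\phi_{\bm n}(r+h)-\phi_{\bm n}(r)}{h}\leq\phi_{\bm n}'(r)\leq\frac{\phi_{\bm n}(r)-\phi_{\bm n}(r-h)}{h}.
\]
Letting $\min_i n_i\to\infty$ along the rectangular limit, both secants converge to the corresponding secants of $p_L$. Thus $\limsup_{\bm n}\phi_{\bm n}'(r)$ and $\liminf_{\bm n}\phi_{\bm n}'(r)$ both lie between the right and left secant slopes of $p_L$. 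Letting $h\downarrow0$, both slopes tend to $p_L'(r)$. This proves \cref{eq:derivative-limit} with $g_L=-p_L'$. The point needing care is that the limit is taken along the directed family of rectangles rather than a sequence. The sandwich is pointwise in $\bm n$, however, so only convergence of $\phi_{\bm n}$ at the three points $r-h,r,r+h$ is used, and this holds uniformly in the sense of \cref{thm:fixed-pressure}. The endpoints $r=0,1$ are excluded or handled by one-sided versions, which does not affect the almost-everywhere statement. The bound $|g_L|\leq\mu_{\rm b}$ follows from the Lipschitz property.

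\textbf{Step 4: integrated response.} The function $p_L$ is Lipschitz, hence absolutely continuous, and $p_L(0)=0$ because $\Delta F(0)=0$. Therefore
\[
s_L(\beta)=p_L(1)=\int_0^1p_L'(r)\,\d r=-\int_0^1 g_L(r)\,\d r,
\]
which is \cref{eq:integrated-response}. No zero-temperature claim is made, since Step 1 uses the finite-$\beta$ Gibbs derivative.
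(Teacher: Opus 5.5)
Your proposal is correct and follows essentially the same route as the paper: finite-volume concavity of $r\mapsto|\bm n|^{-1}\E\Delta F_{L,\bm n}(r)$ with the uniform $\mu_{\rm b}$-Lipschitz bound, passage of both properties to the pointwise limit $p_L$, and the secant squeeze at a differentiability point with the tangential limit taken before $h\downarrow0$. The conclusion then comes from absolute continuity of $p_L$ together with $p_L(0)=0$; your remark that the exceptional set is countable slightly sharpens the paper's ``almost everywhere'' but changes nothing in the argument.
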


\begin{proof}
Set
\begin{equation*}
p_{L,\bm n}(r)=\frac{1}{|\bm n|}\E\bigl[F_{L,\bm n}(r)-F_{L,\bm n}(0)\bigr].
\end{equation*}
The proof of \cref{thm:fixed-pressure} is valid for each $r$, so these functions converge pointwise to $p_L$. Since $F_{L,\bm n}(r)$ is the negative of a convex log-partition function, $p_{L,\bm n}$ is concave. Moreover,
\begin{equation}\label{eq:uniform-Lipschitz}
|p_{L,\bm n}(r)-p_{L,\bm n}(r')|
\leq|r-r'|\frac{1}{|\bm n|}\sum_{z\in T_{\bm n}}\E|K_z|
=\mu_{\rm b}|r-r'|.
\end{equation}
Pointwise convergence preserves concavity, and \cref{eq:uniform-Lipschitz} passes to the limit. Thus $p_L$ is absolutely continuous and differentiable almost everywhere.

At finite volume, dominated differentiation gives
\begin{equation}\label{eq:finite-derivative}
p_{L,\bm n}'(r)=-\frac{1}{|\bm n|}\E\sum_{z\in T_{\bm n}}K_z
\left\langle\sigma_{(1,z)}\right\rangle_{L,\bm n,r}.
\end{equation}
Fix a differentiability point $r\in(0,1)$ of $p_L$. For $h>0$ with $r\pm h\in[0,1]$, concavity gives
\begin{equation}\label{eq:secant-squeeze}
\frac{p_{L,\bm n}(r+h)-p_{L,\bm n}(r)}{h}
\leq p_{L,\bm n}'(r)
\leq\frac{p_{L,\bm n}(r)-p_{L,\bm n}(r-h)}{h}.
\end{equation}
Passing first to the tangential limit in \cref{eq:secant-squeeze} gives
\begin{align*}
\frac{p_L(r+h)-p_L(r)}{h}
&\leq\liminf_{\min_i n_i\to\infty}p_{L,\bm n}'(r)
\\
&\leq\limsup_{\min_i n_i\to\infty}p_{L,\bm n}'(r)
\leq\frac{p_L(r)-p_L(r-h)}{h}.
\end{align*}
Letting $h\downarrow0$ at a differentiability point gives
\begin{align*}
\lim_{\min_i n_i\to\infty}p_{L,\bm n}'(r)
&=p_L'(r),
\\
p_L(1)-p_L(0)
&=\int_0^1p_L'(r)\d r
=-\int_0^1g_L(r)\d r.
\end{align*}
Since $p_L(0)=0$, this is \cref{eq:integrated-response}.
\end{proof}

We set $g_L(r)=0$ on the null set where the derivative is undefined and use one-sided derivatives only when an endpoint value is mentioned. This convention does not change \cref{eq:integrated-response}.

\begin{corollary}[Gaussian squared-response formula]\label{cor:Gaussian-response}
Suppose in addition that the bottom couplings are independent centered Gaussians of variance $v$. For almost every $r\in(0,1]$, the rectangular limit
\begin{equation}\label{eq:u-L}
u_L(r)=\lim_{\min_i n_i\to\infty}\frac1{|\bm n|}
\sum_{z\in T_{\bm n}}\E\left[1-
\langle\sigma_{(1,z)}\rangle_{L,\bm n,r}^2\right]
\end{equation}
exists, satisfies $0\leq u_L(r)\leq1$, and
\begin{equation}\label{eq:Gaussian-strip-pressure}
s_L(\beta)=-\beta v\int_0^1 r u_L(r)\d r.
\end{equation}
No assertion is made about the limit $L\to\infty$ without a normal-direction mixing estimate.
\end{corollary}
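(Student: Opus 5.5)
The plan is to combine \cref{thm:automatic-derivative} with the finite-volume Gaussian integration-by-parts identity \cref{eq:gaussian-linear-response}. Since the bottom couplings are centered Gaussian of variance $v$, they are in particular integrable, with $\mu_{\rm b}=\sqrt{2v/\pi}$. All hypotheses of \cref{thm:fixed-pressure} therefore hold, and \cref{thm:automatic-derivative} applies. Let $N$ be the Lebesgue-null set of non-differentiability points of $p_L$. For every $r\in(0,1)\setminus N$, \cref{eq:derivative-limit} states that the normalized boundary response
\begin{equation*}
R_{\bm n}(r):=\frac{1}{|\bm n|}\E\sum_{z\in T_{\bm n}}K_z\langle\sigma_{(1,z)}\rangle_{L,\bm n,r}
\end{equation*}
converges to $g_L(r)$ along rectangles with $\min_i n_i\to\infty$.

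Next I invoke \cref{eq:gaussian-linear-response}, which holds for each fixed finite rectangle. I apply it conditionally on the internal couplings and then average; this is legitimate because $K$ is independent of $J$. The identity gives $R_{\bm n}(r)=\beta v r\,U_{\bm n}(r)$, where $U_{\bm n}(r)$ is the prelimit average appearing in \cref{eq:u-L}. For $r>0$ I divide by $\beta v r\neq0$, which is allowed since $\beta<\infty$ and $v>0$. This shows that $U_{\bm n}(r)$ converges to $u_L(r):=g_L(r)/(\beta v r)$ at every $r\in(0,1]\setminus N$. Hence the limit \cref{eq:u-L} exists for almost every $r$. The endpoint $r=1$ is irrelevant, since it forms a null set. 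The bounds $0\le u_L\le1$ are inherited from $0\le1-\langle\sigma\rangle^2\le1$ at each finite volume.

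Finally, \cref{eq:integrated-response} gives $s_L(\beta)=-\int_0^1 g_L(r)\,\d r$. Substituting $g_L(r)=\beta v r\,u_L(r)$ almost everywhere yields \cref{eq:Gaussian-strip-pressure}. The integrand is bounded by $\beta v$, so it is integrable. The only genuinely delicate point is the validity of the finite-volume identity \cref{eq:gaussian-linear-response}. To justify it, I would differentiate $m_z(K_z)=\langle\sigma_{(1,z)}\rangle_{L,\bm n,r}$ in $K_z$, which gives $\partial_{K_z}m_z=\beta r(1-m_z^2)$. Stein's lemma $\E[K_zf(K_z)]=v\E f'(K_z)$ then applies, because $m_z$ is bounded with a bounded derivative. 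Everything else is bookkeeping on the null set $N$, which is what restricts the conclusion to almost every $r$.
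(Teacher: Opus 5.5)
Your proposal is correct and follows the paper's own proof. At each differentiability point of $p_L$ with $r>0$, it combines the response limit \cref{eq:derivative-limit} with the finite-volume Gaussian integration-by-parts identity \cref{eq:gaussian-linear-response}, divides by $\beta v r$, inherits $0\leq u_L\leq1$ from finite volume, and integrates using \cref{eq:integrated-response}; your Stein's-lemma justification of the finite-volume identity, via $\partial_{K_z}\langle\sigma_{(1,z)}\rangle=\beta r(1-\langle\sigma_{(1,z)}\rangle^2)$, is a correct detail that the paper leaves implicit.
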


\begin{proof}
At every differentiability point of $p_L$ and every $r>0$, \cref{eq:gaussian-linear-response,eq:derivative-limit} give
\begin{align*}
g_L(r)
&=\lim_{\min_i n_i\to\infty}\frac1{|\bm n|}
\E\sum_{z\in T_{\bm n}}K_z
\langle\sigma_{(1,z)}\rangle_{L,\bm n,r}
\\
&=\beta vr\lim_{\min_i n_i\to\infty}\frac1{|\bm n|}
\sum_{z\in T_{\bm n}}
\E\left[1-\langle\sigma_{(1,z)}\rangle_{L,\bm n,r}^2\right]
\\
&=\beta vr\,u_L(r).
\end{align*}
Thus $0\leq u_L(r)\leq1$ and
\begin{align*}
s_L(\beta)
=-\int_0^1g_L(r)\d r
=-\beta v\int_0^1r u_L(r)\d r.
\end{align*}
The null set where $u_L$ is undefined may be filled arbitrarily in $[0,1]$.
\end{proof}

\subsection{A Gaussian normal-localization criterion}

Let $\mathcal S_L=\{1,\ldots,L\}\times\Z^{d-1}$ be the infinite strip with the bottom Gaussian fields and no interactions across its top. For a jointly measurable, tangentially covariant family of strip states $\mu_{L,r}^\omega$ and half-space states $\mu_{\Hh,r}^\omega$, write
\begin{equation*}
m_L(r,\omega)=\mu_{L,r}^\omega(\sigma_{x_0}),
\qquad
m_{\Hh}(r,\omega)=\mu_{\Hh,r}^\omega(\sigma_{x_0}).
\end{equation*}
Call the family response-compatible if, for almost every $r\in(0,1]$,
\begin{equation}\label{eq:response-compatibility}
u_L(r)=\E[1-m_L(r,\omega)^2],
\end{equation}
where $u_L$ is the tangential limit from \cref{cor:Gaussian-response}. This condition records the required passage from finite cylinders to the infinite strip; it is not included in the finite-volume integration-by-parts identity.

We next make precise the boundary-kernel convention used for free faces.  Let
$G=(\mathcal V,\mathcal E)$ be a locally finite graph, let $V\subset\mathcal V$
be finite, and write
\begin{equation*}
\partial_E^{\rm or}V
=\{a=(x,y):x\in V,\ y\notin V,\ \{x,y\}\in\mathcal E\}.
\end{equation*}
An oriented-edge assignment is a vector
$\xi\in\{-1,1\}^{\partial_E^{\rm or}V}$; assignments on two edges with the
same exterior endpoint are not required to coincide.  An ordinary exterior
configuration $\eta$ induces the consistent assignment
$\xi_{(x,y)}=\eta_y$.  A probability kernel on the finite assignment space is
called an admissible boundary kernel.  Two such kernels are $D$-compatible if
they have a coupling supported on pairs of assignments that agree outside
$D\subset\partial_E^{\rm or}V$.

\begin{lemma}[Auxiliary edge spins and boundary kernels]
\label{lem:auxiliary-boundary-kernels}
Let $Q\subset\partial_E^{\rm or}V$ be a set of interactions to be deleted and
write
\begin{equation*}
H_V^\xi(\sigma)=H_V^0(\sigma)
-\sum_{a=(x,y)\in Q}J_a\sigma_x\xi_a,
\qquad
Z_V^\xi=\sum_\sigma e^{-\beta H_V^\xi(\sigma)},
\end{equation*}
where $H_V^0$ contains all terms not indexed by $Q$.  Give the auxiliary
spins $(\xi_a)_{a\in Q}$ the independent uniform a priori law and then apply
the Gibbs weight.  The $\sigma$-marginal is exactly the Gibbs measure with the
$Q$-interactions deleted, and it has the disintegration
\begin{equation}\label{eq:free-edge-mixture}
\gamma_V^{\rm free}(f)
=\sum_{\xi\in\{-1,1\}^Q}\kappa_{V,Q}^{\rm free}(\xi)\,
\gamma_V^\xi(f),
\qquad
\kappa_{V,Q}^{\rm free}(\xi)
=\frac{2^{-|Q|}Z_V^\xi}
{\sum_\zeta 2^{-|Q|}Z_V^\zeta}.
\end{equation}
The kernel $\kappa_{V,Q}^{\rm free}$ is jointly measurable in every parameter
on which the finite-volume Hamiltonian is measurable.  Moreover, if a
pointwise comparison estimate holds for every pair of edge assignments whose
disagreement set is contained in $D$, then the same estimate holds after
integration against any $D$-compatible pair of admissible boundary kernels.
\end{lemma}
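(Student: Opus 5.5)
The plan is to compute the joint law of $(\sigma,\xi)$ directly and then read off both marginals and the conditional law. The joint weight on $\{-1,1\}^V\times\{-1,1\}^Q$ is proportional to $2^{-|Q|}e^{-\beta H_V^\xi(\sigma)}$.

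\emph{The $\sigma$-marginal.} Summing over $\xi$ factorizes edge by edge, since each auxiliary spin $\xi_a$ appears only in the single term $J_a\sigma_x\xi_a$. This gives
\begin{equation*}
\sum_{\xi}2^{-|Q|}e^{-\beta H_V^\xi(\sigma)}
=e^{-\beta H_V^0(\sigma)}\prod_{a\in Q}\cosh(\beta J_a).
\end{equation*}
The product is independent of $\sigma$, because $\cosh$ is even and $\sigma_x^2=1$. Normalizing shows that the $\sigma$-marginal is exactly $e^{-\beta H_V^0}/Z_V^0$, the Gibbs measure with the $Q$-interactions deleted, which is $\gamma_V^{\rm free}$.

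\emph{The disintegration.} The $\xi$-marginal is $2^{-|Q|}Z_V^\xi/\sum_\zeta2^{-|Q|}Z_V^\zeta=\kappa_{V,Q}^{\rm free}(\xi)$. Conditional on $\xi$, the law of $\sigma$ is $e^{-\beta H_V^\xi}/Z_V^\xi=\gamma_V^\xi$. Writing $\gamma_V^{\rm free}(f)$ as the $\xi$-average of the conditional expectations then yields \cref{eq:free-edge-mixture}. As a check, the denominator equals $Z_V^0\prod_a\cosh(\beta J_a)$.

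\emph{Measurability.} Every $Z_V^\xi$ is a finite sum of exponentials of the Hamiltonian coefficients. Hence $Z_V^\xi$ is a continuous (indeed real-analytic) function of the couplings $J$ and of any other parameters entering $H$, and it is strictly positive. Therefore $\kappa_{V,Q}^{\rm free}(\xi)$, which is a ratio of finite positive sums, is jointly measurable whenever these parameters enter measurably. The same holds for $\gamma_V^\xi(f)$.

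\emph{Comparison under compatible kernels.} Suppose that for every pair $(\xi,\xi')$ with $\{a:\xi_a\neq\xi'_a\}\subset D$ we have a pointwise estimate $\Phi(\gamma_V^\xi,\gamma_V^{\xi'})\le\varepsilon$. Let $\kappa,\kappa'$ be $D$-compatible, with coupling $\pi$ supported on such pairs. Averaging the estimate over $\pi$ gives the integrated statement. For the usual case $|\gamma_V^\xi(f)-\gamma_V^{\xi'}(f)|\le\varepsilon$, write $\sum_\xi\kappa(\xi)\gamma_V^\xi(f)-\sum_{\xi'}\kappa'(\xi')\gamma_V^{\xi'}(f)$ as $\sum_{\xi,\xi'}\pi(\xi,\xi')\bigl(\gamma_V^\xi(f)-\gamma_V^{\xi'}(f)\bigr)$ and apply the triangle inequality. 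The same argument works for any estimate that is convex in the pair, such as a total-variation bound or a bound on a coupling cost. I would state the integrated conclusion as exactly this averaged form, so that no extra convexity assumption is needed.

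The only genuinely delicate point is this last step: stating precisely what ``the same estimate'' means for the mixtures. I would formalize it as the $\pi$-averaged inequality above. Everything else in the lemma is a finite computation.
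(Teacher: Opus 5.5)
Your proposal is correct and follows essentially the same route as the paper. Summing out each auxiliary spin gives the $\sigma$-independent factor $\prod_{a\in Q}\cosh(\beta J_a)$, and disintegrating the joint law over $\xi$ yields the partition-function-weighted kernel $\kappa_{V,Q}^{\rm free}$. Measurability comes from the finite sums defining $Z_V^\xi$, and the compatible-kernel clause is the triangle inequality under a $D$-compatible coupling $\pi$. The paper states that last step with possibly different specifications $\gamma_V^\xi$ and $\widetilde\gamma_V^{\xi'}$ on the two sides; your $\pi$-averaged formulation covers this without change.
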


\begin{proof}
For fixed $\sigma$, summing one auxiliary spin gives
\begin{equation*}
\frac12\sum_{s=\pm1}e^{\beta J_a\sigma_xs}
=\cosh(\beta J_a),
\end{equation*}
which is independent of $\sigma_x$.  Consequently,
\begin{equation*}
2^{-|Q|}\sum_\xi e^{-\beta H_V^\xi(\sigma)}
=e^{-\beta H_V^0(\sigma)}\prod_{a\in Q}\cosh(\beta J_a).
\end{equation*}
The product cancels on normalization, proving the assertion about the
$\sigma$-marginal.  Disintegration of the same joint law with respect to
$\xi$ gives \cref{eq:free-edge-mixture}; in particular, the mixing law is the
partition-function-weighted kernel $\kappa_{V,Q}^{\rm free}$, not the uniform
a priori law.  Joint measurability follows from the finite sums defining
$Z_V^\xi$.  Finally, if $\pi$ is a $D$-compatible coupling of $\kappa$ and
$\kappa'$, then
\begin{align*}
\left|\int\gamma_V^\xi(f)\,\kappa(\d\xi)
-\int\widetilde\gamma_V^{\xi'}(f)\,\kappa'(\d\xi')\right|
&\leq\int\left|\gamma_V^\xi(f)
-\widetilde\gamma_V^{\xi'}(f)\right|\pi(\d\xi,\d\xi'),
\end{align*}
and the asserted integrated bound follows from the pointwise one.
\end{proof}

\begin{proposition}[Gaussian localization]\label{prop:Gaussian-localization}
Suppose that a response-compatible family satisfies, for a deterministic $\rho(L)\downarrow0$,
\begin{equation}\label{eq:Gaussian-localization}
\sup_{r\in[0,1]}\E|m_L(r,\omega)-m_{\Hh}(r,\omega)|
\leq\rho(L).
\end{equation}
Then
\begin{equation}\label{eq:Gaussian-normal-limit}
s_L(\beta)\longrightarrow
\tau_{\rm G}^{\Hh}:=-\beta v\int_0^1r\,
\E[1-m_{\Hh}(r,\omega)^2]\d r,
\end{equation}
and quantitatively
\begin{equation}\label{eq:Gaussian-normal-rate}
|s_L(\beta)-\tau_{\rm G}^{\Hh}|
\leq\beta v\rho(L).
\end{equation}
The conclusion is independent of a selected half-space state whenever the squared response on the right of \cref{eq:Gaussian-normal-limit} is selection-independent.
\end{proposition}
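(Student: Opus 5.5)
The plan is to compare the integrand of \cref{eq:Gaussian-strip-pressure} with the half-space integrand pointwise in $r$ and then integrate. By \cref{cor:Gaussian-response}, $s_L(\beta)=-\beta v\int_0^1 r\,u_L(r)\,\d r$. Response compatibility \cref{eq:response-compatibility} replaces $u_L(r)$ by $\E[1-m_L(r,\omega)^2]$ for almost every $r\in(0,1]$. Null sets in $r$ do not affect the integral, so
\begin{equation*}
s_L(\beta)-\tau_{\rm G}^{\Hh}
=-\beta v\int_0^1 r\,\E\bigl[m_{\Hh}(r,\omega)^2-m_L(r,\omega)^2\bigr]\d r.
\end{equation*}
For this to make sense, the integrand $r\mapsto\E[1-m_{\Hh}(r,\omega)^2]$ must be measurable. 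I would take this from the joint measurability of the family in $(r,\omega)$, which is part of the standing hypotheses on the state family, together with Fubini, since the integrand is bounded by $1$.

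The key pointwise step uses $|m|\le1$ for both magnetizations:
\begin{equation*}
|m_{\Hh}^2-m_L^2|=|m_{\Hh}-m_L|\,|m_{\Hh}+m_L|\le2|m_{\Hh}-m_L|.
\end{equation*}
Taking expectations and applying \cref{eq:Gaussian-localization} gives $\bigl|\E[m_{\Hh}^2-m_L^2]\bigr|\le2\rho(L)$ uniformly in $r$. Integrating against $r\,\d r$ over $[0,1]$ contributes the factor $\int_0^1 r\,\d r=\tfrac12$, which cancels the $2$ and gives exactly $|s_L(\beta)-\tau_{\rm G}^{\Hh}|\le\beta v\rho(L)$, i.e.\ \cref{eq:Gaussian-normal-rate}. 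Since $\rho(L)\downarrow0$, the convergence \cref{eq:Gaussian-normal-limit} follows.

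For the final sentence of the statement: $\tau_{\rm G}^{\Hh}$ depends on the half-space family only through the function $r\mapsto\E[1-m_{\Hh}(r,\omega)^2]$. So if this squared response is the same for every admissible selection, at least for almost every $r$, the limit is selection-independent. Indeed, $s_L(\beta)$ does not depend on the selection at all, so any two selections satisfying \cref{eq:Gaussian-localization} automatically have equal values of $\tau_{\rm G}^{\Hh}$.

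The main obstacle is not analytic, since the estimate is a one-line Lipschitz bound on $x\mapsto x^2$. It is bookkeeping. One must make sure that the almost-everywhere sets in \cref{cor:Gaussian-response} and \cref{eq:response-compatibility} are handled consistently (both are Lebesgue-null in $r$, hence harmless). One must also make sure that measurability in $r$ of the half-space response is secured, so that $\tau_{\rm G}^{\Hh}$ is well defined. I would state the latter explicitly as a consequence of joint measurability before integrating.
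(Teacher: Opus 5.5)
Your proposal is correct and follows the paper's proof essentially verbatim. Both substitute response compatibility into $s_L(\beta)=-\beta v\int_0^1 r\,u_L(r)\,\d r$, bound $|m_{\Hh}^2-m_L^2|\le 2|m_{\Hh}-m_L|$ using $|m|\le1$, and integrate against $r\,\d r$ so that the factor $\tfrac12$ cancels the $2$. Your extra remark that any two selections satisfying the localization hypothesis must have the same $\tau_{\rm G}^{\Hh}$, being limits of the selection-free sequence $s_L(\beta)$, is a valid small addition that the paper does not make.
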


\begin{proof}
For almost every $r$,
\begin{align*}
\left|u_L(r)-\E[1-m_{\Hh}(r,\omega)^2]\right|
&=\left|\E[m_{\Hh}(r,\omega)^2-m_L(r,\omega)^2]\right|
\\
&\leq\E\left[|m_{\Hh}-m_L|\,|m_{\Hh}+m_L|\right]
\\
&\leq2\E|m_{\Hh}-m_L|
\leq2\rho(L).
\end{align*}
Hence
\begin{align*}
|s_L(\beta)-\tau_{\rm G}^{\Hh}|
&\leq\beta v\int_0^1r
\left|u_L(r)-\E[1-m_{\Hh}(r,\omega)^2]\right|\d r
\\
&\leq2\beta v\rho(L)\int_0^1r\d r
=\beta v\rho(L).
\end{align*}
\end{proof}

For completeness, the strip specification used below is the specification on
the induced graph $\mathcal S_L$.  If $V\subset\mathcal S_L$ is finite and
$\eta\in\{-1,1\}^{\mathcal S_L\setminus V}$, set
\begin{align}
H_{V,r}^{\omega,L,\eta}(\sigma)
={}&-\sum_{\substack{\{x,y\}\in E(\mathcal S_L)\\x,y\in V}}
J_{xy}\sigma_x\sigma_y
-\sum_{\substack{\{x,y\}\in E(\mathcal S_L)\\x\in V,\ y\notin V}}
J_{xy}\sigma_x\eta_y
\notag\\
&-r\sum_{\substack{(1,z)\in V\\z\in\Z^{d-1}}}
K_z\sigma_{(1,z)},
\label{eq:strip-specification}
\end{align}
and denote its Gibbs kernel by $\gamma_{V,r}^{\omega,L}$.  In particular,
edges joining levels $L$ and $L+1$ do not occur in
\cref{eq:strip-specification}.

\begin{lemma}[Strip transfer, uniqueness, and response compatibility]
\label{lem:strip-transfer}
Assume uniform all-site one-spin mixing in the sense of \cref{def:USM}, with
common deterministic rate $m>0$ and tangentially stationary prefactors
$(A_x)$ satisfying $\E A_{x_0}<\infty$.  Then, on one common full-probability
event and for every $L\geq1$ and $r\in[0,1]$, the strip specification
$\gamma_{V,r}^{\omega,L}$ satisfies the same one-site estimate and has a
unique DLR state $\mu_{L,r}^\omega$.  The half-space state
$\mu_{\Hh,r}^\omega$ is also unique.  Both families are jointly Borel
measurable in $(\omega,r)$ and tangentially covariant, and deterministic
finite-cylinder exhaustions converge along the full sequence.

If the bottom couplings are independent centered Gaussians of variance $v$,
the strip family is response-compatible in the sense of
\cref{eq:response-compatibility}.  Moreover, with
$m_L(r,\omega)=\mu_{L,r}^\omega(\sigma_{x_0})$ and
$m_{\Hh}(r,\omega)=\mu_{\Hh,r}^\omega(\sigma_{x_0})$,
\begin{equation}\label{eq:strip-halfspace-transfer}
\sup_{r\in[0,1]}
\E|m_L(r,\omega)-m_{\Hh}(r,\omega)|
\leq c_{d-1}(m)\E A_{x_0}e^{-mL}.
\end{equation}
\end{lemma}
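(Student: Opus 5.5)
The plan is to treat the strip as a reduced finite-volume problem for the half-space specification and read off every assertion from the one-site estimate of \cref{def:USM}. I take that estimate to be the following: for each site $x$, each reduced finite volume $V\ni x$, and each pair of oriented-edge assignments $\xi,\xi'$ disagreeing on $D$,
\begin{equation*}
|\gamma_V^\xi(\sigma_x)-\gamma_V^{\xi'}(\sigma_x)|\leq A_x\sum_{a\in D}e^{-m\,\dist(x,a)}.
\end{equation*}
I also assume the definition allows interactions on a set $Q$ of edges to be deleted. Fix the common full-probability event on which this holds for all $x$, all $V$, all $r$, and $A_x<\infty$; countability of sites and finite volumes makes it a single event. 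Only $r$ enters uncountably, and I take the estimate to be uniform in $r$.

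\textbf{Step 1: the strip specification inherits the one-site estimate.} For finite $V\subset\mathcal S_L$, \cref{eq:strip-specification} is the half-space Hamiltonian on $V$ with the interactions $Q$ across level $L$ deleted. By \cref{lem:auxiliary-boundary-kernels}, $\gamma_{V,r}^{\omega,L}$ is a mixture $\sum_\xi\kappa^{\rm free}_{V,Q}(\xi)\gamma_V^{\xi}$ of half-space kernels with auxiliary edge assignments on $Q$. Two exterior conditions on $\mathcal S_L\setminus V$ that differ on $D$ induce kernels that can be coupled so that their assignments agree off $D\cup Q$. Here is the obstacle: the free kernels themselves depend on $\eta$, so they are not obviously $D$-compatible. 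My first attempt is to let $V$ absorb these edges. I would apply \cref{def:USM} directly in the reduced volume where $Q$-edges count as free faces; this is presumably what ``every reduced finite volume'' is designed for, so the strip estimate becomes a literal instance of the hypothesis. The same device gives the strip--half-space comparison in Step 3.

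\textbf{Step 2: uniqueness, measurability, covariance, and convergence.} With the one-site estimate in hand, the sequential maximal-coupling lemma stated earlier (one-site estimate $\Rightarrow$ local-observable mixing) shows that all finite-volume expectations of a local observable form a Cauchy net along any exhaustion. The differences decay like $A_x\sum_{a\in\partial V}e^{-m\dist}$, which is summable because $c_{d-1}(m)=\sum_{z}e^{-m|z|}<\infty$ in the tangential directions. Hence every DLR state agrees with this limit, giving uniqueness of $\mu_{L,r}^\omega$ and of $\mu_{\Hh,r}^\omega$, and deterministic cylinder exhaustions converge along the full sequence. Joint Borel measurability in $(\omega,r)$ follows because the state is a pointwise limit of finite-volume expectations that are continuous in $r$ and measurable in $\omega$. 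Tangential covariance follows from uniqueness together with covariance of the specification.

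\textbf{Step 3: response compatibility and the transfer bound.} For response compatibility, I take tangential boxes $\Lambda_{L,\bm n}$ with free side faces. By uniqueness and the exponential estimate, $\langle\sigma_{(1,z)}\rangle_{L,\bm n,r}$ equals $m_L(r,\theta_z\omega)$ up to an error $A_{(1,z)}e^{-m\,\dist(z,\partial T_{\bm n})}$ summed over the side faces. Averaging over $z$ and using stationarity with $\E A<\infty$, the boundary layer contributes $o(|\bm n|)$, so the average of $\E[1-\langle\sigma\rangle^2]$ converges to $\E[1-m_L^2]$. This identifies $u_L$ of \cref{cor:Gaussian-response}.

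For \cref{eq:strip-halfspace-transfer}, I compare $m_L$ and $m_{\Hh}$ in a large volume $V$ containing $x_0$. The strip state is the $\kappa^{\rm free}$-mixture over assignments on $Q$ at height $L$; the half-space state conditions on genuine exterior spins. Both are mixtures over assignments that agree off $Q$, hence $Q$-compatible, so \cref{lem:auxiliary-boundary-kernels} and the one-site estimate give
\begin{equation*}
|m_L-m_{\Hh}|\leq A_{x_0}\sum_{a\in Q}e^{-m\,\dist(x_0,a)}+o_V(1)\leq c_{d-1}(m)A_{x_0}e^{-mL}+o_V(1).
\end{equation*}
Letting $V\uparrow$, taking expectations, and noting that the bound is uniform in $r$ completes the argument.
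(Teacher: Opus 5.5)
Your proposal is correct and follows the paper's proof essentially step for step. The strip estimate comes from the auxiliary-edge representation of \cref{lem:auxiliary-boundary-kernels} together with the common-deletion clause of \cref{def:USM}; that clause, rather than the reduced-volume clause, is the provision you are really invoking. Uniqueness and full-sequence exhaustion come from the lifted local-observable mixing, and response compatibility from the lateral comparison plus tangential stationarity. Finally, \cref{eq:strip-halfspace-transfer} comes from disintegrating both states over $V_{L,N}$ and summing the top plane via \cref{eq:top-sum}.

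One small slip: in that last step the two kernels are not $Q$-compatible, since they also differ on the lateral labels. This is exactly what your $o_V(1)$ term (the paper's $\varepsilon_N(\omega)$) absorbs, so nothing breaks.
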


\begin{proof}
For finite $V\subset\mathcal S_L$, start from the half-space kernel on $V$
and declare every crossing edge from level $L$ to level $L+1$ to be free.
By \cref{lem:auxiliary-boundary-kernels}, integration over the corresponding
auxiliary edge spins deletes exactly those interactions and gives
$\gamma_{V,r}^{\omega,L}$.  The common-deletion clause in
\cref{def:USM} therefore transfers the one-site estimate, with the same rate
and prefactors, to the strip.  The sequential coupling of
\cref{lem:one-site-to-local} transfers local-observable mixing as well.
Exact DLR disintegration over increasing cylinders, followed first by the
tangential and then by the normal cutoff as in \cref{thm:DLR-unique}, proves
uniqueness for the strip and half-space specifications.

Let $T_N=[-N,N]^{d-1}\cap\Z^{d-1}$ and
$V_{L,N}=\{1,\ldots,L\}\times T_N$.  Equip $V_{L,N}$ with free lateral
faces, and extend its Gibbs measure outside the cylinder by any fixed spin
configuration.  Every subsequential weak limit is a strip DLR state: for a
fixed local DLR equation, the lateral face is absent once $N$ is sufficiently
large.  Compactness gives subsequential limits and uniqueness identifies all
of them, so the full sequence converges locally to $\mu_{L,r}^\omega$.
Finite-volume cylinder probabilities are jointly Borel in $(\omega,r)$;
their pointwise limits show that $(\omega,r)\mapsto\mu_{L,r}^\omega$ is
Borel.  The same exhaustion argument applies in the half-space.  Finally,
tangential translations carry either unique DLR state to the unique DLR
state in the translated environment, which proves covariance without making
a separate choice of selector.

Write $m_{L,N,z}$ and $m_{L,z}$ for the mean of $\sigma_{(1,z)}$ in the
finite cylinder and in the infinite strip.  Compare the free lateral-face
kernel from \cref{lem:auxiliary-boundary-kernels} with the exact DLR
disintegration of $\mu_{L,r}^\omega$.  Summing the exponential weights over
the lateral boundary gives a constant $C_{d,m}<\infty$, independent of $L$,
$N$, $z$, and $r$, such that
\begin{equation}\label{eq:strip-lateral-comparison}
\E|m_{L,N,z}-m_{L,z}|
\leq C_{d,m}\E A_{x_0}
e^{-m\dist(z,\partial T_N)}.
\end{equation}
Here stationarity was used to replace $\E A_{(1,z)}$ by $\E A_{x_0}$; the
sum over the finitely many strip levels is bounded by a geometric series.
Consequently,
\begin{align*}
\frac1{|T_N|}\sum_{z\in T_N}
\left|\E[1-m_{L,N,z}^2]-\E[1-m_{L,z}^2]\right|
\leq\frac2{|T_N|}\sum_{z\in T_N}
\E|m_{L,N,z}-m_{L,z}|\longrightarrow0.
\end{align*}
Tangential covariance makes
$\E[1-m_{L,z}^2]=\E[1-m_L^2]$.  At every differentiability point used in
\cref{cor:Gaussian-response}, finite-volume Gaussian integration by parts
and \cref{eq:u-L} now give
\begin{equation*}
u_L(r)=\E[1-m_L(r,\omega)^2],
\end{equation*}
which is response compatibility.

It remains to compare the strip and half-space states.  Disintegrate both
states over $V_{L,N}$, representing the missing strip-top interactions by
the free-edge kernel of \cref{lem:auxiliary-boundary-kernels}.  The two
boundary kernels can differ only on the top and lateral crossing edges, so
uniform one-site mixing gives
\begin{align*}
|m_L(r,\omega)-m_{\Hh}(r,\omega)|
&\leq A_{x_0}(\omega)
\sum_{z\in T_N}e^{-m\dist(x_0,(L+1,z))}
+\varepsilon_N(\omega),
\end{align*}
where the lateral term $\varepsilon_N(\omega)$ tends to zero by the same
geometric summation as in \cref{eq:strip-lateral-comparison}.  Letting
$N\to\infty$ and using \cref{eq:top-sum} yields, uniformly in $r$,
\begin{equation*}
|m_L(r,\omega)-m_{\Hh}(r,\omega)|
\leq c_{d-1}(m)A_{x_0}(\omega)e^{-mL}.
\end{equation*}
Taking expectations proves \cref{eq:strip-halfspace-transfer}.
\end{proof}

\begin{corollary}[Exponential Gaussian localization]\label{cor:Gaussian-exponential}
Assume uniform all-site one-spin mixing in the sense of \cref{def:USM}, with a common deterministic rate $m>0$ and stationary prefactors $(A_x)$ satisfying $\E A_{x_0}<\infty$. Then the strip and half-space DLR states are unique, their families can be chosen jointly measurable and response-compatible, and
\begin{equation}\label{eq:Gaussian-explicit-rho}
\rho(L)=c_{d-1}(m)\E A_{x_0}e^{-mL}
\end{equation}
is admissible in \cref{prop:Gaussian-localization}. Consequently,
\begin{equation*}
|s_L(\beta)-\tau_{\rm G}^{\Hh}|
\leq\beta v c_{d-1}(m)\E A_{x_0}e^{-mL}.
\end{equation*}
\end{corollary}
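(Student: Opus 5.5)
This corollary is a direct assembly of \cref{lem:strip-transfer} and \cref{prop:Gaussian-localization}, and I would prove it in that order.

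First I apply \cref{lem:strip-transfer} under the stated hypotheses: uniform all-site one-spin mixing with deterministic rate $m>0$, and stationary prefactors with $\E A_{x_0}<\infty$. On one common full-probability event this gives, for every $L\geq1$ and $r\in[0,1]$:
\begin{itemize}[leftmargin=*]
\item uniqueness of the strip DLR state $\mu_{L,r}^\omega$ and of the half-space state $\mu_{\Hh,r}^\omega$;
\item joint Borel measurability in $(\omega,r)$ and tangential covariance of both families;
\item response compatibility \cref{eq:response-compatibility} of the strip family, because the bottom couplings are centered Gaussian of variance $v$;
\item the bound \cref{eq:strip-halfspace-transfer}.
\end{itemize}
Uniqueness means no selection is involved, so "can be chosen" is automatic: the families are the unique DLR states. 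The squared response $\E[1-m_{\Hh}(r,\omega)^2]$ is therefore trivially selection-independent.

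Second, I set $\rho(L)=c_{d-1}(m)\E A_{x_0}e^{-mL}$. This is deterministic, finite because $\E A_{x_0}<\infty$, and decreasing to $0$ because $m>0$. By \cref{eq:strip-halfspace-transfer} it satisfies the hypothesis \cref{eq:Gaussian-localization} uniformly in $r$. \cref{prop:Gaussian-localization} then yields $s_L(\beta)\to\tau_{\rm G}^{\Hh}$ together with the rate $|s_L(\beta)-\tau_{\rm G}^{\Hh}|\leq\beta v\rho(L)$. Substituting the formula for $\rho(L)$ gives exactly the displayed inequality.

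The only point needing care is that the assumptions of \cref{prop:Gaussian-localization} are met in the precise sense used there. The proposition uses $u_L$ from \cref{cor:Gaussian-response}, which requires the fixed-depth setting of \cref{thm:fixed-pressure}; I take this to be implicit, since Gaussian $K$ has $\mu_{\rm b}<\infty$ and the internal first-moment condition is part of the standing assumptions. Response compatibility is only needed for almost every $r$, which matches the almost-everywhere identification supplied by \cref{lem:strip-transfer}. If the internal moment assumption were not already implicit in \cref{def:USM}, I would state it explicitly as part of the corollary's hypotheses. No further estimates are needed.
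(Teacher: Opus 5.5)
Your proposal is correct and follows essentially the same route as the paper: the paper's proof also takes uniqueness, joint measurability, covariance and response compatibility from \cref{lem:strip-transfer}, re-records the lateral and top-plane comparisons behind \cref{eq:strip-halfspace-transfer}, and then inserts $\rho(L)=c_{d-1}(m)\E A_{x_0}e^{-mL}$ into \cref{prop:Gaussian-localization}. You are right that the first-moment hypotheses of \cref{thm:fixed-pressure}, which are needed to define $u_L$, are implicit standing assumptions rather than consequences of \cref{def:USM}; this applies equally to the paper's own proof.
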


\begin{proof}
The existence, uniqueness, measurability, covariance, and full-sequence
exhaustion assertions follow from \cref{lem:strip-transfer}.  We record the
resulting estimates explicitly.  Let $m_{L,N,z}$ and $m_{L,z}$ denote the boundary-spin means in a width-$N$ cylinder and in the infinite strip. The lateral comparison gives a constant $C_{d,m}<\infty$ such that
\begin{align*}
\frac1{|T_N|}\sum_{z\in T_N}
\E|m_{L,N,z}-m_{L,z}|
&\leq\frac{C_{d,m}\E A_{x_0}}{|T_N|}
\sum_{z\in T_N}e^{-m\dist(z,\partial T_N)}
\longrightarrow0,
\\
\frac1{|T_N|}\sum_{z\in T_N}
\left|\E[1-m_{L,N,z}^2]-\E[1-m_{L,z}^2]\right|
&\leq\frac2{|T_N|}\sum_{z\in T_N}
\E|m_{L,N,z}-m_{L,z}|
\longrightarrow0.
\end{align*}
Tangential stationarity and finite-volume Gaussian integration by parts now give \cref{eq:response-compatibility}. Comparing the exact strip and half-space DLR disintegrations, first removing the lateral cutoff, leaves only the top plane:
\begin{align*}
|m_L(r,\omega)-m_{\Hh}(r,\omega)|
&\leq A_{x_0}(\omega)
\sum_{z\in\Z^{d-1}}e^{-m\dist(x_0,(L+1,z))}
\\
&=c_{d-1}(m)A_{x_0}(\omega)e^{-mL},
\\
\sup_{r\in[0,1]}\E|m_L(r,\omega)-m_{\Hh}(r,\omega)|
&\leq c_{d-1}(m)\E A_{x_0}e^{-mL}.
\end{align*}
This is \cref{eq:Gaussian-explicit-rho}.
\end{proof}

\section{Half-space response and state selection}

Writing $\mu_{\bm L,r,y}$ and $\nu_{\bm L,r,y}$ for the corresponding
one-site conditional kernels, let
\begin{equation*}
\Hh_d=\{(x_1,z)\in\Z\times\Z^{d-1}:x_1\geq1\},
\qquad x_0=(1,0).
\end{equation*}
The internal edge couplings $(J_e)_{e\in E(\Hh_d)}$ are i.i.d., and the bottom fields $(K_z)$ are i.i.d. and independent of them. For finite $V\subset\Hh_d$, a non-bottom exterior configuration $\eta$, and $r\in[0,1]$, the Hamiltonian is
\begin{align}
H_{V,r}^{\omega,\eta}(\sigma)
={}&-\sum_{\substack{\{x,y\}\in E(\Hh_d)\\x,y\in V}}J_{xy}\sigma_x\sigma_y
-\sum_{\substack{\{x,y\}\in E(\Hh_d)\\x\in V,\ y\notin V}}J_{xy}\sigma_x\eta_y
\notag\\
&-r\sum_{\substack{(1,z)\in V\\z\in\Z^{d-1}}}K_z\sigma_{(1,z)}.
\label{eq:half-H}
\end{align}
Write $\gamma_{V,r}^{\omega}(\cdot\mid\eta)$ for its Gibbs specification and $\mathcal G_r(\omega)$ for the nonempty set of DLR states \cite{Georgii2011,Ruelle1969}. A selector is disorder-measurable; a family indexed by $r$ is required to be jointly measurable whenever it is integrated in $r$.

\begin{proposition}[Jointly measurable DLR selection]\label{prop:measurable-selection}
Let
\begin{equation*}
\Omega=\mathbb R^{E(\Hh_d)}\times\mathbb R^{\Z^{d-1}},
\qquad X=\{-1,1\}^{\Hh_d},
\end{equation*}
with their product Borel structures, and let $\mathcal P(X)$ carry the topology of weak convergence. At every finite inverse temperature there is a Borel map
\begin{equation}\label{eq:measurable-selector}
(\omega,r)\longmapsto\mu_r^\omega\in\mathcal P(X),
\qquad (\omega,r)\in\Omega\times[0,1],
\end{equation}
such that $\mu_r^\omega\in\mathcal G_r(\omega)$ for every $(\omega,r)$.
\end{proposition}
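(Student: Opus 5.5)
The plan is to construct the selector explicitly as a limit of finite-volume Gibbs measures with a fixed boundary condition, so that measurability is inherited from finite sums, and to use a Banach-limit-free device to handle the fact that limits need not exist. The cleanest route avoids abstract selection theorems: fix the boundary condition $\eta\equiv+1$ and the exhaustion $V_N=\{1,\ldots,N\}\times[-N,N]^{d-1}$, and let $\gamma_N^{\omega,r}=\gamma_{V_N,r}^{\omega}(\cdot\mid+)$, extended outside $V_N$ by $+1$. For each local cylinder event, $\gamma_N^{\omega,r}$ is a ratio of finite sums of exponentials of finitely many coordinates of $\omega$ and of $r$, hence jointly continuous in $(\omega,r)$ and in particular jointly Borel. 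Because subsequential limits may depend on $(\omega,r)$, I will not take a limit directly. Instead, I will use the measurable selection theorem of Kuratowski--Ryll-Nardzewski applied to the closed-valued correspondence of limit points.

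Concretely, $\mathcal P(X)$ is a compact metrizable space. Fix a countable convergence-determining family of cylinder functions $(f_j)$ and the metric $\delta(\mu,\nu)=\sum_j2^{-j}|\mu(f_j)-\nu(f_j)|$. Define the correspondence
\begin{equation*}
\Phi(\omega,r)=\bigcap_{M\geq1}\overline{\{\gamma_N^{\omega,r}:N\geq M\}},
\end{equation*}
which is the set of weak limit points. It is nonempty and compact by compactness of $\mathcal P(X)$. For every open $U\subset\mathcal P(X)$, I will check weak measurability: $\{\Phi\cap U\neq\emptyset\}$ can be written via the distance functions $\delta(\nu,\Phi(\omega,r))=\lim_{M}\inf_{N\geq M}\delta(\nu,\gamma_N^{\omega,r})$. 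Each of these is Borel in $(\omega,r)$ as a countable inf/limit of Borel maps, for $\nu$ in a countable dense set. Kuratowski--Ryll-Nardzewski then yields a Borel selector $(\omega,r)\mapsto\mu_r^\omega\in\Phi(\omega,r)$.

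It remains to show $\Phi(\omega,r)\subset\mathcal G_r(\omega)$ for every $(\omega,r)$. This is the standard argument. For a fixed finite $\Lambda$ and local $f$, the consistency identity $\gamma_N(\gamma_{\Lambda,r}^\omega f)=\gamma_N(f)$ holds once $V_N\supset\Lambda\cup\partial\Lambda$. Moreover, $\eta\mapsto\gamma_{\Lambda,r}^\omega(f\mid\eta)$ is a continuous local function, since nearest-neighbour interactions with finite couplings make the specification quasilocal. Passing to the limit along the subsequence gives the DLR equation. One point needs care: the bottom field term is included in every $V_N$ because $V_N$ starts at level $1$, so the finite-volume measures are exactly specification kernels.

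The main obstacle is the measurability of the limit-point correspondence, and specifically verifying that the distance function to $\Phi$ equals the $\liminf$ expression. I would prove this through the two inequalities using compactness. An alternative, if that step proves awkward, is to apply the measurable selection theorem directly to the closed-graph correspondence $(\omega,r)\mapsto\mathcal G_r(\omega)$. Its graph is Borel because the DLR equations for countably many $(\Lambda,f_j)$ are Borel conditions in $(\omega,r,\mu)$. Since the fibres are compact, the Kuratowski--Ryll-Nardzewski (or Novikov compact-section) theorem applies there as well.
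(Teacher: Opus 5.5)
Your proposal is correct. Your main route differs from the paper's proof, while your fallback is essentially that proof.

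The paper never works with a particular sequence of finite-volume measures. It shows that the DLR graph $\{(\omega,r,\mu):\mu\in\mathcal G_r(\omega)\}$ is Borel, as a countable intersection of DLR identities tested on rational cylinder functions. It notes that the sections are nonempty and compact, and then invokes the Arsenin--Kunugui uniformization theorem for Borel sets with $\sigma$-compact sections.

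You instead select from the possibly smaller closed-valued correspondence $\Phi(\omega,r)$ of subsequential limits of the plus-boundary measures along $V_N$. You check weak measurability by hand through $\delta(\nu,\Phi(\omega,r))=\liminf_N\delta(\nu,\gamma_N^{\omega,r})$. The two compactness inequalities you indicate do give this, and then $\{\Phi\cap B(\nu,q)\neq\emptyset\}=\{\delta(\nu,\Phi)<q\}$ is Borel. You then need only the Kuratowski--Ryll-Nardzewski theorem, plus the standard consistency argument for $\Phi(\omega,r)\subset\mathcal G_r(\omega)$. The paper also uses that argument implicitly, when it asserts nonemptiness of the sections "by compactness of finite-volume Gibbs measures."

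\textbf{What each route buys.} Your version replaces a descriptive-set-theoretic uniformization theorem with an explicit construction. It yields a selector that is always a thermodynamic limit point of plus-boundary Gibbs measures, so it agrees with the full-sequence limit whenever the DLR state is unique. The paper's version is shorter, does not depend on a choice of boundary condition or exhaustion, and gives measurability of the whole DLR graph.

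In fact that graph is closed, because each DLR identity is jointly continuous in $(\omega,r,\mu)$. Since $\mathcal P(X)$ is compact, projections of its closed subsets are closed. So Kuratowski--Ryll-Nardzewski alone also justifies your fallback.
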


\begin{proof}
Both $\Omega\times[0,1]$ and $\mathcal P(X)$ are standard Borel spaces, and $\mathcal P(X)$ is compact metrizable. Let $\mathcal F_0$ be the countable algebra of cylinder functions with rational values, and let $\mathcal V_0$ be the countable collection of finite subsets of $\Hh_d$. The DLR graph is the set of triples $(\omega,r,\mu)$ satisfying
\begin{equation}\label{eq:countable-DLR-graph}
\int h f\,\d\mu
=\int h(\eta)\,\gamma_{V,r}^{\omega}(f\mid\eta)\,\mu(\d\eta)
\end{equation}
for every $V\in\mathcal V_0$, every $f\in\mathcal F_0$ supported in $V$, and every $h\in\mathcal F_0$ supported in $V^c$. Each equality defines a Borel set: the specification depends continuously on $r$ and on the finitely many disorder coordinates meeting $V$, and integration of a bounded cylinder function is continuous in $\mu$. The countable intersection is therefore a Borel graph. Its section at $(\omega,r)$ is precisely $\mathcal G_r(\omega)$ by the monotone-class theorem; the section is nonempty by compactness of finite-volume Gibbs measures and is closed, hence compact. The Arsenin--Kunugui uniformization theorem for Borel sets with sigma-compact sections \cite[Theorem~18.18]{Kechris1995} applies directly to this graph and gives the Borel selector \cref{eq:measurable-selector}. Equivalently, its projection through every open subset of $\mathcal P(X)$ is Borel, which verifies the weak measurability required by the Kuratowski--Ryll-Nardzewski formulation. No exceptional disorder set is needed for existence.
\end{proof}

\begin{definition}[Boundary-response mixing]\label{def:BRM}
At a finite inverse temperature $\beta$, BRM holds if there are a deterministic $m_\beta>0$, a common full-probability event $\Omega_0$, and a measurable $A_\beta:\Omega_0\to[0,\infty)$ such that
\begin{equation}\label{eq:BRM-moment}
\E[|K_0|A_\beta]<\infty
\end{equation}
and, for every $\omega\in\Omega_0$, $r\in[0,1]$, finite $V\ni x_0$, and non-bottom exterior configurations $\eta,\eta'$ differing only on $D\subset\partial^{\rm ext}V$,
\begin{equation}\label{eq:BRM}
\left|\gamma_{V,r}^{\omega}(\sigma_{x_0}\mid\eta)
-\gamma_{V,r}^{\omega}(\sigma_{x_0}\mid\eta')\right|
\leq A_\beta(\omega)\sum_{y\in D}e^{-m_\beta\dist(x_0,y)}.
\end{equation}
The estimate is pointwise in the boundary configurations. It therefore remains valid after integrating against arbitrary, possibly disorder-dependent, boundary kernels.
\end{definition}

For $j\geq0$ define
\begin{equation}\label{eq:c-ell}
c_j(m)=\left(\frac{1+e^{-m}}{1-e^{-m}}\right)^j,
\qquad
\ell_d(m)=\frac{2(d-1)c_{d-2}(m)}{1-e^{-m}}.
\end{equation}
The first constant is the exact weighted cardinality of a tangential hyperplane:
\begin{equation}\label{eq:top-sum}
\sum_{z\in\Z^{d-1}}e^{-m\dist(x_0,(L+1,z))}
=c_{d-1}(m)e^{-mL}.
\end{equation}

\begin{lemma}[Exact DLR comparison]\label{lem:exact-DLR}
Assume BRM and set
$C_{M,N}=\{1,\ldots,M\}\times([-N,N]^{d-1}\cap\Z^{d-1})$. For every $\omega\in\Omega_0$, $r\in[0,1]$, and $\mu,\widetilde\mu\in\mathcal G_r(\omega)$,
\begin{align}
|\mu(\sigma_{x_0})-\widetilde\mu(\sigma_{x_0})|
\leq A_\beta(\omega)\bigl(&c_{d-1}(m_\beta)e^{-m_\beta M}
\notag\\
&+\ell_d(m_\beta)e^{-m_\beta(N+1)}\bigr).
\label{eq:DLR-comparison}
\end{align}
In particular, the two boundary-spin expectations agree. The tangential cutoff $N$ is sent to infinity before the normal depth $M$.
\end{lemma}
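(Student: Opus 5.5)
Both states satisfy the DLR equations on the cylinder $V=C_{M,N}$, which is finite and contains $x_0$ once $M,N\geq1$. We therefore write
\begin{equation*}
\mu(\sigma_{x_0})=\int\gamma_{V,r}^{\omega}(\sigma_{x_0}\mid\eta)\,\mu(\d\eta),
\qquad
\widetilde\mu(\sigma_{x_0})=\int\gamma_{V,r}^{\omega}(\sigma_{x_0}\mid\eta')\,\widetilde\mu(\d\eta').
\end{equation*}
The bottom face carries no exterior spins, because the $r$-field is part of $H_{V,r}^\omega$ itself. Hence the exterior $\eta$ of $V$ lives on the non-bottom boundary $\partial^{\rm ext}V$, which is the top plane $\{M+1\}\times([-N,N]^{d-1}\cap\Z^{d-1})$ together with the lateral sites $\{1,\dots,M\}\times\{z:\|z\|_\infty=N+1\}$. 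We couple the marginals of $\mu$ and $\widetilde\mu$ on $\partial^{\rm ext}V$ by the product coupling; any coupling works, since the configurations may differ on all of $D=\partial^{\rm ext}V$. Applying \cref{eq:BRM} pointwise inside the coupling, as the remark after \cref{def:BRM} permits, gives
\begin{equation*}
|\mu(\sigma_{x_0})-\widetilde\mu(\sigma_{x_0})|
\leq A_\beta(\omega)\sum_{y\in\partial^{\rm ext}V}e^{-m_\beta\dist(x_0,y)}.
\end{equation*}
This bound is valid for every $\omega\in\Omega_0$ and every $r$, with no averaging over disorder.

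It remains to bound the geometric sum. \emph{Top part:} the top sites satisfy $\dist(x_0,(M+1,z))=M+\|z\|_1$. Extending the sum over $z$ to all of $\Z^{d-1}$ and using \cref{eq:top-sum} with $L=M$ bounds it by $c_{d-1}(m)e^{-mM}$. \emph{Lateral part:} the sites $(x_1,z)$ with $\|z\|_\infty=N+1$ and $1\leq x_1\leq M$ lie on $2(d-1)$ faces. On a face with $z_i=\pm(N+1)$ the distance is $(x_1-1)+(N+1)+\sum_{j\neq i}|z_j|$. Summing $x_1$ from $1$ to $\infty$ gives the factor $1/(1-e^{-m})$. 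Summing the $d-2$ free tangential coordinates over $\Z$ gives $\bigl(\tfrac{1+e^{-m}}{1-e^{-m}}\bigr)^{d-2}=c_{d-2}(m)$. Overcounting corners only enlarges the bound, so the lateral part is at most $\ell_d(m)e^{-m(N+1)}$, which matches \cref{eq:c-ell} exactly. Together the two parts give \cref{eq:DLR-comparison}.

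Finally, the left side of \cref{eq:DLR-comparison} does not depend on $M$ or $N$. We let $N\to\infty$ first, which kills the lateral term; the order matters because of how $\ell_d$ is defined. Then we let $M\to\infty$, which kills the top term. Since $A_\beta(\omega)<\infty$ on $\Omega_0$, the two expectations coincide.

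The main obstacle is the reduction to a pointwise application of BRM. We must check that the DLR equation for a general (nonextremal, nontranslation-invariant) $\mu$ can be written with the same specification kernel $\gamma_{V,r}^\omega(\cdot\mid\eta)$ integrated against the $\mu$-law of the exterior. This is the standard DLR identity. We must also check that the exterior of the cylinder contains no bottom sites, so that $D$ is genuinely a subset of $\partial^{\rm ext}V$ as \cref{def:BRM} requires. Once these are in place, the rest is the geometric summation above.
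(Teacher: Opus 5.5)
Your proposal is correct and follows the paper's proof essentially step for step: exact DLR disintegration on $C_{M,N}$, the product coupling $\mu\otimes\widetilde\mu$ with BRM applied pointwise, and the same top and lateral geometric sums producing $c_{d-1}(m_\beta)e^{-m_\beta M}$ and $\ell_d(m_\beta)e^{-m_\beta(N+1)}$. One small correction: the top term depends only on $M$ and the lateral term only on $N$, so the order of the two limits is immaterial, and your remark that ``the order matters because of how $\ell_d$ is defined'' is unnecessary, though it does not affect the argument.
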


\begin{proof}
The DLR equations give the exact identities
\begin{equation}\label{eq:DLR-disintegration}
\mu(\sigma_{x_0})
=\int\gamma_{C_{M,N},r}^{\omega}(\sigma_{x_0}\mid\eta)\,\mu(\d\eta),
\qquad
\widetilde\mu(\sigma_{x_0})
=\int\gamma_{C_{M,N},r}^{\omega}(\sigma_{x_0}\mid\eta')\,\widetilde\mu(\d\eta').
\end{equation}
With $\pi=\mu\otimes\widetilde\mu$ on the two exterior configurations, \cref{eq:BRM} gives
\begin{align*}
|\mu(\sigma_{x_0})-\widetilde\mu(\sigma_{x_0})|
&\leq\int\left|\gamma_{C_{M,N},r}^{\omega}(\sigma_{x_0}\mid\eta)
-\gamma_{C_{M,N},r}^{\omega}(\sigma_{x_0}\mid\eta')\right|\pi(\d\eta,\d\eta')
\\
&\leq A_\beta(\omega)
\sum_{y\in D_{M,N}^{\rm top}\cup D_{M,N}^{\rm lat}}
e^{-m_\beta\dist(x_0,y)}.
\end{align*}
The two boundary sums satisfy
\begin{align*}
\sum_{y\in D_{M,N}^{\rm top}}e^{-m\dist(x_0,y)}
&\leq e^{-mM}\sum_{z\in\Z^{d-1}}e^{-m|z|_1}
=c_{d-1}(m)e^{-mM},
\\
\sum_{y\in D_{M,N}^{\rm lat}}e^{-m\dist(x_0,y)}
&\leq2(d-1)e^{-m(N+1)}
\left(\sum_{j\geq0}e^{-mj}\right)
\left(\sum_{u\in\Z^{d-2}}e^{-m|u|_1}\right)
\\
&=\ell_d(m)e^{-m(N+1)}.
\end{align*}
This proves \cref{eq:DLR-comparison}. In the prescribed order,
\begin{align*}
\lim_{M\to\infty}\lim_{N\to\infty}
|\mu(\sigma_{x_0})-\widetilde\mu(\sigma_{x_0})|
\leq\lim_{M\to\infty}
A_\beta(\omega)c_{d-1}(m_\beta)e^{-m_\beta M}
=0.
\end{align*}
\end{proof}

For a measurable selection $\mu_r^\omega\in\mathcal G_r(\omega)$ define
\begin{equation}\label{eq:half-response}
b_\mu(r)=\E\left[K_0\mu_r^\omega(\sigma_{x_0})\right].
\end{equation}
For a jointly measurable family define
\begin{equation}\label{eq:half-pressure}
\tau^{\Hh_d}(\mu)=-\int_0^1b_\mu(r)\d r.
\end{equation}
Both expressions are integrable because $|b_\mu(r)|\leq\E|K_0|$.

\begin{theorem}[Selection-independent boundary response]\label{thm:selection-independent}
Assume BRM. For every fixed $r$, any two disorder-measurable DLR selections have the same $b_\mu(r)$. Any two jointly measurable selection families have the same $\tau^{\Hh_d}$. This is selection independence of one boundary response; it is not a claim of DLR uniqueness.
\end{theorem}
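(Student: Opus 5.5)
The plan is to reduce everything to the pointwise comparison of \cref{lem:exact-DLR} and then integrate in $\omega$ and $r$.

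Fix $r\in[0,1]$ and two disorder-measurable selections $\omega\mapsto\mu_r^\omega$ and $\omega\mapsto\widetilde\mu_r^\omega$, both valued in $\mathcal G_r(\omega)$. For every $\omega$ in the BRM event $\Omega_0$, \cref{lem:exact-DLR} applies to the pair $\mu_r^\omega,\widetilde\mu_r^\omega$. We send $N\to\infty$ and then $M\to\infty$ in \cref{eq:DLR-comparison}. Since $A_\beta(\omega)<\infty$ on $\Omega_0$, this gives $\mu_r^\omega(\sigma_{x_0})=\widetilde\mu_r^\omega(\sigma_{x_0})$ for every $\omega\in\Omega_0$. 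The exceptional set is the complement of $\Omega_0$. It is fixed in advance by \cref{def:BRM} and is independent of $r$ and of the selections, so no union over uncountably many $r$ or over selectors is needed.

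Next we check integrability. Each map $\omega\mapsto\mu_r^\omega(\sigma_{x_0})$ is measurable, because the selection is measurable and $\mu\mapsto\mu(\sigma_{x_0})$ is continuous. Its absolute value is at most $1$, and $\E|K_0|<\infty$. Hence $K_0\mu_r^\omega(\sigma_{x_0})$ and $K_0\widetilde\mu_r^\omega(\sigma_{x_0})$ are integrable and agree almost surely, which gives $b_\mu(r)=b_{\widetilde\mu}(r)$.

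For jointly measurable families, the map $(\omega,r)\mapsto K_0\mu_r^\omega(\sigma_{x_0})$ is jointly measurable and dominated by $|K_0|$. Fubini therefore makes $r\mapsto b_\mu(r)$ measurable and bounded by $\E|K_0|$. The two integrands agree for every $r$ by the previous step (indeed on $\Omega_0\times[0,1]$ identically), so integrating over $[0,1]$ gives $\tau^{\Hh_d}(\mu)=\tau^{\Hh_d}(\widetilde\mu)$.

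The only delicate point is making sure \cref{lem:exact-DLR} is used pointwise on a single $\omega$-event. The DLR disintegration \cref{eq:DLR-disintegration} is an exact identity for each fixed $\omega$, and the product coupling $\mu\otimes\widetilde\mu$ is legitimate because \cref{eq:BRM} holds for arbitrary pairs of boundary configurations. So no translation covariance or ergodicity of the selectors is needed. The moment condition \cref{eq:BRM-moment} is not even required for this statement; it only matters for quantitative rates. Finally, as the statement itself says, agreement of the boundary one-spin mean does not give equality of the states $\mu_r^\omega$ and $\widetilde\mu_r^\omega$, and we claim no such equality.
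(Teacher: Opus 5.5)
Your proposal is correct and follows essentially the paper's route: both reduce the claim to the exact DLR comparison of \cref{lem:exact-DLR} and then integrate in $\omega$ and $r$. The only difference is the order of operations. The paper takes expectations at fixed $(M,N)$ and uses $\E[|K_0|A_\beta]<\infty$ to send $N\to\infty$ and then $M\to\infty$, whereas you pass to the limit pointwise on $\Omega_0$ first; as you correctly observe, this shows that the moment condition \cref{eq:BRM-moment} is not needed for this qualitative statement beyond the integrability $\E|K_0|<\infty$ already implicit in the definition of $b_\mu$.
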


\begin{proof}
For two selections $\mu$ and $\widetilde\mu$, \cref{eq:DLR-comparison} gives
\begin{align*}
|b_\mu(r)-b_{\widetilde\mu}(r)|
&\leq\E\left[|K_0|\,
|\mu_r^\omega(\sigma_{x_0})-\widetilde\mu_r^\omega(\sigma_{x_0})|\right]
\\
&\leq\E[|K_0|A_\beta]
\left(c_{d-1}(m_\beta)e^{-m_\beta M}
+\ell_d(m_\beta)e^{-m_\beta(N+1)}\right).
\end{align*}
The moment condition permits first $N\to\infty$ and then $M\to\infty$, so $b_\mu(r)=b_{\widetilde\mu}(r)$. Consequently,
\begin{align*}
|\tau^{\Hh_d}(\mu)-\tau^{\Hh_d}(\widetilde\mu)|
\leq\int_0^1|b_\mu(r)-b_{\widetilde\mu}(r)|\d r
=0.
\end{align*}
\end{proof}

\begin{theorem}[Finite-depth approximation]\label{thm:depth-approximation}
Assume the finite-first-moment product model of \cref{thm:fixed-pressure}, let $\beta<\infty$, and assume BRM. The jointly measurable selection supplied by \cref{prop:measurable-selection} has a half-space pressure independent of the chosen jointly measurable family, and
\begin{equation}\label{eq:depth-rate}
\left|s_L(\beta)-\tau^{\Hh_d}\right|
\leq c_{d-1}(m_\beta)\E[|K_0|A_\beta]e^{-m_\beta L}.
\end{equation}
The tangential limit is taken first for every fixed $L$; \cref{eq:depth-rate} then permits the normal-depth limit. No all-face cube theorem follows from this one-face hypothesis alone.
\end{theorem}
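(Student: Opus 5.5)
We would reduce \cref{eq:depth-rate} to a pointwise-in-$r$ comparison of boundary responses and then integrate in $r$. By \cref{thm:automatic-derivative}, $s_L(\beta)=-\int_0^1g_L(r)\,\d r$, where for almost every $r$
\begin{equation*}
g_L(r)=\lim_{\min_i n_i\to\infty}\frac1{|\bm n|}\E\sum_{z\in T_{\bm n}}K_z\langle\sigma_{(1,z)}\rangle_{L,\bm n,r}.
\end{equation*}
By \cref{thm:selection-independent}, $\tau^{\Hh_d}=-\int_0^1 b_\mu(r)\,\d r$ is the same for every jointly measurable family, and \cref{prop:measurable-selection} supplies one. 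It then suffices to prove, for almost every $r$,
\begin{equation*}
|g_L(r)-b_\mu(r)|\leq c_{d-1}(m_\beta)\E[|K_0|A_\beta]e^{-m_\beta L}.
\end{equation*}
Integrating over $r\in[0,1]$ gives \cref{eq:depth-rate}.

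For the pointwise bound, fix $z$ in the interior of $T_{\bm n}$ and compare $\langle\sigma_{(1,z)}\rangle_{L,\bm n,r}$ with $\mu_r^{\omega}(\sigma_{(1,z)})$. First, \cref{lem:auxiliary-boundary-kernels} represents the finite slab with free top and lateral faces as a mixture $\int\gamma_{\Lambda,r}^{\omega}(\cdot\mid\xi)\,\kappa^{\rm free}(\d\xi)$ of half-space specifications on $\Lambda=\Lambda_{L,\bm n}$ (shifted so that $z$ plays the role of $x_0$). Second, the DLR equation writes $\mu_r^\omega(\sigma_{(1,z)})=\int\gamma_{\Lambda,r}^\omega(\sigma_{(1,z)}\mid\eta)\,\mu_r^\omega(\d\eta)$. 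Using the product coupling of the two boundary laws, BRM in the translated environment $\theta_z\omega$ gives
\begin{equation*}
\bigl|\langle\sigma_{(1,z)}\rangle_{L,\bm n,r}-\mu_r^\omega(\sigma_{(1,z)})\bigr|\leq A_\beta(\theta_z\omega)\Bigl(c_{d-1}(m_\beta)e^{-m_\beta L}+\sum_{y\in D^{\rm lat}}e^{-m_\beta\dist((1,z),y)}\Bigr),
\end{equation*}
where the top sum is bounded by \cref{eq:top-sum}. BRM is only asserted on $\Omega_0$ at $x_0$, so we apply it on $\bigcap_{z\in\Z^{d-1}}\theta_z^{-1}\Omega_0$. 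This is a countable intersection of full-probability events, by stationarity of the product law.

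Next we multiply by $|K_z|$, take expectations, and use stationarity to replace $\E[|K_z|A_\beta(\theta_z\omega)]$ by $\E[|K_0|A_\beta]$. We then handle the selection at site $z$. Since $\mu$ need not be covariant, $\E[K_z\mu_r^\omega(\sigma_{(1,z)})]$ is not obviously $b_\mu(r)$. We will define the translated selector $\mu^{(z)}$ by $\mu^{(z),\omega}_r=\theta_{-z}^*\mu_r^{\theta_z\omega}$, which is again a jointly measurable DLR selection. By \cref{thm:selection-independent} it has the same response as $\mu$, so $\E[K_z\mu_r^\omega(\sigma_{(1,z)})]=b_\mu(r)$. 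We expect this step to be the delicate one.

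Finally, we average over $z\in T_{\bm n}$. The lateral contribution is bounded by $\ell_d(m_\beta)e^{-m_\beta(\dist(z,\partial T_{\bm n})+1)}$, so its average is $o(1)$ as $\min_i n_i\to\infty$. Taking the tangential limit at a differentiability point yields the pointwise bound, and hence the theorem. The main obstacles are the measurable translation of selectors and the full-probability intersection over shifts; the geometric sums are routine.
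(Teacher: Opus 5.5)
Your proposal is correct and follows the paper's proof essentially step for step. The shared steps are: the auxiliary-edge-spin representation of the free top and lateral faces (on rectangular cylinders each deleted edge has its own exterior endpoint, so these assignments are ordinary exterior configurations covered by BRM); BRM at each $(1,z)$ on $\bigcap_z\vartheta_z^{-1}\Omega_0$, with top sum $c_{d-1}(m_\beta)e^{-m_\beta L}$ and vanishing lateral average; stationarity of $\E[|K_z|A_\beta(\vartheta_z\omega)]$; the derivative representation; and integration in $r$. The only variation is that the paper compares the slab at $(1,z)$ directly with the translated DLR state $(S_z)_*\mu^{\vartheta_z\omega}$, so stationarity alone yields $b_\mu(r)$, whereas you compare with $\mu$ itself and then invoke \cref{thm:selection-independent} for a translated selector; both are valid, and indeed under BRM the two boundary means coincide pointwise.
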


\begin{proof}
We use centered tangential boxes $T_N=[-N,N]^{d-1}\cap\Z^{d-1}$, which have the same limiting pressure by stationarity and \cref{thm:fixed-pressure}. For every deleted top or lateral edge, adjoin an auxiliary edge spin with independent uniform a priori law. Summing that spin contributes the spin-independent factor $\cosh(\beta J_e)$. By \cref{lem:auxiliary-boundary-kernels}, the resulting spin marginal represents the free faces exactly and its disintegration is a mixture of fixed-edge boundary responses against the induced partition-function-weighted kernel. The spin-independent factors cancel from normalized expectations. On the rectangular cylinders used here, every deleted crossing edge has a distinct exterior endpoint, so these edge-assignment kernels are ordinary exterior-configuration kernels covered by BRM. Hence the finite-slab response can be compared with the exact DLR disintegration of a half-space state.

An arbitrary selector need not be tangentially covariant. We therefore make the translation argument explicit. Let $\tau_z(x_1,u)=(x_1,u+z)$ and define
\begin{equation*}
J_{\{x,y\}}(\vartheta_z\omega)=J_{\{\tau_zx,\tau_zy\}}(\omega),
\qquad
K_u(\vartheta_z\omega)=K_{u+z}(\omega).
\end{equation*}
For a spin configuration let $(S_z\sigma)_{\tau_zx}=\sigma_x$. If $\mu^{\vartheta_z\omega}$ is the selected DLR state for the shifted environment, then
\begin{equation}\label{eq:translated-selector}
\mu^{[z],\omega}:=(S_z)_*\mu^{\vartheta_z\omega}
\end{equation}
is a DLR state for $\omega$ and
$\mu^{[z],\omega}(\sigma_{(1,z)})=\mu^{\vartheta_z\omega}(\sigma_{x_0})$.
Replace $\Omega_0$ by the full-probability event
$\Omega_*=\bigcap_{z\in\Z^{d-1}}\vartheta_z^{-1}\Omega_0$ and set
$A_z(\omega)=A_\beta(\vartheta_z\omega)$. Stationarity gives
\begin{equation}\label{eq:stationary-selector}
\E\left[K_z\mu^{[z],\omega}(\sigma_{(1,z)})\right]=b_\mu(r),
\qquad
\E[|K_z|A_z]=\E[|K_0|A_\beta].
\end{equation}
Thus selector covariance has not been assumed.

Define the averaged finite-slab response by
\begin{align*}
g_{L,N}(r)
=\frac1{|T_N|}\E\sum_{z\in T_N}K_z
\langle\sigma_{(1,z)}\rangle_{L,N,r}.
\end{align*}
Applying BRM at every $(1,z)$ to the free-face boundary mixture and the DLR disintegration of \cref{eq:translated-selector} gives a constant $C_{d,m_\beta}$ such that
\begin{align*}
|g_{L,N}(r)-b_\mu(r)|
&\leq c_{d-1}(m_\beta)\E[|K_0|A_\beta]e^{-m_\beta L}
\\
&\quad+\frac{C_{d,m_\beta}\E[|K_0|A_\beta]}{|T_N|}
\sum_{z\in T_N}e^{-m_\beta\dist(z,\partial T_N)}.
\label{eq:finite-depth-response-bound}
\end{align*}
For centered rectangular boxes,
\begin{align*}
\sum_{z\in T_N}e^{-m_\beta\dist(z,\partial T_N)}
&=O_{d,m_\beta}(N^{d-2}),
\\
\frac1{|T_N|}\sum_{z\in T_N}e^{-m_\beta\dist(z,\partial T_N)}
&=O_{d,m_\beta}(N^{-1})
\longrightarrow0,
\end{align*}
where $|T_N|=(2N+1)^{d-1}$.\\
Sending $N\to\infty$ at fixed $L$ and using \cref{thm:automatic-derivative} gives, for almost every $r$,
\begin{equation}\label{eq:g-b-bound}
|g_L(r)-b_\mu(r)|
\leq c_{d-1}(m_\beta)\E[|K_0|A_\beta]e^{-m_\beta L}.
\end{equation}
Since both integrands are bounded by $\mu_{\rm b}$,
\begin{align*}
|s_L(\beta)-\tau^{\Hh_d}|
&=\left|\int_0^1\bigl(g_L(r)-b_\mu(r)\bigr)\d r\right|
\\
&\leq c_{d-1}(m_\beta)\E[|K_0|A_\beta]e^{-m_\beta L}.
\end{align*}
Selection independence is \cref{thm:selection-independent}.
\end{proof}

\section{Mixing hierarchy and a verifiable regime}

BRM controls only the response observable at the distinguished boundary site. We first record an abstract one-site hypothesis that is uniform enough to survive successive conditioning.

\begin{definition}[Uniform all-site one-spin mixing]\label{def:USM}
At a finite inverse temperature $\beta$, uniform all-site one-spin mixing holds if there are a deterministic $m_\beta>0$, a common full-probability event, and finite measurable prefactors $(A_x)_{x\in\Hh_d}$ such that, for every environment on that event, every finite $V\subset\Hh_d$, $x\in V$, $r\in[0,1]$, and non-bottom exterior configurations differing only on $D\subset\partial^{\rm ext}V$,
\begin{equation}\label{eq:USM}
\left|\gamma_{V,r}^{\omega}(\sigma_x\mid\eta)
-\gamma_{V,r}^{\omega}(\sigma_x\mid\eta')\right|
\leq A_x(\omega)\sum_{y\in D}
e^{-m_\beta\dist(x,y)}.
\end{equation}
The estimate is required uniformly after replacing any common subset of spins by fixed boundary data. In the oriented-edge formulation preceding \cref{lem:auxiliary-boundary-kernels}, it is also required pointwise for edge assignments and after integration against paired $D$-compatible admissible boundary kernels. These include the induced free-edge mixtures generated by independent uniform auxiliary edge spins; the induced mixing law is the partition-function-weighted kernel in \cref{eq:free-edge-mixture}, and edge spins incident to the same exterior vertex may be treated separately. In addition, if the same set $Q$ of crossing interactions is declared free in both specifications by the construction of \cref{lem:auxiliary-boundary-kernels}, the estimate is required with $D$ containing only the boundary labels on which the remaining data differ. This common-deletion clause is what transfers the hypothesis to induced subgraphs such as $\mathcal S_L$; it is not a consequence of ordinary vertex-boundary mixing alone. The source hypothesis with one integrable random prefactor $A_\beta$ is the special case $A_x\leq A_\beta$ for all $x$.
\end{definition}

To identify an entire Gibbs state, one needs a corresponding estimate on a generating family.

\begin{definition}[Local-observable mixing]\label{def:LOM}
Let $\mathcal C$ be the countable family of cylinder indicators
$f_{S,\xi}=\mathbf1\{\sigma_S=\xi\}$, where $S\subset\Hh_d$ is finite and $\xi\in\{-1,1\}^S$. Local-observable mixing holds on a common full-probability event if, for every $f\in\mathcal C$, there are $m_{\beta,f}>0$ and a finite measurable $A_{\beta,f}$ such that, whenever $V\supset S$ and two non-bottom exterior configurations differ only on $D$,
\begin{equation}\label{eq:LOM}
\left|\gamma_{V,r}^{\omega}(f\mid\eta)-
\gamma_{V,r}^{\omega}(f\mid\eta')\right|
\leq A_{\beta,f}(\omega)\osc(f)
\sum_{y\in D}e^{-m_{\beta,f}\dist(S,y)}.
\end{equation}
The estimate is uniform in $V$, $r\in[0,1]$, and the two boundary configurations.
\end{definition}

\begin{lemma}[One-site mixing lifts to local observables]\label{lem:one-site-to-local}
Uniform all-site one-spin mixing implies local-observable mixing. More precisely, for $f_{S,\xi}\in\mathcal C$ one may take
\begin{equation}\label{eq:USM-to-LOM}
m_{\beta,f}=m_\beta,
\qquad
A_{\beta,f}=\frac12\sum_{x\in S}A_x,
\end{equation}
and there is one coupling of the two $S$-marginals for which
\begin{equation}\label{eq:finite-set-exposure}
\Pp(\sigma_S\ne\sigma_S')
\leq\frac12\sum_{x\in S}\sum_{y\in D}
A_xe^{-m_\beta\dist(x,y)}.
\end{equation}
The same conclusion holds after integrating against paired, possibly disorder-dependent, admissible boundary kernels.
\end{lemma}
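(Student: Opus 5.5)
We construct the coupling by sequential maximal coupling. Enumerate $S=\{x_1,\ldots,x_k\}$ and fix the two exterior configurations $\eta,\eta'$ on $V^c$, differing only on $D$. We reveal the spins in order and, at step $j$, couple the conditional laws of $\sigma_{x_j}$ given $\sigma_{x_1},\ldots,\sigma_{x_{j-1}}$ under $\gamma_{V,r}^\omega(\cdot\mid\eta)$ and $\gamma_{V,r}^\omega(\cdot\mid\eta')$.

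While the revealed values agree, both conditional laws are Gibbs kernels on $V_j=V\setminus\{x_1,\ldots,x_{j-1}\}$. Their boundary data are $\eta$ (resp. $\eta'$) together with the common revealed values on $\{x_1,\ldots,x_{j-1}\}$. This is exactly the situation covered by the clause of \cref{def:USM} requiring the estimate \emph{after replacing any common subset of spins by fixed boundary data}. The two boundary conditions therefore still differ only on $D$, and \cref{eq:USM} applies at $x_j$ in $V_j$, with the distance from $x_j$ to $D$ unchanged.

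For $\pm1$-valued spins the total variation distance of the two one-site laws is half the difference of their means. The maximal coupling therefore disagrees at step $j$ with conditional probability at most
\begin{equation*}
\tfrac12 A_{x_j}\sum_{y\in D}e^{-m_\beta\dist(x_j,y)}.
\end{equation*}
Once a disagreement occurs, we let the remaining spins evolve independently; this still gives a valid coupling of the full conditional laws. The bound on each step is deterministic given agreement so far, so a union bound over first-disagreement times gives \cref{eq:finite-set-exposure}.

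For $f=f_{S,\xi}$ we have $|\gamma(f\mid\eta)-\gamma(f\mid\eta')|\leq\Pp(\sigma_S\ne\sigma'_S)$ and $\osc(f)=1$. Using $\dist(x,y)\geq\dist(S,y)$ for $x\in S$ then yields \cref{eq:LOM} with the constants \cref{eq:USM-to-LOM}. Measurability of $A_{\beta,f}$ is immediate, since it is a finite sum, and the event of full probability is the common one from \cref{def:USM}.

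For the kernel-integrated version we proceed as follows. Given paired $D$-compatible admissible kernels, draw $(\xi,\xi')$ from a $D$-compatible coupling $\pi$. Run the construction above conditionally on $(\xi,\xi')$, using the pointwise edge-assignment form of \cref{def:USM}, which is stated for such assignments including edges sharing an exterior vertex. Then integrate against $\pi$, exactly as in the last step of \cref{lem:auxiliary-boundary-kernels}. Disorder-dependent kernels cause no trouble because the bound is pointwise in $\omega$.

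The main obstacle is the inductive step. After conditioning, the boundary of $V_j$ includes interior sites of $V$, and the estimate must still hold with the same prefactor $A_{x_j}$ and rate. This is precisely why the uniform common-substitution clause was built into \cref{def:USM}. We only need to check that the revealed sites are common to both sides and hence never enter $D$.
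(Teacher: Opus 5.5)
Your proposal is correct and follows essentially the same route as the paper's proof: sequential maximal coupling along an ordering of $S$, with the common-boundary-data clause of \cref{def:USM} applied on each reduced volume with the same changed set $D$, and the mismatch probabilities summed over first-mismatch times. The paper then bounds the observable by its oscillation times the mismatch probability and integrates against a coupling of the two boundary kernels, exactly as you propose.
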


\begin{proof}
Order $S=\{x_1,\ldots,x_k\}$. Couple the two spins at $x_1$ maximally. Their total-variation distance is one half of the absolute difference of their means, so \cref{eq:USM} bounds the first mismatch probability by
\begin{align*}
\Pp(\sigma_{x_1}\ne\sigma_{x_1}')
\leq\frac12A_{x_1}\sum_{y\in D}
e^{-m_\beta\dist(x_1,y)}.
\end{align*}
On the branch where the first spins agree, condition on their common value and remove $x_1$ from the volume. The two conditional laws are exact Gibbs specifications on the reduced volume, with $x_1$ added as common boundary data; hence \cref{eq:USM} applies at $x_2$ with the same changed set $D$. Continue in this way. On a branch where a mismatch has already occurred, complete the coupling arbitrarily. Finite-state gluing gives one coupling of the $S$-marginals.

If $E_i$ is the event that the first mismatch occurs at the $i$th exposure, then
\begin{align*}
\Pp(\sigma_S\ne\sigma_S')
&=\sum_{i=1}^k\Pp(E_i)
\leq\frac12\sum_{i=1}^kA_{x_i}
\sum_{y\in D}e^{-m_\beta\dist(x_i,y)},
\\
|\gamma_{V,r}^\omega(f\mid\eta)-\gamma_{V,r}^\omega(f\mid\eta')|
&\leq\osc(f)\Pp(\sigma_S\ne\sigma_S')
\\
&\leq\frac12\left(\sum_{x\in S}A_x\right)\osc(f)
\sum_{y\in D}e^{-m_\beta\dist(S,y)}.
\end{align*}
Integration against a coupling of two admissible boundary kernels preserves the same bound.
\end{proof}

\begin{theorem}[DLR uniqueness]\label{thm:DLR-unique}
Assume local-observable mixing. On its common full-probability event, $\mathcal G_r(\omega)$ consists of one DLR state for every $r\in[0,1]$. This is strictly stronger than the response conclusion of \cref{thm:selection-independent}.
\end{theorem}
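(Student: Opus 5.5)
The plan is to run the argument of \cref{lem:exact-DLR} for every cylinder indicator instead of only the boundary spin, and then use the fact that cylinder indicators determine a probability measure on $X=\{-1,1\}^{\Hh_d}$. Fix $\omega$ in the common full-probability event of \cref{def:LOM}, fix $r\in[0,1]$, and take $\mu,\widetilde\mu\in\mathcal G_r(\omega)$; the set is nonempty by \cref{prop:measurable-selection}. Fix $f=f_{S,\xi}\in\mathcal C$. Choose $M,N$ large enough that the centered cylinder $C_{M,N}$, translated if necessary, contains $S$ with a margin. The DLR equations then give the exact disintegrations
\begin{equation*}
\mu(f)=\int\gamma_{C_{M,N},r}^{\omega}(f\mid\eta)\,\mu(\d\eta),
\qquad
\widetilde\mu(f)=\int\gamma_{C_{M,N},r}^{\omega}(f\mid\eta')\,\widetilde\mu(\d\eta').
\end{equation*}
These hold before any limit is taken, exactly as in \cref{eq:DLR-disintegration}.

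Next, couple the two exterior configurations by the product coupling $\mu\otimes\widetilde\mu$. Two such configurations differ at most on $D\subset\partial^{\rm ext}C_{M,N}$, and this set consists only of non-bottom exterior sites, since the bottom is encoded by the field term and is identical in both. Applying \cref{eq:LOM} pointwise and integrating gives
\begin{equation*}
|\mu(f)-\widetilde\mu(f)|\leq A_{\beta,f}(\omega)\,\osc(f)\sum_{y\in\partial^{\rm ext}C_{M,N}}e^{-m_{\beta,f}\dist(S,y)}.
\end{equation*}
To bound the boundary sum, note that $\dist(S,y)\geq\dist(x,y)-\operatorname{diam}(S)$ for any fixed $x\in S$. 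Then the top-face and lateral-face estimates of \cref{lem:exact-DLR}, including \cref{eq:top-sum} and the definition of $\ell_d$, bound the sum by $e^{m_{\beta,f}\operatorname{diam}S}\bigl(c_{d-1}(m_{\beta,f})e^{-m_{\beta,f}M'}+\ell_d(m_{\beta,f})e^{-m_{\beta,f}N'}\bigr)$, where $M'$ and $N'$ are the distances from $S$ to the top and lateral faces. Sending $N\to\infty$ first and then $M\to\infty$ gives $\mu(f)=\widetilde\mu(f)$. This step is pointwise in $\omega$, and $A_{\beta,f}(\omega)$ is finite, so no moment condition is needed.

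Finally, $\mu$ and $\widetilde\mu$ agree on all cylinder indicators. Cylinder events form a $\pi$-system generating the product $\sigma$-algebra of $X$, so $\mu=\widetilde\mu$ by the $\pi$--$\lambda$ theorem, and $\mathcal G_r(\omega)$ is a singleton. The event used is the single common event of \cref{def:LOM}, independent of $r$ and $f$, so the conclusion holds simultaneously for all $r\in[0,1]$. For the strictness remark, one notes that \cref{thm:selection-independent} only identifies $\mu(\sigma_{x_0})$ weighted by $K_0$, and that BRM at one site does not constrain other observables. Beyond that, I would just cite \cref{lem:one-site-to-local} as the route by which the stronger hypothesis arises.

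The main obstacle is making the geometric summation uniform: showing that $\sum_{y\in\partial^{\rm ext}C}e^{-m\dist(S,y)}\to0$ in the prescribed order, without the sum over the infinite top plane diverging. \Cref{eq:top-sum} handles this exactly. I would also make sure the DLR disintegration is applied to a volume whose exterior excludes the bottom, which holds because the half-space has no sites below level $1$.
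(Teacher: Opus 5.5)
Your proposal is correct and is essentially the paper's proof. It uses exact DLR disintegration of both states on a cylinder $C_{M,N}\supset S$, the pointwise bound \cref{eq:LOM} integrated against the product coupling, the vanishing of the boundary sum (first $N\to\infty$, then $M\to\infty$), extension from the countable cylinder class to the full $\sigma$-field, and DLR existence for nonemptiness. The only slip is cosmetic: for $x\in S$ with $x_1>1$, the lateral sum over levels picks up a factor of up to $2$ relative to $\ell_d(m)$, which does not affect the limit. Alternatively, comparing distances with $x_0$ via $\dist(S,y)\geq\dist(x_0,y)-\max_{x\in S}\dist(x_0,x)$ lets you reuse \cref{lem:exact-DLR} verbatim.
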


\begin{proof}
Fix $f_{S,\xi}\in\mathcal C$ and two DLR states $\mu,\nu$. Exact disintegration in $C_{M,N}\supset S$ and \cref{eq:LOM} give
\begin{align*}
|\mu(f_{S,\xi})-\nu(f_{S,\xi})|
\leq A_{\beta,f}(\omega)\osc(f)
&\sum_{y\in D_{M,N}^{\rm top}\cup D_{M,N}^{\rm lat}}
e^{-m_{\beta,f}\dist(S,y)},
\\
\lim_{M\to\infty}\lim_{N\to\infty}
\sum_{y\in D_{M,N}^{\rm top}\cup D_{M,N}^{\rm lat}}
&e^{-m_{\beta,f}\dist(S,y)}
=0.
\end{align*}
Thus $\mu(f)=\nu(f)$ on the common countable generating class, hence on the full configuration sigma-field. DLR existence supplies the unique state.
\end{proof}

We now verify both mixing levels in a subcritical disagreement regime by disagreement exploration \cite{vanDenBergMaes1994}. Put
\begin{equation}\label{eq:p-q}
p_\beta=\E\tanh(\beta|J|),
\qquad q_{\beta,d}=(2d-1)p_\beta.
\end{equation}
For a finite $V\subset\Hh_d$, write $\partial^{\rm ext}V$ for its non-bottom exterior vertex boundary. If $x\in V$ and $y\in\partial^{\rm ext}V$, then $\SAW_V(x,y)$ denotes the set of nearest-neighbor paths
$\gamma=(x=x_0,x_1,\ldots,x_n=y)$ with distinct vertices, $x_j\in V$ for $0\leq j<n$, and terminal vertex $y$. Thus the path runs in $V$ until its last crossing edge; all its edges carry internal couplings $J_e$.

\begin{lemma}[Single-edge influence]\label{lem:edge-influence}
Changing one neighboring spin across an edge with coupling $J$ changes the one-site conditional law at the other endpoint by at most $\tanh(\beta|J|)$ in total variation.
\end{lemma}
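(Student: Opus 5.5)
The one-site conditional law at the endpoint $x$ depends on the neighbouring spin $s=\sigma_y$ only through the local field, and the plan is to compute directly in terms of that field. Fix all other boundary and neighbouring data, and write the conditional law of $\sigma_x$ as
\begin{equation*}
\Pp(\sigma_x=\pm1)=\frac{e^{\pm\beta(h+Js)}}{2\cosh(\beta(h+Js))}.
\end{equation*}
Here $h$ collects the fixed contributions: the other couplings, any boundary field $rK$, and the data on the other neighbours. The mean is therefore $\tanh(\beta(h+Js))$.

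For $\pm1$ laws the total-variation distance is half the absolute difference of the means. So the claim reduces to the elementary inequality
\begin{equation*}
\tfrac12\bigl|\tanh(\beta h+\beta|J|)-\tanh(\beta h-\beta|J|)\bigr|\leq\tanh(\beta|J|)\qquad\text{for all real }h.
\end{equation*}
Changing $s\mapsto -s$ swaps the arguments $\beta h\pm\beta J$, so the sign of $J$ is irrelevant, which is why only $|J|$ appears.

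To prove the inequality, put $a=\beta h$ and $b=\beta|J|\geq0$, and use the subtraction identity
\begin{equation*}
\tanh(a+b)-\tanh(a-b)=\frac{\sinh(2b)}{\cosh(a+b)\cosh(a-b)}.
\end{equation*}
Then use $\cosh(a+b)\cosh(a-b)=\tfrac12(\cosh 2a+\cosh 2b)$. The difference becomes $2\sinh(2b)/(\cosh 2a+\cosh 2b)$, and it is maximized at $a=0$, where it equals $2\tanh b$. Halving this gives exactly $\tanh(\beta|J|)$.

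The only point needing care is that the bound must hold for every fixed $h$, since $h$ is disorder-dependent and configuration-dependent. That is why the supremum over $a$ is taken explicitly rather than evaluating at $h=0$. The same bound then holds for a random field $h$, and after integration against any boundary kernel, because it is uniform in $h$.
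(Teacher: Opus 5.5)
Your proof is correct and follows the paper's argument. Both reduce the total-variation distance to half the difference of the conditional means $\tanh(\beta(h\pm J))$, apply the $\tanh$ subtraction identity, and bound $\cosh(\beta(h+J))\cosh(\beta(h-J))$ below by $\cosh^2(\beta|J|)$; your product-to-sum step $\cosh(a+b)\cosh(a-b)=\tfrac12(\cosh 2a+\cosh 2b)$ justifies that lower bound, which the paper uses without comment.
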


\begin{proof}
If $h$ is the field from all other terms, the two plus probabilities differ by
\begin{align*}
\frac12\left|\tanh(\beta(h+J))-\tanh(\beta(h-J))\right|
&=\frac{\sinh(2\beta|J|)}
{2\cosh(\beta(h+J))\cosh(\beta(h-J))}
\\
&\leq\frac{\sinh(2\beta|J|)}{2\cosh^2(\beta|J|)}
=\tanh(\beta|J|).
\end{align*}
A spin-mean difference is twice this total-variation distance.
\end{proof}

Put $c_e=\tanh(\beta|J_e|)$. Let $\partial_E^{\rm or}V$ be the oriented non-bottom crossing edges $b=(v,y)$ with $v\in V$ and $y\notin V$. To telescope changes edge by edge, temporarily allow an assignment $\xi_b\in\{-1,1\}$ on every $b\in\partial_E^{\rm or}V$, without requiring assignments on edges incident to the same exterior vertex to coincide. Write $\gamma_{V,r}^{\omega,\xi}$ for the resulting specification. For $b=(v,y)$, let $\SAW_V(x,b)$ be the paths in $\SAW_V(x,y)$ whose last edge is $b$.

\begin{lemma}[Oriented-edge disagreement coupling]\label{lem:exploration}
Let two edge assignments $\xi,\xi'$ differ on $B\subset\partial_E^{\rm or}V$. There is one coupling $Q_{\xi,\xi'}$ of
$\gamma_{V,r}^{\omega,\xi}$ and $\gamma_{V,r}^{\omega,\xi'}$ such that, simultaneously for every finite $S\subset V$,
\begin{equation}\label{eq:exploration-coupling}
Q_{\xi,\xi'}(\sigma_S\ne\sigma_S')
\leq\sum_{x\in S}\sum_{b\in B}
\sum_{\gamma\in\SAW_V(x,b)}\prod_{e\in\gamma}c_e.
\end{equation}
The estimate is uniform in the common edge assignments, the bottom fields, $r$, and all coupling signs.
\end{lemma}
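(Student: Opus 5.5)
The plan is a disagreement-percolation (van den Berg--Maes) construction: build the coupling $Q_{\xi,\xi'}$ by sequential exploration from the disagreeing edges in $B$, and show that disagreement can only spread along open self-avoiding paths. Each edge is open with conditional probability at most $c_e$, by \cref{lem:edge-influence}.

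\textbf{Step 1: the exploration.} Fix a deterministic ordering of $V$. Maintain a set $R$ of revealed sites, together with the revealed values $(\sigma_R,\sigma'_R)$, and a \emph{disagreement cluster}. An active edge is either
\begin{itemize}
\item a boundary edge $b=(v,y)\in B$ with $v$ unrevealed, or
\item an internal edge $\{u,v\}$ with $u$ revealed, $\sigma_u\neq\sigma'_u$, and $v$ unrevealed.
\end{itemize}
While some active edge exists, pick the first such endpoint $v$. Conditionally on everything revealed, the laws of $\sigma_v$ and $\sigma'_v$ are marginals of Gibbs measures on the unrevealed set. Their boundary data differ only through the active edges at $v$ and through further disagreements that lie strictly inside the unrevealed region.

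Here the exploration rule is not quite enough. Instead of comparing the marginal at $v$, I would use the standard device: reveal $v$ via the single-site conditional law given \emph{all} other spins, coupled jointly. Concretely, I would realize the coupling as a Markov-chain/heat-bath style grand coupling. Alternatively, and more simply, I would use the inductive construction on $|V|$.

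\textbf{Step 1 (inductive version, the version I would actually run).} Choose a site $v$ adjacent to the disagreement. Couple $(\sigma_v,\sigma'_v)$ maximally. Then recurse on $V\setminus\{v\}$, where $v$ is added as boundary data. If $\sigma_v=\sigma'_v$, then $v$ becomes common data and the disagreement set is unchanged minus the edges at $v$. If $\sigma_v\neq\sigma'_v$, then the new disagreeing edges are the internal edges from $v$.

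\textbf{Step 2: the path bound.} Show that a site $x$ ends up disagreeing only if there is a chain of disagreeing sites $x=x_0,\dots,x_{n-1}$ with $x_{n-1}$ incident to some $b\in B$. Each link of this chain was created at a step where the one-site coupling failed. I expect this to be the main obstacle: the marginal of $\sigma_v$ in the reduced volume is \emph{not} a one-site conditional law, so \cref{lem:edge-influence} does not directly bound its mismatch probability by $\tanh(\beta|J|)$ of a single edge.

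To handle this, I would instead construct the coupling via a monotone-free heat-bath chain run to stationarity (coupling from the past). In that chain every single-site update uses the true conditional law given all neighbors. By \cref{lem:edge-influence} and a union over differing neighbors, a mismatch is then created with probability at most $\sum c_e$ over the edges to disagreeing neighbors. More precisely, I would implement each update with independent edge-indexed uniforms so that the mismatch is caused by one specific edge $e$, each cause occurring with probability at most $c_e$.

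\textbf{Step 3: extracting a self-avoiding path.} Trace back the causes from a disagreement at $x$. This yields a path to some $b\in B$ consisting of causing edges. Loop-erasure gives a self-avoiding path in $\SAW_V(x,b)$. A BK/independence argument on fresh update randomness gives probability at most $\prod_{e\in\gamma}c_e$ for each path.

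\textbf{Step 4: conclusion.} Take the union bound over $x\in S$, $b\in B$, and paths, which gives \cref{eq:exploration-coupling} simultaneously for all $S$, since it is a single coupling. Uniformity in $r$, the bottom fields, the common assignments, and the signs holds because \cref{lem:edge-influence} bounds the influence of an edge by $\tanh(\beta|J_e|)$ for every value of the remaining field $h$.
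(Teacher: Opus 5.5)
\textbf{There is a genuine gap in Steps 2--3.}

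Your heat-bath/causal-chain construction traces a disagreement at $x$ back through a space-time chain of causes. The spatial trace of that chain is a walk: it may return to a site at a later update and reuse edges. Summing the per-cause bound $c_e$ over such chains gives $\sum_{\mathrm{walks}}\prod_e c_e$. This is the Neumann series $\sum_n C^n$ from Dobrushin's comparison theorem, not the self-avoiding sum in \cref{eq:exploration-coupling}. The two genuinely differ:
\begin{itemize}
\item For two adjacent sites with $c=\tanh(\beta|J_{uv}|)$, the walk sum from $u$ to $b=(v,y)$ is $c\,c_b/(1-c^2)$, whereas the lemma asserts $c\,c_b$.
\item On a single plaquette with all $c_e>1/2$ the walk sum diverges, while the right side of \cref{eq:exploration-coupling} is always a finite sum.
\end{itemize}

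Loop erasure does not repair this. After erasing loops, the update at which a retained edge $e$ acts is no longer a fixed update; it depends on the erased excursions. So the event ``$e$ caused a mismatch at \emph{some} update of its endpoint'' has no bound $c_e$. BK is also unavailable, because the attribution intervals depend on the current configurations of both copies rather than on disjoint blocks of update randomness. Indeed, the total weight of walks with loop erasure $\gamma$ equals $\prod_{e\in\gamma}c_e$ times loop factors that are at least $1$.

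The self-avoiding form is not cosmetic. \cref{thm:subcritical} uses that a self-avoiding path has distinct edges, so that $\E\prod_{e\in\gamma}c_e=p_\beta^n$. It also uses the $(2d-1)$ branching count.

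\textbf{The missing idea removes exactly the obstacle you flagged in your inductive version: change one oriented edge at a time.}
\begin{itemize}
\item Suppose $\xi,\xi'$ differ only at $b=(v,y)$, and put $W=V\setminus\{v\}$. The two $\sigma_v$-marginals are proportional to $\Phi(s)e^{\beta J_b\xi_b s}$ and $\Phi(s)e^{\beta J_b\xi'_b s}$, with the same $\Phi(s)=\sum_{\sigma_W}e^{-\beta H^0(s,\sigma_W)}$.
\item Any positive function on $\{-1,1\}$ is proportional to $e^{\beta h s}$. Hence both marginals are one-spin laws in a common effective field, and \cref{lem:edge-influence} bounds the mismatch under a maximal coupling by $c_b$.
\item On agreement at $v$, the conditional laws on $W$ coincide. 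On disagreement, they are edge-assignment specifications on $W$ differing only on the edges $(w,v)$, so induction on $|V|$ applies.
\item Because $v$ has left the volume, the recursion only appends $b$ to paths in $W$ ending at $v$. It therefore generates self-avoiding paths automatically.
\item For general $B$, interpolate $\xi=\xi^{(0)},\ldots,\xi^{(m)}=\xi'$ one edge at a time; this is why inconsistent edge assignments are allowed. Glue the adjacent couplings and take a union bound.
\end{itemize}
This is the paper's proof. It yields one coupling valid simultaneously for all $S$, with the uniformity you noted in Step 4.
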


\begin{proof}
We use induction on $|V|$, first for a single changed oriented edge and then for an arbitrary finite set of such edges. The assertion is trivial for $V=\varnothing$. Assume the arbitrary-edge assertion for every region with fewer than $n$ vertices, and let $|V|=n$.

First suppose that only $b=(v,y)$ changes. Put $W=V\setminus\{v\}$ and sum the common Hamiltonian over $\sigma_W$. The two marginal laws of $\sigma_v$ are one-spin laws in the same effective field, with the contribution across $b$ changed from $J_b\xi_b$ to $J_b\xi_b'$. By \cref{lem:edge-influence}, a maximal coupling makes the two spins at $v$ disagree with probability at most $c_b$. If they agree, the two conditional laws on $W$ are identical, because the only changed Boltzmann factor is constant after $\sigma_v$ is fixed, and we couple the remainder identically.

If the two spins at $v$ disagree, condition on their two values. The conditional laws on $W$ are edge-assignment specifications that can differ only on the oriented edges $(w,v)$ with $w\in W$ and $w\sim v$. Apply the induction hypothesis on $W$ to couple them. This construction gives one coupling, independent of $S$. If $v\in S$, then
\begin{align*}
Q(\sigma_S\ne\sigma_S')
&\leq Q(\sigma_v\ne\sigma_v')
\leq c_b,\qquad v\in S,
\\
Q(\sigma_S\ne\sigma_S')
&\leq c_b\sum_{x\in S}\sum_{\substack{w\in W\\w\sim v}}
\sum_{\gamma\in\SAW_W(x,(w,v))}\prod_{e\in\gamma}c_e
\\
&=\sum_{x\in S}\sum_{\gamma\in\SAW_V(x,b)}
\prod_{e\in\gamma}c_e, \qquad v\notin S.
\end{align*}
The last equality appends $b$ to a path ending at the exterior vertex $v$ of $W$. This proves the single-edge assertion for $n$ vertices, simultaneously for all $S$.

Now list the edges of $B$ as $b_1,\ldots,b_m$ and introduce assignments
$\xi=\xi^{(0)},\xi^{(1)},\ldots,\xi^{(m)}=\xi'$ by changing one oriented edge at a time. Couple each adjacent pair by the single-edge construction. The finite-state gluing lemma, obtained by disintegrating successive pair couplings over their common marginal, produces a joint law of
$(\sigma^{(0)},\ldots,\sigma^{(m)})$ with all these adjacent couplings. If $\sigma_S^{(0)}\ne\sigma_S^{(m)}$, then some adjacent pair differs on $S$, and
\begin{align*}
Q(\sigma_S^{(0)}\ne\sigma_S^{(m)})
&\leq\sum_{i=1}^m
Q(\sigma_S^{(i-1)}\ne\sigma_S^{(i)})
\\
&\leq\sum_{x\in S}\sum_{i=1}^m
\sum_{\gamma\in\SAW_V(x,b_i)}\prod_{e\in\gamma}c_e.
\end{align*}
This completes the induction. The construction is the edge-inhomogeneous signed analogue of disagreement exploration \cite{vanDenBergMaes1994}.
\end{proof}

\begin{proposition}[Pathwise disagreement estimate]\label{prop:pathwise}
For every realization, finite $V\ni x$, $r\in[0,1]$, and $D\subset\partial^{\rm ext}V$,
\begin{equation}\label{eq:pathwise}
\sup_{\substack{\eta,\eta'\\
\eta=\eta'\text{ on }\partial^{\rm ext}V\setminus D}}
\left|\gamma_{V,r}^{\omega}(\sigma_x\mid\eta)
-\gamma_{V,r}^{\omega}(\sigma_x\mid\eta')\right|
\leq2\sum_{y\in D}\sum_{\gamma\in\SAW_V(x,y)}\prod_{e\in\gamma}c_e.
\end{equation}
The estimate is taken before disorder expectation and is uniform over the boundary data. By integration it also holds for arbitrary random boundary kernels, uniformly over their disorder-dependent mixing weights. If a common set $Q$ of crossing interactions is deleted, the same estimate holds with the path sum restricted to paths that avoid $Q$, and hence with the displayed right side unchanged.
\end{proposition}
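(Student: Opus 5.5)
The plan is to reduce \cref{prop:pathwise} to \cref{lem:exploration} with $S=\{x\}$.

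\emph{Step 1: pass to edge assignments.} Given vertex boundary data $\eta,\eta'$ that agree off $D$, I will use the induced consistent edge assignments $\xi_{(v,y)}=\eta_y$ and $\xi'_{(v,y)}=\eta'_y$. The vertex specification $\gamma^{\omega}_{V,r}(\cdot\mid\eta)$ is exactly the edge-assignment specification $\gamma^{\omega,\xi}_{V,r}$. The assignments differ only on
\begin{equation*}
B=\{(v,y)\in\partial_E^{\rm or}V : y\in D\}.
\end{equation*}

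\emph{Step 2: apply the coupling.} \cref{lem:exploration} with $S=\{x\}$ gives one coupling with
\begin{equation*}
Q(\sigma_x\neq\sigma_x')\leq\sum_{b\in B}\sum_{\gamma\in\SAW_V(x,b)}\prod_{e\in\gamma}c_e .
\end{equation*}
Each path in $\SAW_V(x,y)$ has a unique last edge $b=(v,y)$ with $v\in V$. Hence the sum over $b\in B$ of $\SAW_V(x,b)$ is exactly the sum over $y\in D$ of $\SAW_V(x,y)$. Since $|\gamma(\sigma_x\mid\eta)-\gamma(\sigma_x\mid\eta')|=|\E_Q(\sigma_x-\sigma_x')|\leq 2Q(\sigma_x\ne\sigma_x')$, the factor $2$ in \cref{eq:pathwise} follows. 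The bound does not depend on the common data, so the supremum over $\eta,\eta'$ is immediate. It is also uniform in $r$, in the bottom fields and in the coupling signs, because \cref{lem:exploration} is.

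\emph{Step 3: random boundary kernels.} Take a coupling of the two kernels supported on pairs agreeing off $D$. Then \cref{lem:auxiliary-boundary-kernels} (its last clause) integrates the pointwise bound, and the weights may depend on the disorder.

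\emph{Step 4: common deletion.} This is the only point requiring care. I will represent the deleted interactions $Q$ by auxiliary edge spins, as in \cref{lem:auxiliary-boundary-kernels}, so each specification is a mixture of edge-assignment specifications against $\kappa^{\rm free}_{V,Q}$.

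The two mixing kernels are not equal in general, since the weights $Z^\xi_V$ depend on the remaining data. So I will avoid a kernel coupling and argue directly. Deleting the interactions in $Q$ means the specification is the edge-assignment Hamiltonian on the graph with those edges removed. \cref{lem:exploration} proves its bound by induction on a graph whose only structure used is which edges exist. Running the same induction on the modified graph (its proof is unchanged) yields the path sum over paths that avoid $Q$ as edges.

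Since all $c_e\geq0$, restricting to paths avoiding $Q$ only removes nonnegative terms, so the displayed right side remains a valid upper bound. I expect verifying that the induction tolerates removed edges to be the main, though mild, obstacle: the disagreement step at $v$ only propagates along the edges actually present.
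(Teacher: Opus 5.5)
Your proposal is correct and follows essentially the paper's route. You pass to induced edge assignments and apply the exploration coupling with $S=\{x\}$, using the last-edge decomposition of $\SAW_V(x,y)$; the factor $2$ comes from $|\E_Q(\sigma_x-\sigma_x')|\leq 2Q(\sigma_x\ne\sigma_x')$, you integrate against $D$-compatible couplings, and you treat deleted edges as carrying no influence. The obstacle you anticipated in Step 4 is not a real one: \cref{lem:exploration} holds for every coupling realization, so deleting $Q$ is just the case $J_e=0$ (hence $c_e=0$) on $Q$, no re-run of the induction is needed, and this is exactly how the paper phrases it.
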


\begin{proof}
For exterior configurations $\eta,\eta'$, assign the value $\eta_y$ or $\eta_y'$ separately to every crossing edge incident to $y$. The set $B$ of changed oriented edges is precisely the collection of such edges incident to vertices in $D$ on which the two exterior spins differ. The sum over $b\in B$ in \cref{eq:exploration-coupling} is the sum over paths ending in $D$, including separate last edges when one exterior vertex is incident to several edges of a non-rectangular region. Under that coupling,
\begin{equation*}
\left|\E_Q\sigma_x-\E_Q\sigma_x'\right|
\leq2Q(\sigma_x\ne\sigma_x').
\end{equation*}
This gives \cref{eq:pathwise}. For a cylinder function $f$ supported on $S$, the same coupling gives
\begin{align}
\left|\gamma_{V,r}^{\omega}(f\mid\eta)-
\gamma_{V,r}^{\omega}(f\mid\eta')\right|
&\leq\osc(f)Q(\sigma_S\ne\sigma_S')
\notag\\
&\leq\osc(f)\sum_{x\in S}\sum_{y\in D}
\sum_{\gamma\in\SAW_V(x,y)}\prod_{e\in\gamma}c_e.
\label{eq:local-pathwise}
\end{align}
The right sides are independent of $\eta$ and $\eta'$. Integrating the pointwise inequalities against any pair of deterministic or disorder-dependent boundary kernels preserves them without changing the constants. A deleted crossing interaction has zero single-edge influence; equivalently, set $c_e=0$ on every $e\in Q$ in the exploration. The resulting restricted path sum is bounded by \cref{eq:pathwise,eq:local-pathwise}, which verifies the common-deletion clause of \cref{def:USM} in this regime.
\end{proof}

\begin{theorem}[Subcritical disagreement criterion]\label{thm:subcritical}
Assume the internal couplings are i.i.d. and $q_{\beta,d}<1$. For
$C_{M,N}=\{1,\ldots,M\}\times[-N,N]^{d-1}$, the pathwise supremum in \cref{eq:pathwise}, with boundary disagreement allowed on the top and lateral faces but not the bottom, satisfies
\begin{equation}\label{eq:annealed-path}
\E\sup_{r,\eta,\eta'}
\left|\gamma_{C_{M,N},r}^{\omega}(\sigma_{x_0}\mid\eta)
-\gamma_{C_{M,N},r}^{\omega}(\sigma_{x_0}\mid\eta')\right|
\leq\frac{4dp_\beta}{1-q_{\beta,d}}
\left(q_{\beta,d}^{M-1}+q_{\beta,d}^{N}\right).
\end{equation}
For every $m\in(0,-\log q_{\beta,d})$, define on the full half-space
\begin{equation}\label{eq:G-A}
G_{x,y}=\sum_{\gamma:x\to y\ \SAW}\prod_{e\in\gamma}c_e,
\qquad
A_x=2\sup_{y\ne x}e^{m\dist(x,y)}G_{x,y}.
\end{equation}
Then
\begin{equation}\label{eq:A-integrable}
\E A_x\leq\frac{4dp_\beta e^m}{1-q_{\beta,d}e^m}<\infty,
\end{equation}
and, on a common full-probability event for all $x$,
\begin{equation}\label{eq:quenched-mixing}
\left|\gamma_{V,r}^{\omega}(\sigma_x\mid\eta)
-\gamma_{V,r}^{\omega}(\sigma_x\mid\eta')\right|
\leq A_x(\omega)\sum_{y\in D}e^{-m\dist(x,y)}.
\end{equation}
Thus local-observable mixing holds with a prefactor no larger than $\sum_{x\in S}A_x$, and the half-space DLR state is unique. If the bottom couplings are independent of the internal field and $\E|K_0|<\infty$, then BRM holds and
\begin{equation}\label{eq:subcritical-depth}
|s_L(\beta)-\tau^{\Hh_d}|
\leq\E|K_0|\frac{4dp_\beta}{1-q_{\beta,d}}q_{\beta,d}^{L-1}.
\end{equation}
\end{theorem}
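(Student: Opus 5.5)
The plan is to reduce everything to the pathwise bound \cref{eq:pathwise} of \cref{prop:pathwise} and then estimate expected self-avoiding path sums. Independence of the i.i.d.\ internal couplings does the work: along a fixed self-avoiding path the edges are distinct, so
\begin{equation*}
\E\prod_{e\in\gamma}c_e=p_\beta^{|\gamma|}.
\end{equation*}
The number of self-avoiding paths of length $n$ from a given vertex is at most $2d(2d-1)^{n-1}$. Hence the expected sum over paths of length $n$ is at most $2dp_\beta q_{\beta,d}^{n-1}$.

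For \cref{eq:annealed-path}, I would note that the right side of \cref{eq:pathwise} does not depend on $r$, $\eta$ or $\eta'$, so it dominates the supremum pointwise. It therefore suffices to take the expectation of
\begin{equation*}
2\sum_{y}\sum_{\gamma\in\SAW_{C_{M,N}}(x_0,y)}\prod_{e\in\gamma}c_e ,
\end{equation*}
where $y$ ranges over the top and lateral exterior vertices.
\begin{itemize}
\item A path from $x_0=(1,0)$ to the top exterior layer at height $M+1$ has length at least $M$.
\item A path to a lateral exterior vertex has length at least $N+1$.
\end{itemize}
Summing the geometric tails $2\cdot2dp_\beta\sum_{n\geq M}q^{n-1}$ and its lateral analogue gives
\begin{equation*}
\frac{4dp_\beta}{1-q}\bigl(q^{M-1}+q^{N}\bigr).
\end{equation*}
Overlap between the top and lateral families only helps, since each path is counted once by its length.

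For \cref{eq:A-integrable}, I would bound the supremum by a sum, $A_x\le2\sum_{y\ne x}e^{m\dist(x,y)}G_{x,y}$, and use $\dist(x,y)\le|\gamma|$ for every path $\gamma$ from $x$ to $y$. This gives
\begin{equation*}
\E A_x\le2\sum_{n\ge1}e^{mn}\,2dp_\beta q^{n-1}=\frac{4dp_\beta e^m}{1-qe^m},
\end{equation*}
which is finite because $m<-\log q$. Since each $\E A_x$ is finite, $A_x<\infty$ almost surely, and the countable intersection over $x\in\Hh_d$ gives a common full-probability event.

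For \cref{eq:quenched-mixing}, I would restrict the paths to $V$, so $\SAW_V(x,y)\subset\{\text{SAW from }x\text{ to }y\}$ and the path sum in \cref{eq:pathwise} is at most $G_{x,y}\le\tfrac12A_xe^{-m\dist(x,y)}$. The common-deletion and kernel clauses of \cref{def:USM} already follow from the last assertions of \cref{prop:pathwise}, so uniform all-site one-spin mixing holds with these $A_x$. \cref{lem:one-site-to-local} then gives local-observable mixing with a prefactor at most $\sum_{x\in S}A_x$, and \cref{thm:DLR-unique} gives uniqueness.

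For BRM, take $A_\beta=A_{x_0}$. Since $A_{x_0}$ is measurable with respect to the internal couplings, independence gives $\E[|K_0|A_{x_0}]=\E|K_0|\,\E A_{x_0}<\infty$. \cref{thm:depth-approximation} would then give a rate in terms of $m$. To get exactly \cref{eq:subcritical-depth}, I would instead repeat the proof of \cref{thm:depth-approximation} directly with the path sum rather than with $A_\beta$.

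In that repetition, the difference $|g_{L,N}(r)-b_\mu(r)|$ is controlled, site by site, by $\E\bigl[|K_z|\cdot(\text{path sum to the top and lateral faces})\bigr]$. By independence this factorizes as $\E|K_0|$ times the annealed path sum. The top contribution is at most $\frac{4dp_\beta}{1-q}q^{L-1}$. The lateral contributions decay in the distance to the boundary and so vanish after averaging and letting $N\to\infty$.

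The main obstacle is this comparison at depth $L$. The finite slab deletes its top edges, while the half-space state is disintegrated over the cylinder with the top edges present. I would handle it via \cref{lem:auxiliary-boundary-kernels}: represent the deleted top edges by auxiliary edge spins, so both sides are mixtures of edge-assignment specifications on the same cylinder that differ only on the top and lateral crossing edges. The pathwise bound, integrated against the pair of kernels, then applies. The paths end at height $L+1$, so they have length at least $L$, which yields the factor $q^{L-1}$. Finally, I would integrate over $r$ using \cref{thm:automatic-derivative} and \cref{thm:selection-independent}, as in the proof of \cref{thm:depth-approximation}.
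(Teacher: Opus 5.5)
Your proposal is correct and follows essentially the same route as the paper. The common steps are annealed path counting with $\E\prod_{e\in\gamma}c_e=p_\beta^{|\gamma|}$ and at most $2d(2d-1)^{n-1}$ paths, bounding the supremum by a sum to get $\E A_x$, a countable intersection for the common event, independence of $K_0$ for BRM, and rerunning the depth comparison with only top-reaching paths of length at least $L$ to get the $q_{\beta,d}^{L-1}$ rate. The only cosmetic difference is that you obtain local-observable mixing through \cref{lem:one-site-to-local}, whereas the paper reads it off directly from the set-valued coupling bound \cref{eq:local-pathwise}; both give the prefactor $\tfrac12\sum_{x\in S}A_x$.
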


\begin{proof}
A self-avoiding path of length $n$ uses distinct edges, so independence gives
\begin{align*}
\E\prod_{e\in\gamma}c_e
=p_\beta^n,
\end{align*}
where $\#\{\gamma:|\gamma|=n, \gamma(0)=x_0\}\leq 2d(2d-1)^{n-1}$.

Top-reaching paths have length at least $M$, while lateral-reaching paths have length at least $N+1$. Including the factor $2$ from \cref{eq:pathwise},
\begin{align*}
2\sum_{n\geq M}2d(2d-1)^{n-1}p_\beta^n
&=\frac{4dp_\beta}{1-q_{\beta,d}}q_{\beta,d}^{M-1},
\\
2\sum_{n\geq N+1}2d(2d-1)^{n-1}p_\beta^n
&=\frac{4dp_\beta}{1-q_{\beta,d}}q_{\beta,d}^{N}.
\end{align*}
Their sum is \cref{eq:annealed-path}.

For the quenched estimate, $\dist(x,y)\leq|\gamma|$ and the supremum is bounded by the sum over all endpoints. Hence
\begin{align*}
\E A_x
&\leq2\sum_{n\geq1}2d(2d-1)^{n-1}p_\beta^ne^{mn}
\\
&=4dp_\beta e^m\sum_{n\geq1}
(q_{\beta,d}e^m)^{n-1}
=\frac{4dp_\beta e^m}{1-q_{\beta,d}e^m}.
\end{align*}
Countability gives a common event on which every $A_x$ is finite. On that event,
\begin{align*}
2G_{x,y}
&\leq A_xe^{-m\dist(x,y)},
\\
2\sum_{y\in D}\sum_{\gamma\in\SAW_V(x,y)}\prod_{e\in\gamma}c_e
&\leq A_x\sum_{y\in D}e^{-m\dist(x,y)},
\end{align*}
which proves \cref{eq:quenched-mixing}. Similarly, \cref{eq:local-pathwise} gives
\begin{align*}
|\gamma_{V,r}^\omega(f\mid\eta)-\gamma_{V,r}^\omega(f\mid\eta')|
\leq\frac12\left(\sum_{x\in S}A_x\right)\osc(f)
\sum_{y\in D}e^{-m\dist(S,y)}.
\end{align*}
Thus local-observable mixing and \cref{thm:DLR-unique} give uniqueness.

Independence of $K_0$ and the internal field gives
\begin{align*}
\E[|K_0|A_{x_0}]
=\E|K_0|\E A_{x_0}<\infty,
\end{align*}
so BRM follows. Retaining only top-reaching paths after the tangential cutoff is removed gives, for almost every $r$,
\begin{align*}
|g_L(r)-b_\mu(r)|
&\leq\E|K_0|\frac{4dp_\beta}{1-q_{\beta,d}}
q_{\beta,d}^{L-1},
\\
|s_L(\beta)-\tau^{\Hh_d}|
&\leq\int_0^1|g_L(r)-b_\mu(r)|\d r
\\
&\leq\E|K_0|\frac{4dp_\beta}{1-q_{\beta,d}}
q_{\beta,d}^{L-1}.
\end{align*}
\end{proof}

\begin{corollary}[Sparse signed couplings]\label{cor:sparse}
Let
\begin{equation*}
\nu_\theta=(1-\theta)\delta_0+\frac{\theta}{2}(\delta_{J_0}+\delta_{-J_0}),
\qquad J_0>0.
\end{equation*}
If $(2d-1)\theta<1$, then
$q_{\beta,d}=(2d-1)\theta\tanh(\beta J_0)<1$ for every finite $\beta$. The DLR state is unique, the response is selection-independent, and \cref{eq:subcritical-depth} holds. The admissible exponential rate remains positive as $\beta\to\infty$. This is a finite-temperature response statement and does not itself define a zero-temperature DLR response.
\end{corollary}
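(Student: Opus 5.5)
The corollary is a direct specialization of \cref{thm:subcritical}, so the plan is to compute $p_\beta$ for the law $\nu_\theta$ and then read off each conclusion from results already proved. Since $\tanh(\beta|J|)$ vanishes on the atom at $0$ and equals $\tanh(\beta J_0)$ on the two signed atoms,
\begin{equation*}
p_\beta=\E\tanh(\beta|J|)=\theta\tanh(\beta J_0),
\qquad
q_{\beta,d}=(2d-1)\theta\tanh(\beta J_0).
\end{equation*}
For finite $\beta$ we have $\tanh(\beta J_0)<1$, hence $q_{\beta,d}<(2d-1)\theta<1$. The hypothesis $(2d-1)\theta<1$ therefore places every finite $\beta$ in the regime of \cref{thm:subcritical}.

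Next I would invoke \cref{thm:subcritical} directly. It supplies quenched one-spin mixing \cref{eq:quenched-mixing} with integrable prefactors $A_x$. It also gives local-observable mixing, and through \cref{thm:DLR-unique} uniqueness of the half-space DLR state for every $r$. Assuming the bottom couplings are independent of the internal field with $\E|K_0|<\infty$, as in the standing product model, BRM holds. Selection independence of the response then follows from \cref{thm:selection-independent}, although uniqueness already makes it trivial, and \cref{eq:subcritical-depth} holds verbatim. The signs of the couplings play no role, since \cref{lem:exploration} is uniform in coupling signs.

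The one point needing care is the claim that the admissible rate stays positive as $\beta\to\infty$. Any $m\in(0,-\log q_{\beta,d})$ is admissible in \cref{thm:subcritical}. Because $q_{\beta,d}\le(2d-1)\theta$ uniformly in $\beta$, every $m\in(0,-\log((2d-1)\theta))$ is admissible simultaneously for all finite $\beta$. For such $m$ the bound \cref{eq:A-integrable} is controlled uniformly:
\begin{equation*}
\E A_x\le\frac{4d\theta e^m}{1-(2d-1)\theta e^m}.
\end{equation*}
The geometric factor in \cref{eq:subcritical-depth} is likewise at most $((2d-1)\theta)^{L-1}$ times a $\beta$-uniform constant.

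I expect no real obstacle. The only subtlety is to stress that every statement is made at each fixed finite $\beta$. The uniform rate is a statement about the family of finite-$\beta$ bounds and does not produce a $\beta=\infty$ DLR response, consistent with the restriction of \cref{thm:automatic-derivative} to finite temperature.
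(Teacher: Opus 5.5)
Your proposal is correct and matches the paper's intended argument. The corollary is stated without a separate proof, as a direct specialization of \cref{thm:subcritical}: $p_\beta=\theta\tanh(\beta J_0)$, and the $\beta$-uniform bound $q_{\beta,d}\leq(2d-1)\theta<1$ gives the finite-$\beta$ conclusions and keeps the rate at least $-\log((2d-1)\theta)>0$. (Write $q_{\beta,d}\leq(2d-1)\theta$ rather than a strict inequality, since equality holds when $\theta=0$; nothing else is affected.)
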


\section{Regular rectangles in the Dobrushin regime}

This section considers a box with all faces coupled to fixed exterior spins. Let
\begin{equation*}
\Lambda_{\bm L}=\prod_{i=1}^d\{1,\ldots,L_i\},\qquad
V_{\bm L}=\prod_{i=1}^dL_i,\qquad
\ell_{\bm L}=\min_iL_i,
\end{equation*}
let $E_{\bm L}$ be its internal edges, and let $B_{\bm L}$ be its oriented crossing edges. Thus
\begin{equation}\label{eq:rectangle-boundary-count}
b_{\bm L}:=|B_{\bm L}|=2V_{\bm L}\sum_{i=1}^dL_i^{-1}.
\end{equation}
Use one i.i.d. symmetric edge field $(J_e)$ on $\Z^d$, with $|J_e|\leq J_*$, for internal and crossing couplings; on $B_{\bm L}$ we also write $K_e:=J_e$ to distinguish the boundary role. The exterior spins are plus, and $\Delta F_{\bm L}$ denotes the simultaneous free-to-fixed correction.

\begin{lemma}[Dobrushin comparison]\label{lem:Dobrushin}
Let two finite-volume Ising specifications have a common influence matrix $C$ whose row sums are at most $\alpha<1$, with $C_{xy}=0$ for non-neighbors. If the specifications agree on the closed radius-$(R-1)$ neighborhood of the support of a local observable $f$, then
\begin{equation}\label{eq:Dobrushin-comparison}
|\mu(f)-\nu(f)|
\leq\frac{\alpha^R}{1-\alpha}\sum_x\delta_x(f),
\end{equation}
where $\delta_x(f)$ is the single-site oscillation.
\end{lemma}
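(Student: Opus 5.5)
We prove \cref{eq:Dobrushin-comparison} with the standard Dobrushin comparison scheme: a linear recursion for site-wise discrepancy weights, iterated $R$ times using the geometric locality of $C$. Let $\mu,\nu$ be the two finite-volume Gibbs measures and let $W$ be the set where their one-site kernels disagree. Write $\delta_x(g)=\sup\{|g(\sigma)-g(\sigma')|:\sigma=\sigma'\text{ off }x\}$ for the single-site oscillation. By assumption $\dist(\operatorname{supp}f,W)\geq R$.

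The first step is the one-site comparison. For every local $g$ and every site $x$, the DLR identity $\mu=\mu\gamma_x^\mu$ and $\nu=\nu\gamma_x^\nu$ lets us replace $g$ by $\gamma_x g$ under either measure. For $x\notin W$ we have $\gamma_x^\mu=\gamma_x^\nu$. Applying the kernel $\gamma_x$ contracts oscillations: $\delta_y(\gamma_x g)\leq \delta_y(g)+C_{xy}\delta_x(g)$ for $y\ne x$, and $\delta_x(\gamma_xg)=0$. The bound on the $C_{xy}$ term comes from \cref{lem:edge-influence}, with $C_{xy}$ the common influence coefficients.

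The second step iterates this. We sweep through the sites of $\Lambda\setminus W$ (periodic or Gauss--Seidel sweeps, passing to the limit), and at sites of $W$ we pay at most $\delta_x(g)$ once, since there the kernels differ but $|\mu(g)-\nu(g)|$ can be charged to the oscillation. This is the classical argument giving
\begin{equation*}
|\mu(f)-\nu(f)|\leq\sum_{x\in W}\bigl(\delta(f)D\bigr)_x,\qquad D=\sum_{n\geq0}C^n,
\end{equation*}
where $\delta(f)$ is the row vector of oscillations. Equivalently, the oscillation vector of $f$ evolves by $\delta\mapsto\delta C$ off $W$. This gives the familiar bound $|\mu(f)-\nu(f)|\leq\sum_{x,y}\delta_x(f)D_{xy}\,b_y$ with $b_y\leq1$ supported on $W$.

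The third step uses locality. Since $C_{xy}=0$ unless $x\sim y$, we get $(C^n)_{xy}=0$ whenever $n<\dist(x,y)$. Because $\dist(x,W)\geq R$ for $x\in\operatorname{supp}f$, only the terms with $n\geq R$ survive. The row-sum bound gives $\sum_y(C^n)_{xy}\leq\alpha^n$, so
\begin{equation*}
\sum_{y\in W}D_{xy}\leq\sum_{n\geq R}\alpha^n=\frac{\alpha^R}{1-\alpha}.
\end{equation*}
Summing over $x$ weighted by $\delta_x(f)$ yields \cref{eq:Dobrushin-comparison}.

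The main obstacle is the second step: making the Dobrushin recursion rigorous in finite volume while keeping the row-sum (rather than column-sum) orientation consistent with the stated hypothesis. We first try the estimate-vector formulation. Define $\rho_x=\sup$ over functions with unit oscillation at $x$ of the TV contribution, and show $\rho\leq\rho C+b$ entrywise. The finite matrix inequality then inverts to $\rho\leq b\sum_n C^n$, since the spectral radius of $C$ is at most $\alpha<1$. Once this inequality is in place, the locality cutoff in the third step is routine.
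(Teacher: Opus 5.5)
Your proposal is correct and matches the paper's proof. The paper invokes Dobrushin's comparison theorem to get $|\mu(f)-\nu(f)|\le\sum_{x,y}\delta_x(f)[(I-C)^{-1}]_{xy}b_y$, and then, exactly as in your third step, notes that $b_y=0$ within distance $R-1$ of the support, so the Neumann series starts at $j=R$ and $\sum_y(C^j)_{xy}b_y\le\alpha^j$.

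One small caution concerns your optional estimate-vector sketch of Step~2. With your convention $\delta_y(\gamma_xg)\le\delta_y(g)+C_{xy}\delta_x(g)$, an estimate vector should be one that is valid for all $g$ simultaneously and is started from $a\equiv1$. It then satisfies $a_x\le\sum_yC_{xy}a_y+b_x$, that is, $a\le Ca+b$. Your $\rho\le\rho C+b$, $\rho\le b\sum_nC^n$ is the transpose of this and would need column sums rather than the row sums in the hypothesis. This is harmless for the symmetric Ising matrix used later, and your function-side form $\delta\mapsto\delta C$ already has the right orientation.
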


\begin{proof}
Let $b_y$ be the one-site specification discrepancy. Dobrushin's comparison theorem \cite{Georgii2011} and agreement within distance $R-1$ give
\begin{align*}
|\mu(f)-\nu(f)|
&\leq\sum_x\delta_x(f)\sum_y[(I-C)^{-1}]_{xy}b_y
\\
&=\sum_x\delta_x(f)\sum_y\sum_{j\geq R}(C^j)_{xy}b_y
\\
&\leq\sum_x\delta_x(f)\sum_{j\geq R}\alpha^j
=\frac{\alpha^R}{1-\alpha}\sum_x\delta_x(f).
\end{align*}
\end{proof}

\begin{lemma}[Boundary-layer edge count]\label{lem:boundary-layer-count}
For $j\geq0$ define the inner rectangle
\begin{equation*}
\Lambda_{\bm L}^{(j)}
=\prod_{i=1}^d\{j+1,\ldots,L_i-j\},
\end{equation*}
with the convention that it is empty if one factor is empty, and let
$S_j=\Lambda_{\bm L}^{(j)}\setminus\Lambda_{\bm L}^{(j+1)}$.
Then $|S_j|\leq b_{\bm L}$.  If
$j(a)=\dist(a,\partial_{\rm in}\Lambda_{\bm L})$ for an internal edge $a$,
the number of internal edges satisfying $j(a)=j$ is at most
$2d b_{\bm L}$.
\end{lemma}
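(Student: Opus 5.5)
The plan is to prove both counts by projecting the layer $S_j$ onto the boundary $\partial_{\rm in}\Lambda_{\bm L}$ along coordinate directions. Every crossing edge is matched with a face point, and the layer sets are compared with faces of the original rectangle.

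For $|S_j|\leq b_{\bm L}$, first note that if $\Lambda_{\bm L}^{(j)}$ is empty then $S_j=\varnothing$ and there is nothing to prove. Otherwise, a point $x\in S_j$ lies in $\Lambda^{(j)}$ but not in $\Lambda^{(j+1)}$. Hence there is a coordinate $i$ with $x_i\in\{j+1,L_i-j\}$.

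For each $i$ and each choice of endpoint, the set of such $x$ has cardinality at most $\prod_{k\neq i}\max(L_k-2j,0)\leq V_{\bm L}/L_i$. This is the number of crossing edges through the corresponding face of $\Lambda_{\bm L}$. Summing over the $2d$ choices of $(i,\text{endpoint})$ gives
\[
|S_j|\leq\sum_{i=1}^d 2V_{\bm L}L_i^{-1}=b_{\bm L}
\]
by \cref{eq:rectangle-boundary-count}. Equivalently, the map $x\mapsto$ (the oriented crossing edge reached by moving $x$ straight outward in direction $\pm e_i$ for the first such face) is injective on each face piece. The only care needed is that overlapping faces at corners are counted at most once per piece, which only helps the upper bound.

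For the edge count, interpret $j(a)$ as the minimum over the endpoints $x$ of $a$ of $\dist(x,\partial_{\rm in}\Lambda_{\bm L})$, where $\dist(x,\partial_{\rm in}\Lambda)=\min_i\min(x_i-1,L_i-x_i)$, so that $x\in S_j$ exactly when this distance equals $j$. Then an internal edge $a$ with $j(a)=j$ has at least one endpoint in $S_j$. Each vertex has at most $2d$ incident edges, so the number of such edges is at most
\[
2d|S_j|\leq 2d\,b_{\bm L}.
\]
If instead $j(a)$ is meant as the distance of the edge midpoint or its maximum endpoint, the same argument applies, since some endpoint lies in $S_j$ or $S_{j-1}$. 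Charging each edge to the endpoint with smaller depth keeps the bound at $2d\,b_{\bm L}$ because each vertex is charged by at most $2d$ edges.

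The main (minor) obstacle is fixing the convention for $j(a)$ so that each counted edge has an endpoint in $S_j$. I would state the min-endpoint convention explicitly and then apply the vertex-degree charging.
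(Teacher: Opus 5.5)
Your proof is correct and is essentially the paper's argument: a union bound over the $2d$ faces of the inner rectangle $\Lambda_{\bm L}^{(j)}$, each containing at most $\prod_{k\ne i}L_k$ sites, followed by charging each internal edge with $j(a)=j$ to an endpoint in $S_j$. Your min-endpoint convention is exactly the paper's $j(a)=\dist(\{x,y\},\partial_{\rm in}\Lambda_{\bm L})$. One small fix concerns your aside on other conventions. Under a max-endpoint convention, charge each edge to its endpoint lying in $S_j$, not to the smaller-depth endpoint. The smaller-depth endpoint may lie in $S_{j-1}$, and that charging only gives $2d(|S_j|+|S_{j-1}|)\leq 4d\,b_{\bm L}$.
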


\begin{proof}
Every site of $S_j$ lies on one of the $2d$ faces of the inner rectangle, so
the union bound gives
\begin{align*}
|S_j|
&\leq2\sum_{i=1}^d\prod_{k\ne i}\max\{L_k-2j,0\}
\leq2\sum_{i=1}^d\prod_{k\ne i}L_k
=b_{\bm L}.
\end{align*}
If $j(a)=j$, at least one endpoint of $a$ belongs to $S_j$.  Each site is
incident to at most $2d$ lattice edges, hence the number of such internal
edges is at most $2d|S_j|\leq2db_{\bm L}$.
\end{proof}

\begin{theorem}[Regular-rectangle surface limit]\label{thm:Dobrushin-cube}
Let $d\geq2$ and assume
\begin{equation}\label{eq:Dobrushin-condition}
\alpha:=2d\tanh(\beta J_*)<1.
\end{equation}
There is a deterministic $\tau_{\rm Dob}(\beta)$ such that, along every sequence of rectangles with $\ell_{\bm L}\to\infty$ and for every $1\leq p<\infty$,
\begin{equation}\label{eq:cube-convergence}
\frac{\E\Delta F_{\bm L}}{b_{\bm L}}\longrightarrow\tau_{\rm Dob}(\beta),
\qquad
\frac{\Delta F_{\bm L}}{b_{\bm L}}\longrightarrow\tau_{\rm Dob}(\beta)
\quad\text{in }L^p.
\end{equation}
The second convergence is also almost sure along any sequence satisfying
$\sum_n\exp(-c b_{\bm L_n})<\infty$ for every $c>0$; in particular it is almost sure along cubes.
Let $\langle\cdot\rangle_{\Hh_d,r}$ be the unique half-space state with the bottom crossing bonds multiplied by $r$, and let $e_0$ be the bottom edge incident to $x_0$. Then
\begin{equation}\label{eq:Dob-tau}
\tau_{\rm Dob}(\beta)
=-\int_0^1\E\left[K_{e_0}\langle\sigma_{x_0}\rangle_{\Hh_d,r}\right]\d r.
\end{equation}
If $v=\Var(K_{e_0})$, then $-\beta v/2\leq\tau_{\rm Dob}(\beta)\leq0$. Moreover,
\begin{equation}\label{eq:Dob-tail}
\Pp(|\Delta F_{\bm L}-\E\Delta F_{\bm L}|\geq u)
\leq2\exp\left(-\frac{2u^2}{C_*b_{\bm L}}\right),
\end{equation}
where
\begin{equation}\label{eq:C-star}
C_*=(2J_*)^2+
\frac{32d(2J_*)^2}{(1-\alpha)^2(1-\alpha^2)}.
\end{equation}
This theorem uses all-site Dobrushin comparison and does not extend to unbounded Gaussian couplings. It is independent of the one-face BRM theorem.
All conclusions remain valid if the plus exterior spins are replaced by any deterministic, disorder-independent exterior pattern.
\end{theorem}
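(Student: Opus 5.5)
The plan is to write $\Delta F_{\bm L}$ as an integral of boundary responses, localize each response with Dobrushin comparison, and then obtain the mean limit, the concentration bound and the $L^p$/almost-sure limits from this representation.

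\textbf{Step 1 (interpolation).} I interpolate the crossing couplings linearly by $r\in[0,1]$, as in \cref{prop:finite-calculus}. This gives the finite-volume identity
\begin{equation*}
\E\Delta F_{\bm L}=-\sum_{e\in B_{\bm L}}\int_0^1\E\bigl[K_e\langle q_e\rangle_{\bm L,r}\bigr]\d r .
\end{equation*}

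\textbf{Step 2 (localizing the response).} The Dobrushin condition \cref{eq:Dobrushin-condition} holds uniformly in $r$ and in the disorder, because every single-site influence is at most $\tanh(\beta|J_e|)\leq\tanh(\beta J_*)$. Fix a crossing edge $e=(x,y)$ at distance $R$ from every other face, where $R$ is the distance from $x$ to faces other than the one crossed by $e$. Inside that radius the box specification agrees with the half-space specification whose bottom face is the face crossed by $e$. \Cref{lem:Dobrushin} then gives
\begin{equation*}
\bigl|\langle q_e\rangle_{\bm L,r}-\langle q_e\rangle_{\Hh,r}\bigr|\leq\frac{2\alpha^R}{1-\alpha}.
\end{equation*}
The half-space state is unique by Dobrushin's theorem, measurable and covariant. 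Its expected response $\E[K_{e_0}\langle\sigma_{x_0}\rangle_{\Hh,r}]$ is therefore the same for each of the $2d$ face orientations, by lattice symmetry of the i.i.d. field. This is where deterministic exterior patterns need care: for a non-constant pattern I would use spin-flip gauge invariance ($\sigma_x\mapsto\eta_y\sigma_x$ near the face together with $J\mapsto\pm J$) and the symmetry of the coupling law to reduce to plus boundary.

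\textbf{Step 3 (summing the errors).} Edges near lower-dimensional corners have small $R$. By \cref{lem:boundary-layer-count}-type counting, the number of crossing edges with $R=j$ is $O(b_{\bm L}\,j/\ell_{\bm L})$, and only $O(b_{\bm L}/\ell_{\bm L})$ edges lie in each layer. Summing $\alpha^R$ against these counts gives a total error of $o(b_{\bm L})$. Hence $\E\Delta F_{\bm L}/b_{\bm L}\to\tau_{\rm Dob}$, with $\tau_{\rm Dob}$ given by \cref{eq:Dob-tau}. The bounds $-\beta v/2\leq\tau_{\rm Dob}(\beta)\leq0$ follow from the independent-copy identity of \cref{prop:finite-calculus}, applied per edge in the half-space limit.

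\textbf{Step 4 (concentration), the main obstacle.} I regard $\Delta F_{\bm L}$ as a function of the bounded independent couplings and apply McDiarmid's inequality. Each crossing coordinate has oscillation at most $2J_*$. For an internal edge $a$, the derivative of $\Delta F_{\bm L}$ in $J_a$ is $\langle\sigma\sigma\rangle_{\rm free}-\langle\sigma\sigma\rangle_{\rm fix}$. Dobrushin comparison between the free and fixed specifications bounds this difference by $\tfrac{4\alpha^{j(a)}}{1-\alpha}$, since the two specifications agree up to depth $j(a)$. This bound holds uniformly along the path joining two values of $J_a$, so the oscillation of $\Delta F_{\bm L}$ in $J_a$ is at most $2J_*\cdot\tfrac{4\alpha^{j(a)}}{1-\alpha}$. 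Summing the squared oscillations over layers with the count $2db_{\bm L}$ from \cref{lem:boundary-layer-count} gives $\sum_a c_a^2\leq b_{\bm L}\,32d(2J_*)^2/\bigl((1-\alpha)^2(1-\alpha^2)\bigr)$. Adding the crossing contribution $b_{\bm L}(2J_*)^2$ yields $\sum_a c_a^2\leq C_*b_{\bm L}$. McDiarmid then gives \cref{eq:Dob-tail} with the exponent $-2u^2/(C_*b_{\bm L})$.

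\textbf{Step 5 ($L^p$ and almost-sure convergence).} Applying \cref{eq:Dob-tail} with $u=\varepsilon b_{\bm L}$ gives the bound $2e^{-2\varepsilon^2b_{\bm L}/C_*}$. Under the stated summability, Borel--Cantelli yields almost-sure convergence. $L^p$ convergence follows because $|\Delta F_{\bm L}|/b_{\bm L}\leq J_*$ is bounded.
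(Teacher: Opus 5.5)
Your proposal is correct and follows essentially the same route as the paper: linear interpolation on all crossing edges, Dobrushin comparison of each crossing-edge response with the half-space state, identification of the $2d$ faces by lattice symmetry and the symmetry of the coupling law, the independent-copy bounds for $\tau_{\rm Dob}(\beta)$, and McDiarmid's inequality with oscillations $8J_*\alpha^{j(a)}/(1-\alpha)$ summed via \cref{lem:boundary-layer-count}. The differences are only bookkeeping. You sum $\alpha^{R_e}$ over tangential layers instead of fixing $R=\lfloor c\log\ell_{\bm L}\rfloor$ and discarding a corner fraction, which is slightly sharper. You also get $L^p$ convergence from convergence in probability plus the bound $|\Delta F_{\bm L}|\leq J_*b_{\bm L}$, rather than by integrating the tail.
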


\begin{proof}
For the Ising specification one may take
$C_{xy}=\tanh(\beta J_*)\mathbf1_{\{x\sim y\}}$. Boundary couplings to prescribed exterior spins are one-site fields and do not enter $C$. The comparison estimate is uniform in the boundary amplitude, disorder realization, and exterior spins. It constructs the unique half-space state by exhaustion and proves measurability as a pointwise limit of finite-volume expectations.

Differentiate the simultaneous linear interpolation on all crossing edges:
\begin{equation}\label{eq:cube-linear}
\E\Delta F_{\bm L}
=-\int_0^1\sum_{e\in B_{\bm L}}\E[K_e\langle q_e\rangle_{\bm L,r}]\d r.
\end{equation}

Fix an oriented face. A lattice isometry maps its inward normal to the first coordinate direction and its incident crossing edges to bottom half-space edges. Couple the box and half-space disorders by identifying edge variables under this isometry. Consider a crossing edge $e$ whose incident vertex has tangential distance at least $R$ from every adjacent face. Since $R\leq\ell_{\bm L}/2$, the opposite face is also outside its radius-$(R-1)$ neighborhood. Condition the half-space state on the spins outside the image of $\Lambda_{\bm L}$ and denote the resulting finite-volume measure by $\nu_{\bm L,r}$. The box measure $\mu_{\bm L,r}$ and $\nu_{\bm L,r}$ are Gibbs measures on the same site set. Their one-site kernels coincide at every site except possibly the inner boundary adjacent to the faces other than the selected one: the couplings and the amplitude-$r$ fixed spins on the selected face have been identified, whereas conditioning supplies possibly different fields on the remaining faces.

Let
\begin{equation*}
b_y=\sup_\zeta
\left\|\mu_{\bm L,r,y}(\,\cdot\mid\zeta)
-\nu_{\bm L,r,y}(\,\cdot\mid\zeta)\right\|_{\TV}
\end{equation*}
be the discrepancy vector in Dobrushin's comparison theorem.  Then
$0\leq b_y\leq1$, and the geometric assumptions imply $b_y=0$ whenever $\dist(y,\operatorname{supp}q_e)\leq R-1$.

Thus the Neumann series for $(I-C)^{-1}b$ starts at length $R$. Since
$\sum_x\delta_x(q_e)=2$ for $q_e=\sigma_x\psi_y$, \cref{lem:Dobrushin}
first gives
\begin{equation*}
|\mu_{\bm L,r}(q_e)-\nu_{\bm L,r}(q_e)|
\leq\frac{2\alpha^R}{1-\alpha}.
\end{equation*}
Multiplication by the identified coupling $|K_e|\leq J_*$ and averaging
preserve this estimate.  Weakening the constant from $2J_*$ to $4J_*$ gives
the safe uniform bound used below,
\begin{equation*}
\left|\E[K_e\langle q_e\rangle_{\bm L,r}]
-\E[K_{e_0}\langle\sigma_{x_0}\rangle_{\Hh_d,r}]\right|
\leq\frac{4J_*}{1-\alpha}\alpha^R.
\end{equation*}
At most a fraction $2(d-1)R/\ell_{\bm L}$ of each face is excluded. Coordinate and reflection invariance identify the same half-space response on all $2d$ oriented faces. Since every integrand is bounded by $J_*$,
\begin{equation}\label{eq:face-error}
\sup_{r\in[0,1]}
\left|\frac1{b_{\bm L}}\sum_{e\in B_{\bm L}}\E[K_e\langle q_e\rangle_{\bm L,r}]
-\E[K_{e_0}\langle\sigma_{x_0}\rangle_{\Hh_d,r}]\right|
\leq4(d-1)J_*\frac{R}{\ell_{\bm L}}
+\frac{4J_*}{1-\alpha}\alpha^R.
\end{equation}
For $0<\alpha<1$, take $R=\lfloor c\log\ell_{\bm L}\rfloor$ with $c>0$; for $\alpha=0$, take $R=1$. Then
\begin{align*}
\frac{R}{\ell_{\bm L}}
&\longrightarrow0, \quad \alpha^R=\ell_{\bm L}^{-c|\log\alpha|+o(1)}\longrightarrow0,\\
\left|\frac{\E\Delta F_{\bm L}}{b_{\bm L}}
-\tau_{\rm Dob}(\beta)\right|
&\leq\int_0^1
\left|\frac1{b_{\bm L}}\sum_{e\in B_{\bm L}}
\E[K_e\langle q_e\rangle_{\bm L,r}]-\E[K_{e_0}\langle\sigma_{x_0}\rangle_{\Hh_d,r}]\right|\d r\\
&\leq4(d-1)J_*\frac{R}{\ell_{\bm L}}
+\frac{4J_*}{1-\alpha}\alpha^R
\longrightarrow0.
\end{align*}
The independent-copy identity \cref{eq:copy-identity}, passed to the half-space by bounded convergence, yields
\begin{align*}
0
&\leq\E[K_{e_0}\langle\sigma_{x_0}\rangle_{\Hh_d,r}]
\leq\beta rv,\\
-\frac{\beta v}{2}
&=-\int_0^1\beta rv\,\d r
\leq\tau_{\rm Dob}(\beta)
\leq 0.
\end{align*}
For a deterministic exterior pattern $\psi$, replace each crossing coupling by $\psi_eK_e$. Symmetry and independence preserve its joint law, so the same argument and constants apply.

It remains to improve bulk-order concentration to surface order. Define
\begin{equation*}
\partial_{\rm in}\Lambda_{\bm L}
=\{x\in\Lambda_{\bm L}:\dist(x,\Lambda_{\bm L}^c)=1\},
\qquad
j(a)=\dist(\{x,y\},\partial_{\rm in}\Lambda_{\bm L})
\end{equation*}
for an internal edge $a=\{x,y\}$. Resample $J_a$ to an independent value $J_a'$ and let $I_a$ be the interval with endpoints $J_a$ and $J_a'$. The derivative of the correction is
\begin{align*}
\left.\frac{\partial}{\partial J_a}\Delta F_{\bm L}
\right|_{J_a=s}
&=-\langle\sigma_x\sigma_y\rangle_{\fixb,J_a=s}
+\langle\sigma_x\sigma_y\rangle_{\free,J_a=s},\\
\left|\left.\frac{\partial}{\partial J_a}\Delta F_{\bm L}
\right|_{J_a=s}\right|
&\leq\frac{\alpha^{j(a)}}{1-\alpha}
\sum_z\delta_z(\sigma_x\sigma_y)
=\frac{4\alpha^{j(a)}}{1-\alpha},\\
c_a
&\leq |J_a-J_a'|\sup_{s\in I_a}
\left|\left.\frac{\partial}{\partial J_a}\Delta F_{\bm L}
\right|_{J_a=s}\right|
\leq\frac{8J_*}{1-\alpha}\alpha^{j(a)}.
\end{align*}
For $j(a)=0$, the discrepancy vector is supported on
$\partial_{\rm in}\Lambda_{\bm L}$ and satisfies $0\leq b_y\leq1$.
Using the full Neumann series, including its $C^0$ term, gives
\begin{equation*}
\left|\langle\sigma_x\sigma_y\rangle_{\fixb,J_a=s}
-\langle\sigma_x\sigma_y\rangle_{\free,J_a=s}\right|
\leq\sum_z\delta_z(\sigma_x\sigma_y)
\sum_{n\geq0}\alpha^n
=\frac4{1-\alpha},
\end{equation*}
so the displayed bound for $c_a$ also holds at depth zero.  A
crossing-bond oscillation is at most $2J_*$. By
\cref{lem:boundary-layer-count}, at a fixed depth there are at most
$2db_{\bm L}$ internal edges. Thus
\begin{align*}
\sum_i c_i^2
&\leq b_{\bm L}(2J_*)^2
+2db_{\bm L}\sum_{j\geq0}
\left(\frac{8J_*}{1-\alpha}\alpha^j\right)^2\\
&=b_{\bm L}\left[(2J_*)^2+
\frac{32d(2J_*)^2}{(1-\alpha)^2(1-\alpha^2)}\right]
=C_*b_{\bm L}.
\end{align*}
McDiarmid's inequality and tail integration give
\begin{align*}
\Pp(|\Delta F_{\bm L}-\E\Delta F_{\bm L}|\geq u)
&\leq2\exp\left(-\frac{2u^2}{\sum_i c_i^2}\right)
\leq2\exp\left(-\frac{2u^2}{C_*b_{\bm L}}\right),\\
\E|\Delta F_{\bm L}-\E\Delta F_{\bm L}|^p
&=p\int_0^\infty u^{p-1}
\Pp(|\Delta F_{\bm L}-\E\Delta F_{\bm L}|\geq u)\d u
\leq C_{p,*}b_{\bm L}^{p/2},\\
\E\left|\frac{\Delta F_{\bm L}-\E\Delta F_{\bm L}}
{b_{\bm L}}\right|^p
&\leq C_{p,*}b_{\bm L}^{-p/2}\longrightarrow0.
\end{align*}
Finally, with $u=\varepsilon b_{\bm L_n}$,
\begin{align*}
\sum_n\Pp\left(
|\Delta F_{\bm L_n}-\E\Delta F_{\bm L_n}|
\geq\varepsilon b_{\bm L_n}\right)
\leq2\sum_n\exp\left(-\frac{2\varepsilon^2}{C_*}b_{\bm L_n}\right)<\infty,
\end{align*}
which proves the almost-sure conclusion by Borel--Cantelli.
\end{proof}

\section{Counterexamples and seam fluctuations}

We now record why the positive results require regular geometry, boundary averaging, and a precise choice of observable. The first example is a $d\geq2$ anisotropic model; the next two temporarily work in $d=1$.

\begin{proposition}[Independent boundary-noise central limit theorem]\label{prop:chain-CLT}
Let $d\geq2$ and consider $L^{d-1}$ independent zero-field Ising chains, with one i.i.d. boundary bond $K_a$ at one end of each chain. Use the first $L^{d-1}$ variables of one infinite i.i.d. sequence as $L$ varies. Then
\begin{equation}\label{eq:chain-CLT}
\Delta F_L=-\frac1\beta\sum_{a=1}^{L^{d-1}}
\log\cosh(\beta K_a).
\end{equation}
If $W=\log\cosh(\beta K)$ has nonzero finite variance, then
\begin{align}
\frac{\Delta F_L}{L^{d-1}}
&\stackrel{a.s.}{\longrightarrow}-\frac1\beta\E W
\label{eq:chain-LLN}\\
\frac{\Delta F_L-\E\Delta F_L}{L^{(d-1)/2}}
&\Longrightarrow N\left(0,\frac1{\beta^2}\Var(W)\right).
\label{eq:chain-central-limit}
\end{align}
\end{proposition}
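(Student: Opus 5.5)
The plan has two parts: first compute $\Delta F_L$ exactly for a single chain, then apply the classical i.i.d. limit theorems to the resulting sum.

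\textbf{Step 1: one chain.} Since the chains are independent, the partition function factorizes over them. Hence $F_L$ and $\Delta F_L$ are sums of single-chain corrections. For one chain, the free Hamiltonian $H_0$ is flip-invariant, so \cref{eq:cosh-ratio} gives
\begin{equation*}
\frac{Z(1)}{Z(0)}=\langle\cosh(\beta K_a\sigma_1)\rangle_0 .
\end{equation*}
Because $\sigma_1^2=1$, we have $\cosh(\beta K_a\sigma_1)=\cosh(\beta K_a)$ pointwise. The single-chain correction is therefore exactly $-\beta^{-1}\log\cosh(\beta K_a)$, whatever the internal couplings of that chain are. Summing over $a=1,\dots,L^{d-1}$ yields \cref{eq:chain-CLT}. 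Two features of this formula matter for the next step: it depends only on the boundary variables, and the summands $W_a=\log\cosh(\beta K_a)$ are i.i.d.

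\textbf{Step 2: law of large numbers.} Put $N=L^{d-1}$. Finite variance implies $\E|W|<\infty$. The strong law of large numbers along the full sequence gives $N^{-1}\sum_{a\le N}W_a\to\E W$ almost surely. The indices $N=L^{d-1}$ form a subsequence of the positive integers, and the proposition specifies a single coupled sequence of variables, so this convergence passes to $L\to\infty$ and gives \cref{eq:chain-LLN}.

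\textbf{Step 3: central limit theorem.} The classical Lindeberg--L\'evy theorem applies to the i.i.d. sequence with $0<\Var W<\infty$:
\begin{equation*}
\frac{\sum_{a\le N}(W_a-\E W)}{\sqrt N}\Longrightarrow N(0,\Var W).
\end{equation*}
Multiplying by $-1/\beta$ and using that the centred Gaussian law is symmetric gives \cref{eq:chain-central-limit}, with variance $\beta^{-2}\Var W$. Because $N$ runs along a subsequence, distributional convergence along it is automatic.

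The only step needing care is Step 1, specifically the pointwise identity $\cosh(\beta K\sigma)=\cosh(\beta K)$. This identity is what removes every dependence on the internal disorder; without it the summands would not be i.i.d. and Steps 2 and 3 would not apply. Beyond that, the argument is a direct application of standard limit theorems.
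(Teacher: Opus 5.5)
Your proposal is correct and follows essentially the same route as the paper. The paper also factorizes over chains and evaluates each ratio as $Z_a^{\fixb}/Z_a^{\free}=\tfrac12\sum_{s=\pm1}e^{\beta K_a s}=\cosh(\beta K_a)$ from the uniform (flip-symmetric) endpoint marginal, which is exactly your use of \cref{eq:cosh-ratio} together with the evenness of $\cosh$, and then applies the strong law and the classical CLT. Your explicit remark that $N=L^{d-1}$ runs along a subsequence of one coupled i.i.d. sequence is a useful detail that the paper leaves implicit.
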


\begin{proof}
The endpoint marginal of every free zero-field chain is uniform. For the $a$th chain,
\begin{align*}
\frac{Z_a^{\fixb}}{Z_a^{\free}}
&=\frac12\sum_{s=\pm1}e^{\beta K_as}
=\cosh(\beta K_a),\\
\frac{Z_L^{\fixb}}{Z_L^{\free}}
&=\prod_{a=1}^{L^{d-1}}\cosh(\beta K_a),\\
\Delta F_L
&=-\frac1\beta\log\frac{Z_L^{\fixb}}{Z_L^{\free}}
=-\frac1\beta\sum_{a=1}^{L^{d-1}}W_a.
\end{align*}
The strong law and the classical central limit theorem give
\begin{align*}
\frac1{L^{d-1}}\sum_{a=1}^{L^{d-1}}W_a
&\stackrel{a.s.}{\longrightarrow}\E W,\\
\frac1{L^{(d-1)/2}}\sum_{a=1}^{L^{d-1}}(W_a-\E W)
&\Longrightarrow N(0,\Var(W)).
\end{align*}
These are independent local boundary fluctuations, not a macroscopic domain wall.
\end{proof}

\begin{theorem}[One-dimensional boundary counterexamples]\label{thm:one-dimensional}
There is a bounded i.i.d. symmetric nearest-neighbor coupling law and a nested van Hove exhaustion of $\Z$ for which the expected free-to-fixed correction per crossing edge has two distinct subsequential limits. There is also a bounded i.i.d. law for which, along centered intervals in one common environment, the expected correction per crossing edge converges but the sample correction per crossing edge does not converge in probability.
\end{theorem}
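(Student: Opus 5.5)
For both parts we use the exact one-dimensional free-chain computation. On a finite interval $I=\{a,\ldots,b\}$ with free ends, the endpoint spins are uniform and $\langle\sigma_a\sigma_b\rangle_{\free}=t_I:=\prod_{e\subset I}\tanh(\beta J_e)$. With plus exterior spins on the two crossing edges, with couplings $K_\ell,K_r$, the cosh-ratio identity \cref{eq:cosh-ratio} gives
\begin{equation*}
\frac{Z_I^{\fixb}}{Z_I^{\free}}
=\cosh(\beta K_\ell)\cosh(\beta K_r)\bigl(1+t_I\tanh(\beta K_\ell)\tanh(\beta K_r)\bigr)
\end{equation*}
when $|I|\geq2$. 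For a singleton $I=\{x\}$ the two crossing edges meet at one spin, and the ratio is $\cosh(\beta(K_\ell+K_r))$.

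For a finite set $\Lambda\subset\Z$, the free-to-fixed correction is additive over the connected components of $\Lambda$. This holds because exterior spins are fixed, so gaps decouple the components. Hence $\E\Delta F_\Lambda$ is a sum of component contributions, and for a component interval these depend only on its length. For a long interval of length $n$, the term containing $t_I$ has expectation $O((\E\tanh\beta|J|)^{n-1})\to0$. The contribution per crossing edge is therefore $a_\infty+o(1)$ with $a_\infty=-\beta^{-1}\E\log\cosh(\beta K)$. For singletons it is $a_1=-(2\beta)^{-1}\E\log\cosh(\beta(K+K'))$ with $K,K'$ independent.

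For the first claim we take the bounded symmetric law $J=\pm1$ with probability $1/2$ each. Then $K+K'$ is $0$ or $\pm2$, so $a_1=-(2\beta)^{-1}\cdot\frac12\log\cosh 2\beta$. This differs from $a_\infty=-\beta^{-1}\log\cosh\beta$ because $\frac14\log\cosh2\beta\neq\log\cosh\beta$; strict convexity of $x\mapsto\log\cosh\sqrt{x}$ type considerations, or a direct small-$\beta$ expansion, separates them for every $\beta>0$.

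Next we build a nested exhaustion $\Lambda_1\subset\Lambda_2\subset\cdots$ in two kinds of stages. \emph{Even stages} add one long central interval whose length makes the van Hove ratio $|\partial\Lambda_k|/|\Lambda_k|$ small. \emph{Odd stages} add, far from everything already chosen and with gaps of at least two sites, a family of isolated singletons. We tune the number of singletons so that they carry a fraction $\ge1-\varepsilon_k$ of all crossing edges. That number is at most a small multiple of the current volume, so the van Hove property survives. At the next even stage we add a long interval that swallows all previous singletons together with a large stretch, so that the singleton fraction of crossing edges drops below $\varepsilon_k$. Along even stages the expected correction per crossing edge tends to $a_\infty$, and along odd stages it tends to $a_1$. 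The only bookkeeping point is to keep the boundary count negligible relative to the volume while letting the singleton share of the boundary oscillate. This works because the boundary is $o(\text{volume})$ in either type of stage.

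For the second claim we use centered intervals $I_n=\{-n,\ldots,n\}$ in one environment, with a bounded law such that $W=\log\cosh(\beta K)$ is nondegenerate, for example $K\in\{0,1\}$ with equal probability. The correction per crossing edge is
\begin{equation*}
\frac{\Delta F_{I_n}}2=-\frac1{2\beta}\bigl(W_{-n-1}+W_{n+1}\bigr)+R_n,
\qquad |R_n|\leq C_\beta t_{I_n},
\end{equation*}
where $W_{\pm(n+1)}$ come from the two crossing edges and $R_n$ comes from the $\log(1+t\tanh\tanh)$ term. Here $t_{I_n}\to0$ almost surely, and $\E|R_n|\to0$ by bounded convergence. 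Thus the expectations converge to $-\beta^{-1}\E W$.

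For the sample correction, the main term is a function of the two fresh i.i.d. variables at positions $\pm(n+1)$, so its law is the same nondegenerate law for every $n$. We show it is not Cauchy in probability by comparing indices $n$ and $m$ with $|n-m|\geq2$. The four variables involved are independent, so $\Pp\bigl(|X_n-X_m|>\delta\bigr)$ is bounded below by a fixed positive constant for small $\delta$, uniformly in such $n,m$. Together with $R_n\to0$, this rules out convergence in probability. The main obstacle is the first construction, namely verifying that the nested van Hove exhaustion genuinely realizes both boundary compositions. The per-edge limits themselves follow directly from the exact cosh formulas above.
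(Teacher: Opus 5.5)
Your proposal is correct and follows essentially the paper's proof. It uses the exact one-dimensional cosh formulas, factorization over components, and the interval and isolated-site constants $-\beta^{-1}\log\cosh\beta$ and $-(4\beta)^{-1}\log\cosh 2\beta$, separated by a nested exhaustion that alternates long intervals with an interval plus $o(\text{volume})$ singletons; the second claim rests on the i.i.d. nondegenerate two-edge boundary terms. The one loose step is separating the two constants: $x\mapsto\log\cosh\sqrt{x}$ is concave, not convex, and a small-$\beta$ expansion alone does not cover every $\beta$. The identity $\cosh^4\beta-\cosh 2\beta=\sinh^4\beta>0$, which the paper uses, settles the separation for all $\beta>0$.
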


The exhaustion used in the first assertion is summarized in \cref{fig:van-hove-1d}.

\begin{proof}
Take i.i.d. couplings with values $\pm K$, fix the exterior spins plus, and put $x=\beta K>0$. For an interval of $n$ sites, the high-temperature expansion gives
\begin{align}
Z_n^{+}
&=2^n\prod_{i=0}^n\cosh(\beta K_i)
\left(1+\prod_{i=0}^nT_i\right),
\notag\\
Z_n^{\free}
&=2^n\prod_{i=1}^{n-1}\cosh(\beta K_i),
\notag\\
\log\frac{Z_n^{+}}{Z_n^{\free}}
&=X_0+X_n+\log\left(1+\prod_{i=0}^nT_i\right),
\label{eq:chain-ratio}
\end{align}
where $X_i=\log\cosh(\beta K_i), T_i=\tanh(\beta K_i)$.\\
With $t=\tanh x<1$,
\begin{align*}
\left|\log\left(1+\prod_{i=0}^nT_i\right)\right|
&\leq\frac{t^{n+1}}{1-t^{n+1}}
\longrightarrow0,\\
\kappa_{\rm int}
&=\lim_{n\to\infty}\frac12\E\log\frac{Z_n^+}{Z_n^{\free}}
=\log\cosh x.
\end{align*}
For one isolated spin,
\begin{align*}
\frac{Z_{\rm iso}^{+}}{Z_{\rm iso}^{\free}}
&=\cosh(\beta(K_1+K_2)),\\
\kappa_{\rm iso}
&=\frac12\E\log\cosh(\beta(K_1+K_2))
=\frac14\log\cosh(2x).
\end{align*}
The constants differ because
\begin{align*}
4\kappa_{\rm int}-4\kappa_{\rm iso}
&=\log\frac{\cosh^4x}{\cosh(2x)}>0,
&\cosh^4x-\cosh(2x)
&=(\cosh^2x-1)^2>0.
\end{align*}

Choose intervals $I_j=[-N_j,N_j]\cap\Z$. After $I_j$, adjoin
$M_j=\lfloor\sqrt{|I_j|}\rfloor$ isolated sites, mutually separated by three lattice units, to obtain $A_j$. Choose $N_{j+1}$ so that $I_{j+1}$ contains $A_j$. Then
$I_1\subset A_1\subset I_2\subset A_2\subset\cdots$ exhausts $\Z$. Moreover,
\begin{align*}
|A_j|&=|I_j|+M_j, \quad|B(I_j)|=2,\quad |B(A_j)|=2+2M_j,\\
\frac{|B(A_j)|}{|A_j|}&=\frac{2+2M_j}{|I_j|+M_j}\longrightarrow0,\quad \frac{M_j}{|I_j|}\longrightarrow 0, (j\longrightarrow +\infty). 
\end{align*}
The same estimate holds for every fixed-width vertex boundary. Factorization over the interval component and the isolated sites gives
\begin{align*}
\frac{\E\Delta F_{I_j}}{|B(I_j)|}
&\longrightarrow-\frac{\kappa_{\rm int}}{\beta}, (j\longrightarrow +\infty).\\
\frac{\E\Delta F_{A_j}}{|B(A_j)|}
&=-\frac1\beta
\frac{2\kappa_{\rm int}+2M_j\kappa_{\rm iso}+o(1)}{2+2M_j}
\longrightarrow-\frac{\kappa_{\rm iso}}{\beta}. (j\longrightarrow +\infty).
\end{align*}
Taking $x$ small places the example inside the one-dimensional Dobrushin regime.

For the second assertion, let $|K|$ take two bounded positive values so that $X=\log\cosh(\beta K)$ is nondegenerate. On $[-n,n]$, \cref{eq:chain-ratio} gives, uniformly in the environment,
\begin{align*}
\frac{\Delta F_{[-n,n]}}{|B([-n,n])|}
=-\frac1{2\beta}(X_{-n-1}+X_n)+o(1)
=:Y_n+o(1).
\end{align*}
The variables $(Y_n)$ are i.i.d. and nondegenerate. For independent copies $Y,Y'$, choose $\varepsilon>0$ with
$p_\varepsilon:=\Pp(|Y-Y'|>3\varepsilon)>0$. For distinct sufficiently large $n,m$,
\begin{align*}
\Pp\left(\left|
\frac{\Delta F_{[-n,n]}}2-
\frac{\Delta F_{[-m,m]}}2\right|>\varepsilon\right)
\geq\Pp(|Y_n-Y_m|>3\varepsilon)
=p_\varepsilon>0.
\end{align*}
Thus the sequence is not Cauchy in probability, although its expectations converge. The two-edge boundary provides no sample averaging.
\end{proof}

\begin{figure}[t]
\centering
\includegraphics[width=.97\linewidth]{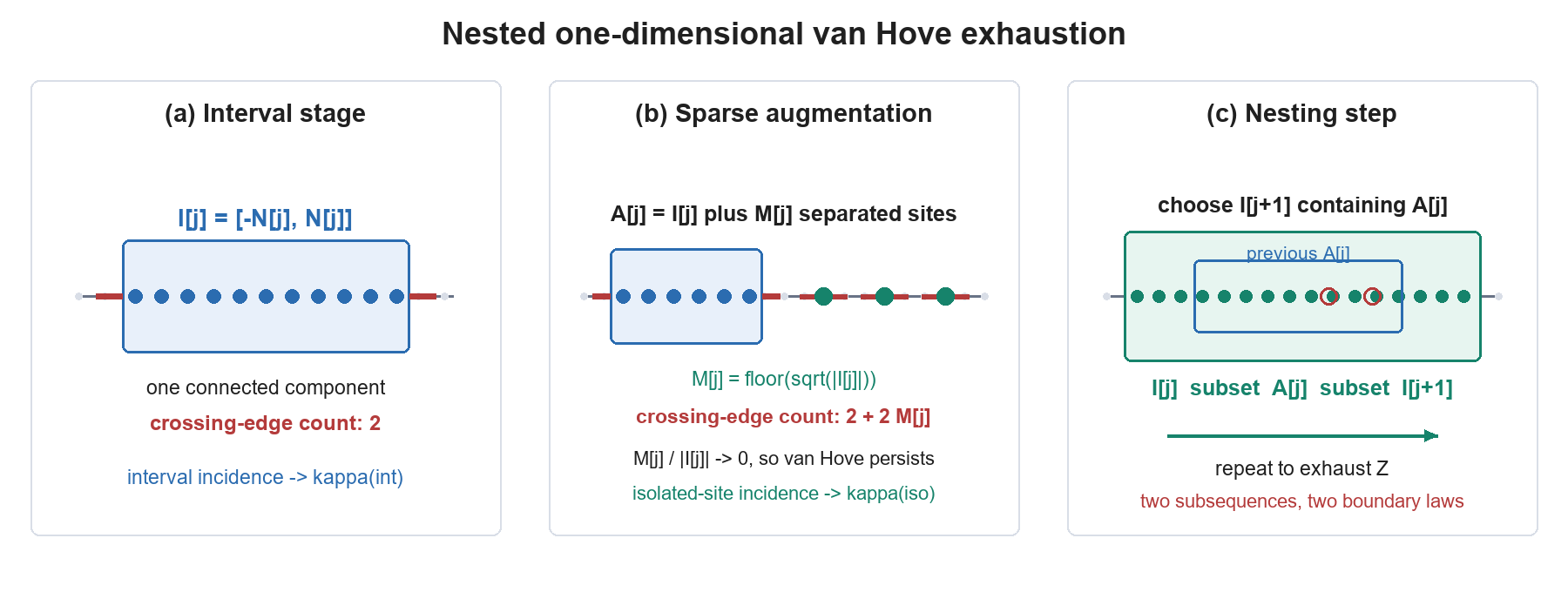}
\caption{The nested one-dimensional exhaustion in \cref{thm:one-dimensional}. An interval $I_j$ has two crossing edges. The augmented set $A_j$ adds $M_j=\lfloor\sqrt{|I_j|}\rfloor$ mutually separated isolated sites, so $M_j/|I_j|\to0$ while the isolated sites dominate the crossing-edge count. Choosing $I_{j+1}\supset A_j$ produces a nested van Hove exhaustion whose interval and augmented subsequences sample different boundary-incidence laws.}
\label{fig:van-hove-1d}
\end{figure}

The mechanism is microscopic boundary geometry rather than frustration. In a factorized model with no internal interactions, if $B(\Lambda)$ is the set of crossing edges and $\psi$ is the exterior pattern, then exactly
\begin{equation}\label{eq:factorized-boundary}
\Delta F_\Lambda=-\frac1\beta\sum_{x\in\Lambda}
\log\cosh\left(\beta\sum_{e=(x,y)\in B(\Lambda)}K_e\psi_y\right).
\end{equation}
On a flat face, one crossing edge meets almost every boundary vertex. A microscopic staircase or a dilute collection of isolated vertices can have a positive boundary density of sites incident to two or more crossing edges. Normalization by $|B(\Lambda)|$ then averages different local laws. The van Hove condition controls boundary size relative to volume but does not determine this incidence distribution.

\begin{proposition}[Finite-temperature ring twist]\label{prop:ring-twist}
For a one-dimensional ring, let $D_N=F_N^{\rm aper}-F_N^{\rm per}$ and
$Q_N=\prod_{i=1}^N\tanh(\beta J_i)$. At every finite $\beta$,
\begin{equation}\label{eq:ring-twist}
D_N=\frac1\beta\log\frac{1+Q_N}{1-Q_N}
=\frac2\beta\operatorname{artanh}Q_N.
\end{equation}
If $\E|\log|\tanh(\beta J)||<\infty$ and
$\E\log|\tanh(\beta J)|<0$, then $|D_N|$ decays exponentially almost surely. If $\Pp(J=0)>0$, then $D_N$ is eventually zero almost surely.
\end{proposition}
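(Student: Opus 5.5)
The plan is to compute both partition functions exactly by the high-temperature (transfer-matrix) expansion on the ring, and then read off the asymptotics from the resulting closed form. Write the periodic Hamiltonian as $-\sum_{i=1}^NJ_i\sigma_i\sigma_{i+1}$ with $\sigma_{N+1}=\sigma_1$. The antiperiodic Hamiltonian reverses the sign of one seam coupling, say $J_N\mapsto-J_N$. Expanding $e^{\beta J_i\sigma_i\sigma_{i+1}}=\cosh(\beta J_i)(1+T_i\sigma_i\sigma_{i+1})$ with $T_i=\tanh(\beta J_i)$ and summing over spins, only the empty edge set and the full cycle survive. This gives
\begin{equation*}
Z_N^{\rm per}=2^N\prod_{i=1}^N\cosh(\beta J_i)\,(1+Q_N),
\qquad
Z_N^{\rm aper}=2^N\prod_{i=1}^N\cosh(\beta J_i)\,(1-Q_N),
\end{equation*}
since flipping the seam sign changes $T_N$ to $-T_N$ and leaves $\cosh$ unchanged. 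At finite $\beta$ each $|T_i|<1$, so $|Q_N|<1$ and both factors are positive. Therefore
\begin{equation*}
D_N=-\tfrac1\beta\log Z^{\rm aper}+\tfrac1\beta\log Z^{\rm per}=\tfrac1\beta\log\frac{1+Q_N}{1-Q_N}=\tfrac2\beta\operatorname{artanh}Q_N .
\end{equation*}

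For the decay assertion, I would apply the strong law of large numbers to the i.i.d. integrable variables $\log|T_i|$:
\begin{equation*}
\frac1N\log|Q_N|=\frac1N\sum_{i=1}^N\log|T_i|\to\E\log|\tanh(\beta J)|=:-\lambda<0
\end{equation*}
almost surely. The integrability hypothesis forces $\Pp(J=0)=0$, so every logarithm is finite. Hence $|Q_N|\le e^{-\lambda N/2}$ for all large $N$, almost surely. Since $\operatorname{artanh}$ satisfies $|\operatorname{artanh}u|\le|u|/(1-|u|)\le2|u|$ for $|u|\le1/2$, we get $|D_N|\le(4/\beta)e^{-\lambda N/2}$ eventually, which is exponential decay.

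For the dilute case, I would use that $\Pp(J=0)>0$ and Borel--Cantelli (or simply the i.i.d. property) to find almost surely some index $i_0$ with $J_{i_0}=0$. For every $N\ge i_0$ the product then contains $T_{i_0}=0$, so $Q_N=0$ and $D_N=0$. This requires $J_1,J_2,\dots$ to be the first $N$ variables of one i.i.d. sequence, the same nested convention used in \cref{prop:chain-CLT}. I would state that convention explicitly, and handle the seam placement so that the flipped bond is among the first $N$ and the product still includes $T_{i_0}$; flipping any bond only changes the sign of $Q_N$.

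The only step needing care is this modelling convention: how the ring varies with $N$ on a common environment, and the observation that the finite-moment hypothesis excludes $J=0$ atoms in the first regime. The algebra itself is routine.
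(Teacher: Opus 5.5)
Your proposal is correct and follows the paper's proof. You use the same high-temperature expansion giving $Z_N^{\rm per}=C_N(1+Q_N)$ and $Z_N^{\rm aper}=C_N(1-Q_N)$, the strong law for $\log|\tanh(\beta J_i)|$, and the observation that one zero coupling forces $Q_N=0$ for all larger $N$. The only differences are cosmetic: the paper uses $\operatorname{artanh}q/q\to1$ to obtain the sharp rate $\frac1N\log|D_N|\to\E\log|\tanh(\beta J)|$, whereas your bound $|\operatorname{artanh}u|\le2|u|$ gives an equally sufficient rate of half that size, and your explicit statement of the nested-environment convention is a useful addition.
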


\begin{proof}
The one-dimensional high-temperature expansion gives
\begin{align*}
Z_N^{\rm per}=C_N(1+Q_N),
\qquad
Z_N^{\rm aper}=C_N(1-Q_N),\\
D_N
=-\frac1\beta\log Z_N^{\rm aper}
+\frac1\beta\log Z_N^{\rm per}
=\frac1\beta\log\frac{1+Q_N}{1-Q_N}.
\end{align*}
Put $\lambda=\E\log|\tanh(\beta J)|<0$. The strong law and $\operatorname{artanh}q/q\to1$ give
\begin{align*}
\frac1N\log|Q_N|
&=\frac1N\sum_{i=1}^N\log|\tanh(\beta J_i)|
\stackrel{a.s.}{\longrightarrow}\lambda,\\
\frac1N\log|D_N|
&=\frac1N\log|Q_N|+o(1)
\stackrel{a.s.}{\longrightarrow}\lambda<0.
\end{align*}
If $J_i=0$ for some $i$, then $Q_N=0$ and $D_N=0$ for every $N\geq i$.
\end{proof}

For a seam $S$, let $\omega$ denote all non-seam disorder and let $y=(y_e)_{e\in S}$ be the seam variables. If $T$ changes every $y_e$ to $-y_e$, define the same-disorder seam-flip difference
\begin{equation}\label{eq:seam-D}
D(\omega,y)=F(\omega,Ty)-F(\omega,y).
\end{equation}

\begin{theorem}[Seam variance bound]\label{thm:seam}
Conditionally on the non-seam disorder $\omega$, assume that the coordinates $(y_e)_{e\in S}$ are independent, have conditional variance equal to the deterministic value $v$ almost surely, and have a joint conditional law invariant under the simultaneous sign flip $T$. Then
\begin{equation}\label{eq:seam-bounds}
\E[D\mid\omega]=0,
\qquad
\E D^2\leq4v|S|,
\qquad
|D|\leq2\sum_{e\in S}|y_e|.
\end{equation}
If, conditionally on $\omega$, the seam coordinates are independent centered Gaussians of variance $v$, then
\begin{equation}\label{eq:seam-tail}
\Pp(|D|\geq u)\leq2\exp\left(-\frac{u^2}{8v|S|}\right).
\end{equation}
The bounds are uniform in $\beta$ and also hold for ground-state energy differences. If $|S|=L^{d-1+o(1)}$ and $(\E D^2)^{1/2}=L^{\theta_{\rm rms}+o(1)}$, then $\theta_{\rm rms}\leq(d-1)/2$. No matching lower bound or limiting distribution is asserted.
\end{theorem}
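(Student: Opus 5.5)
Conditionally on $\omega$, $F(\omega,\cdot)$ is a function of the seam vector $y$ alone. The plan is to treat $D=F\circ T-F$ as a function of independent coordinates, use the flip symmetry to kill its conditional mean, and use Efron--Stein (or the Gaussian Poincaré inequality) coordinatewise.

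\textbf{Pointwise bound.} The seam terms enter the Hamiltonian as $-\sum_{e\in S}y_e\sigma_{x_e}\sigma_{x'_e}$. Changing $y$ to $Ty$ changes $H$ by at most $2\sum_{e\in S}|y_e|$ uniformly in $\sigma$. The one-bond comparison used in \cref{prop:finite-calculus} (for $-\beta^{-1}\log\sum e^{-\beta H}$, or for minima at $\beta=\infty$) then gives $|D|\leq2\sum_{e\in S}|y_e|$.

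\textbf{Zero mean.} $T$-invariance of the conditional law of $y$ gives $\E[F(\omega,Ty)\mid\omega]=\E[F(\omega,y)\mid\omega]$. Integrability follows from the pointwise bound together with $F(\omega,0)$ being $\omega$-measurable, so $\E[D\mid\omega]=0$.

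\textbf{Second moment.} Conditional on $\omega$, $D$ is Lipschitz in each coordinate $y_e$: both $F(\omega,y)$ and $F(\omega,Ty)$ have $|\partial_{y_e}|\leq1$, so $|\partial_{y_e}D|\leq2$. For Efron--Stein I would avoid derivatives. Replace $y_e$ by an independent conditional copy $y_e'$; the pointwise change of $D$ is at most $2|y_e-y_e'|$. Efron--Stein then gives
\[
\Var(D\mid\omega)\leq\frac12\sum_{e\in S}\E\bigl[4(y_e-y_e')^2\mid\omega\bigr]=\sum_{e\in S}4v=4v|S|.
\]
Here $\E[(y_e-y_e')^2\mid\omega]=2v$ because the conditional variance equals $v$. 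Since the conditional mean is $0$, $\E[D^2\mid\omega]=\Var(D\mid\omega)\leq4v|S|$, and averaging over $\omega$ gives $\E D^2\leq4v|S|$.

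\textbf{Gaussian tail.} In the Gaussian case, $D(\omega,\cdot)$ is Lipschitz with $\|\nabla_yD\|_2^2\leq4|S|$. The Gaussian concentration inequality with covariance $vI$ gives
\[
\Pp\bigl(|D-\E[D\mid\omega]|\geq u\mid\omega\bigr)\leq2\exp\bigl(-u^2/(8v|S|)\bigr).
\]
Combining this with the zero conditional mean and averaging over $\omega$ yields \cref{eq:seam-tail}.

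\textbf{Remaining claims.} All the constants above are independent of $\beta$, and the comparisons apply verbatim to minima, which gives the uniformity in $\beta$ and the ground-state statement. For the exponent, $L^{2\theta_{\rm rms}+o(1)}=\E D^2\leq4vL^{d-1+o(1)}$, so $\theta_{\rm rms}\leq(d-1)/2$.

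\textbf{Main obstacle.} The delicate step is the zero-mean claim, specifically checking integrability of $F$ so that the symmetry argument is legitimate. If $F$ itself is not conditionally integrable, I would apply the symmetry to the integrable difference $F(\omega,y)-F(\omega,0)$ instead, which is controlled by the pointwise bound.
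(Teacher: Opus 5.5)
Your proposal is correct and follows essentially the same route as the paper: flip invariance (equivalently the antisymmetry $D(\omega,Ty)=-D(\omega,y)$) gives $\E[D\mid\omega]=0$, Efron--Stein with independent conditional copies and the one-bond comparison $|D(\omega,y)-D(\omega,y^{(e)})|\leq2|y_e-y_e'|$ gives $\E D^2\leq4v|S|$, and Gaussian concentration with $\|\nabla_yD\|_2^2\leq4|S|$ gives the tail after averaging over $\omega$. Your explicit integrability check, via $|D|\leq2\sum_{e\in S}|y_e|$ and finite conditional variances, is a detail the paper leaves implicit.
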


\begin{proof}
Sign-flip invariance and $D(\omega,Ty)=-D(\omega,y)$ give
\begin{align*}
\E[D\mid\omega]
=\E[D(\omega,Ty)\mid\omega]
=-\E[D(\omega,y)\mid\omega]
=0.
\end{align*}
If $y^{(e)}$ replaces $y_e$ by an independent conditional copy $y_e'$, the one-bond comparison gives
\begin{align*}
&|D(\omega,y)-D(\omega,y^{(e)})|
\leq2|y_e-y_e'|,\\
\E D^2
&=\E\Var(D\mid\omega)
\leq\frac12\sum_{e\in S}
\E\left[(D(y)-D(y^{(e)}))^2\right]\\
&\leq2\sum_{e\in S}\E(y_e-y_e')^2
=4v|S|.
\end{align*}
The deterministic comparison and Gaussian Lipschitz bound are
\begin{align*}
|D(\omega,y)|
=|F(\omega,Ty)-F(\omega,y)|
&\leq\sup_\sigma|H(\omega,Ty;\sigma)-H(\omega,y;\sigma)|
\leq2\sum_{e\in S}|y_e|,\\
\|\nabla_yD\|_2^2
&\leq\sum_{e\in S}2^2=4|S|,\\
\Pp(|D|\geq u\mid\omega)
&\leq2\exp\left(-\frac{u^2}{2v\cdot4|S|}\right).
\end{align*}
Averaging over $\omega$ proves \cref{eq:seam-tail}; the same comparison applies to ground-state energies. Finally,
\begin{align*}
(\E D^2)^{1/2}
&\leq2\sqrt{v|S|}
=L^{(d-1)/2+o(1)},
\end{align*}
so $\theta_{\rm rms}\leq(d-1)/2$. Related bounds appear in \cite{ArguinNewmanSteinWehr2014}.
\end{proof}

\begin{corollary}[Stiffness nonuniversality]\label{cor:stiffness}
For a one-dimensional ring at zero temperature,
\begin{equation}\label{eq:ring-minimum}
|D_N|=2\min_{1\leq i\leq N}|J_i|.
\end{equation}
Thus i.i.d. Rademacher couplings have root-mean-square stiffness exponent $0$, whereas symmetric unit-rate Laplace couplings have exponent $-1$. More generally, if $\Pp(|J|\leq x)\sim cx^a$ as $x\downarrow0$, the scale is $N^{-1/a}$ in probability; the same root-mean-square exponent holds under the corresponding uniform-integrability condition. Symmetry and finite moments do not determine a universal stiffness exponent.
\end{corollary}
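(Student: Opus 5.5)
The plan is to compute the zero-temperature ring twist exactly, then read off the scaling of $\min_i|J_i|$ for each coupling law. I will first establish \cref{eq:ring-minimum}.

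At $\beta=\infty$ the periodic ring has Hamiltonian $-\sum_i J_i\sigma_i\sigma_{i+1}$. The antiperiodic ring flips the sign of one bond, so its bond product has the opposite sign. Use the gauge change $\tau_i\tau_{i+1}=\operatorname{sign}(J_i)\sigma_i\sigma_{i+1}$. A ring configuration can satisfy every bond exactly when the product of bond signs is $+1$; otherwise exactly an odd number of bonds must be frustrated, and the cheapest choice frustrates the single weakest bond. Hence one boundary condition has ground energy $-\sum_i|J_i|$ and the other has $-\sum_i|J_i|+2\min_i|J_i|$. Which one is which depends on the sign of $\prod_i J_i$, but in either case $|D_N|=2\min_i|J_i|$. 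The case $\min_i|J_i|=0$ is covered trivially. As a consistency check, one can take the $\beta\to\infty$ limit of \cref{eq:ring-twist}, but the direct combinatorial argument is cleaner.

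For Rademacher couplings, $|J_i|=1$ identically, so $|D_N|=2$ and the root-mean-square exponent is $0$. For unit-rate Laplace couplings, the $|J_i|$ are i.i.d.\ $\mathrm{Exp}(1)$, so $\min_i|J_i|\sim\mathrm{Exp}(N)$. Then $\E D_N^2=4\cdot 2/N^2$, and $(\E D_N^2)^{1/2}=2\sqrt2\,N^{-1}$, giving exponent $-1$.

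For the general law with $\Pp(|J|\le x)\sim cx^a$, compute
\begin{equation*}
\Pp\bigl(N^{1/a}\min_i|J_i|>t\bigr)=\bigl(1-\Pp(|J|\le tN^{-1/a})\bigr)^N\to e^{-ct^a}.
\end{equation*}
This is a Weibull limit, which gives tightness and nondegeneracy of $N^{1/a}\min_i|J_i|$, hence the scale $N^{-1/a}$ in probability. The root-mean-square statement follows once $N^{2/a}\min_i|J_i|^2$ is uniformly integrable, which is exactly the stated hypothesis. In that case $(\E D_N^2)^{1/2}\asymp N^{-1/a}$.

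The Rademacher and Laplace laws are both symmetric with all moments finite, yet their exponents differ. This shows that symmetry and moment conditions alone do not fix the stiffness exponent.

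The main obstacle is the first step: making the gauge/frustration argument precise for both boundary conditions and handling zero couplings. The remaining steps are routine extreme-value computations.
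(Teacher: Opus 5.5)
Your proposal is correct and follows the paper's proof essentially step for step. Like the paper, you use the parity/frustration argument, which gives ground energies $-\sum_i|J_i|$ and $-\sum_i|J_i|+2\min_i|J_i|$ in the two sectors; the exact exponential law of the minimum for Laplace couplings, yielding $(\E D_N^2)^{1/2}=2\sqrt2/N$; and the Weibull limit $e^{-ct^a}$ for the general small-ball law, with uniform integrability invoked for the root-mean-square statement.
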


\begin{proof}
A seam flip changes the parity of the number of unsatisfied bonds. Writing $M_N=\min_{1\leq i\leq N}|J_i|$, the two parity sectors differ by
\begin{align*}
|D_N|
=\left|\left(-\sum_{i=1}^N|J_i|+2M_N\right)
-\left(-\sum_{i=1}^N|J_i|\right)\right|
=2M_N.
\end{align*}
For Rademacher couplings, $M_N=1$. For the symmetric unit-rate Laplace law,
\begin{align*}
\Pp(M_N>t)
=\prod_{i=1}^N\Pp(|J_i|>t)
=e^{-Nt},\\
\E M_N^2
=\frac{2}{N^2},\qquad
(\E D_N^2)^{1/2}
=\frac{2\sqrt2}{N}.
\end{align*}
If $\Pp(|J|\leq x)\sim cx^a$, then
\begin{align*}
\Pp(N^{1/a}M_N>t)
=\left(1-\Pp(|J|\leq tN^{-1/a})\right)^N
\longrightarrow e^{-ct^a},
\quad t>0.
\end{align*}
The root-mean-square scale is the same under uniform integrability \cite{Resnick1987}.
\end{proof}

\section{Scope and the Gaussian low-temperature problem}

The unconditional fixed-depth theorem survives at $\beta=\infty$ because its proof uses only cutting and Lipschitz comparison. The derivative representation, DLR response, and BRM arguments are finite-temperature statements. Passing \cref{eq:gaussian-delta} or \cref{eq:integrated-response} to zero temperature would require additional estimates uniform in $\beta$ and is not done here.

Nor does one-face BRM assemble the simultaneous correction from all faces of a cube. Such a theorem would require oriented estimates uniform under the boundary conditions created by a face-by-face telescoping, together with control of edges and corners. The all-face conclusion in \cref{thm:Dobrushin-cube} is available because Dobrushin comparison supplies that stronger uniformity. The one-dimensional example in \cref{thm:one-dimensional} also shows why no arbitrary-shape conclusion should be inferred from a rectangular theorem.

\subsection{Comparison of methods}

\begin{center}
\small
\begin{tabularx}{\textwidth}{@{}>{\raggedright\arraybackslash}p{.22\textwidth}>{\raggedright\arraybackslash}p{.36\textwidth}X@{}}
\toprule
Method & Established here & Missing input for the Gaussian low-temperature model \\
\midrule
Cutting and almost-additivity & Rectangular tangential pressure at fixed depth & Control as the normal depth diverges \\
Gaussian or copy interpolation & Finite-volume response identities and sign bounds & A full-sequence limit of the boundary response \\
Generic concentration & Upper bounds for centered fluctuations & Convergence of normalized expectations; surface-order control in $d=2$ \\
Response mixing & Selector-independent weighted one-spin response & Verification, or a weaker substitute, in the ordinary model \\
Dobrushin comparison & Regular-rectangle limit and surface-order concentration & Bounded interactions and $\alpha<1$ \\
Metastate compactness & Subsequential DLR states \cite{AizenmanWehr1990,NewmanStein1997} & Independence of the induced boundary response \\
\bottomrule
\end{tabularx}
\end{center}

\subsection{What stochastic localization would have to supply}

For a finite-volume Gibbs measure $\mu_0$ on $\{-1,1\}^{\Lambda}$, stochastic localization introduces a measure-valued martingale of the form \cite{Eldan2013}
\begin{equation}\label{eq:stochastic-localization}
\d\mu_s(\sigma)=\mu_s(\sigma)
\langle\sigma-a_s,\d B_s\rangle,
\qquad a_s=\E_{\mu_s}\sigma,
\qquad A_s=\Cov_{\mu_s}(\sigma),
\end{equation}
so that $\d a_s=A_s\d B_s$. To prove the normal localization needed here, one would require a bound, uniform in volume, interpolation amplitude, and disorder, on the projection of $A_s$ from a remote face to the boundary observable. The estimate must also be stable under the boundary kernels produced by DLR disintegration. The finite-volume martingale representation alone gives neither property, and the present work does not assume a log-concavity or spectral-independence hypothesis from which they would follow.

For the ordinary fully occupied Edwards--Anderson model, all nearest-neighbor couplings are i.i.d. Gaussians and are nonzero almost surely. At low temperature the bounded Dobrushin theorem is unavailable, and
$p_\beta=\E\tanh(\beta|J|)\to1$, so the subcritical disagreement criterion fails for every $d\geq2$ at sufficiently large $\beta$. Compactness and metastate constructions give subsequential states \cite{AizenmanWehr1990,NewmanStein1997}, while \cref{prop:finite-calculus} gives the finite-volume Gaussian identity; neither supplies the missing normal localization. This issue is separate from the mean-field picture \cite{SherringtonKirkpatrick1975,Parisi1979,AizenmanLebowitzRuelle1987} and from positive surface tension in unfrustrated contour regimes \cite{DKS1992}.

\begin{problem}[Ordinary fully occupied low-temperature Gaussian EA model]\label{prob:Gaussian}
Determine, separately by dimension and at fixed finite inverse temperature, whether BRM holds for the ordinary fully occupied Gaussian Edwards--Anderson half-space specifications. If it fails, identify a weaker condition that still makes the weighted boundary response independent of the DLR selector and controls the normal-depth limit. Determine under such a condition whether a full-sequence regular-cube free-to-fixed surface limit follows. At zero temperature, a response formula and the interchange of the volume and temperature limits require separate arguments. These questions are not resolved by the present paper; no broader claim about their status in the literature is made here.
\end{problem}

\section*{Declarations}

\subsection*{Funding}
The authors declare that no funds, grants, or other support were received during the preparation of this manuscript.

\subsection*{Competing interests}
The authors have no relevant financial or non-financial interests to disclose.

\subsection*{Authors' contributions}
All authors contributed to the conceptualization, mathematical analysis, proof verification, and writing of the manuscript. All authors contributed equally to this work. All authors read and approved the final manuscript.

\subsection*{Data availability}
Data sharing is not applicable because no datasets were generated or analyzed in this theoretical study.

\subsection*{Ethical Approval/Human and Animal Rights}
This article does not contain any studies with human participants or animals performed by any of the authors.

\Addresses
\end{document}